\documentclass[10pt,journal,letterpaper]{IEEEtran}

\usepackage{setspace} 
\singlespacing

\usepackage{enumerate}
\usepackage{xcolor}
\usepackage{cite}
\usepackage[cmex10]{amsmath}
\usepackage{amsthm}
\usepackage{graphicx}
\graphicspath{{./figs/}}
\usepackage[caption=false,font=footnotesize]{subfig}
\usepackage{bm}
\usepackage{pifont} 
\usepackage{mathrsfs}
\usepackage{amsfonts}
\usepackage{mathabx}
\usepackage{physics}
\usepackage{enumitem}
\usepackage{balance}

\newcommand{\nop}[1]{}

\newtheorem{theorem}{Theorem}
\newtheorem{corollary}{Corollary}
\newtheorem{lemma}{Lemma}

\newtheorem{remark}{Remark}

\renewcommand{\Pr}{\mathbb{P}}

\usepackage{tikz}
\usetikzlibrary{matrix,arrows,decorations.pathmorphing}

\newcommand*\circled[1]{\tikz[baseline=(char.base)]{
            \node[shape=circle,draw,inner sep=2pt] (char) {#1};}}

\begin{document}

\title{Dependence Control at Large}
\author{
Fengyou Sun
\thanks{The author is with the Department of Information Security and Communication Technology, NTNU -- Norwegian University of Science and Technology, Trondheim, Norway (Email: sunfengyou@gmail.com).}
}

\maketitle

\begin{abstract}
We study the dependence control theory, with a focus on the tail property and dependence transformability of wireless channel capacity, respectively, from the perspective of an information theoretic model of the wireless channel and from the perspective of a functional of controllable and uncontrollable random parameter processes.
We find that the light-tailed behavior is an intrinsic property of the wireless channel capacity, which is due to the passive nature of the wireless propagation environment and the power limitation in the practical systems.
We observe that the manipulation of the marginal distributions has a bias in favor of positive dependence and against negative dependence, e.g., when a parameter process bears negative dependence, the increases of the means of marginals can not leads effectively to a better system performance.
On the other hand, the dependence bias indicates that the dependence is a tradable resource, i.e., when the dependence resource is utilized another resource can be saved. For example, the negative dependence can be traded for transmission power in terms of the performance measures.

\end{abstract}

\begin{IEEEkeywords}
Wireless channel capacity, dependence control.
\end{IEEEkeywords}

\IEEEpeerreviewmaketitle

\section{Introduction}

In mathematics, the stochastic dependence is a property of the dependent elements, specified by the probability measure, and independence is a special case with a product measure of probability.
The dependence scenario, which is probably uncertain or is intractable to get an explicit mathematical expression, raises additional analytical issues that differ from the independence scenario.
In real world, the dependence corresponds to the interrelationship of the system states through time and space, and different forms of dependence result in different system performances \cite{sun2017statistical}. 
In other words, the stochastic dependence is not only a mathematical property but also a physical resource.
Considering the diverse characteristics and distinguishing effects of the stochastic dependence, 
it is intriguing to study how to control the dependence in a system in order to obtain an improved performance.
Particularly, a theory of dependence control is built in \cite{sun2018hidden}, and it treats the system as a functional of controllable and uncontrollable random parameter processes and it proves that a manipulation of the dependence in a controllable random parameter process has a consequence on the overall system performance.

In this paper, we further study the dependence control theory, with respect to both the marginal distributions and the dependence structures \cite{ruschendorf2013mathematical}, in the context of the wireless channel capacity, and we obtain two sets of results, namely tail domination and dependence bias, which are respectively about the light-tail property of the wireless channel capacity and the dependence influence on the dependence control mechanism.

\subsection{Tail Domination}

The wireless signals are electromagnetic radiations and the signal propagation environment is a passive medium with dissipation that is the loss of field energy due to absorption, and dispersion that is the variation of the refractive index in the medium \cite{klauder2005signal}\cite{parsons1992mobile}\cite{jakes1994microwave}.
The dissipation causes the energy loss of the signals on the path from the transmitter to the receiver \cite{parsons1992mobile}.
This effect is termed as the large-scale fading \cite{rappaport2001wireless}.
The dispersion causes the reflection, diffraction, and scattering of the transmitted signals \cite{parsons1992mobile}, which result in the multipath interference and the Doppler shift of the received signals.
This effect is termed as the small-scale fading \cite{rappaport2001wireless}.
As a characterization of the propagation channel, the channel gain is defined by the ratio of of the receiver-to-transmitter power, of which the reciprocal is defined as the channel loss.
As a result of the energy conservation law, the channel gain is less than one or the channel loss is greater than one.

Consider the multiple-input-multiple-output channel model that is expressed as \cite{tse2005fundamentals}
\begin{equation}
\mathbf{y} (t) = \bm{H}(t) \mathbf{x}(t) + \mathbf{w}(t),\ t\in\mathbb{N},
\end{equation}
where $\mathbf{x}(t) \in \mathbb{C}^{N_T}$, $\mathbf{y}(t) \in \mathbb{C}^{N_R}$, $\mathbf{w}(t)\sim\mathcal{C}\mathcal{N}\qty(0,N_0 \bm{I}_{N_R})$, and $\bm{H}(t) \in \mathbb{C}^{N_R \times N_T}$ is the channel gain matrix.
For simplification, we omit the time index.
The instantaneous capacity $c \in \mathbb{R}$ is defined by the mutual information, which is a function $f: \mathbb{R}\times \mathbb{C}^{N_R \times N_R} \rightarrow \mathbb{R}$ of the product of the transmission power $p$ and the channel matrix $\bm{H}\bm{H}^\ast$, i.e., 
\begin{equation}
f: p\bm{H}\bm{H}^\ast \mapsto c,
\end{equation}
where we treat the instantaneous power as a random variable.
Specifically, if the tail distribution function satisfies \cite{asmussen2010ruin} $\overline{F}_X(x) = O\qty(e^{-\theta x}),\ \exists \theta>0$, equivalently, $\mathbb{E}\qty[e^{\theta X}] <\infty, \exists \theta>0$, then the distribution is light-tailed; otherwise, it is heavy-tailed. 
The heavy-tailed distribution indicates that extreme values occur with a relatively high probability \cite{falk2010laws}.
Particularly, if the tail is super-heavy, it has no finite moments \cite{halliwell2013classifying}, e.g., the distributions with slowly varying tails.
The class of slowly varying functions includes constants, logarithms, iterated logarithms, powers of logarithms \cite{davis2009probabilistic}.

We obtain that the sufficient condition for the light-tail wireless channel capacity is the existence of the mean value of the power law of the product of the random power and the maximum eigenvalue of the channel matrix, i.e., 
\begin{equation}
\overline{F}_{c}(x) = O\qty(e^{-\theta x}),\ \exists \theta>0 
\impliedby
\mathbb{E} \qty[ \qty(p \lambda_{\max})^\theta ] < \infty,\ \exists \theta>0,
\end{equation}
where the right hand side is equivalent to $\mathbb{E} \qty[ \qty(p \Tr\qty[\bm{H}{\bm{H}}^\ast])^\theta ] < \infty,\ \exists \theta>0$, in terms of the tail behavior, they are equivalently expressed as $\overline{F}_{p\lambda_{\max}}(x) = O\qty(x^{-\theta x}),\ \exists \theta>0$, and $\overline{F}_{p\Tr\qty[\bm{H}{\bm{H}}^\ast]}(x) = O\qty(x^{-\theta x}),\ \exists \theta>0$.
Specifically, $p=1$ corresponds to the deterministic power scenario.
In addition, for the broadband channel scenario, the channel matrix is the diagonal matrix of each sub-channel matrices, i.e., $\bm{\mathcal{H}} = \text{diag}\qty{ \bm{H}_1, \ldots, \bm{H}_N }$.

We provide the following observations, which largely explain the light-tail property of the wireless channel capacity.

\begin{itemize}[wide, labelwidth=!, labelindent=0pt]
\item
It is interesting to note that the typical large-scale fading distribution is heavy-tailed, e.g., the Lognormal distribution, while the typical small-scale fading distribution is light-tailed, e.g., the Rayleigh, Rice, and Nakagami distributions. 
Specifically, if a random variable is lognormal, then its reciprocal is also lognormal. 
The tail property indicates that the large-scale fading effects, like path loss and shadowing, are more likely to cause large values of both channel loss and gain, which may be due to the large shadow dynamics in the propagation environment; while the small-scale fading effects, like the multipath interference and Doopler shift, are less likely to cause large values of channel gain or the random values are more likely to be concentrated around the mean.
Since both light-tailed and heavy-tailed distributions with finite mean are used to model the channel gain, the parametric distributions that can model both heavy-tailed and light-tailed distributions are of interest, e.g., the Weibull distribution \cite{sagias2005gaussian}\cite{pavlovic2013statistics}.
These theoretical insights on the stochastic models match the empirical results \cite{jakes1994microwave}.
The restriction that the passive channel gain is less than one exclude the existence of fading models with super-heavy tails.
In addition, since the random values of the stochastic models, whether the light-tailed distribution or the heavy-tailed distribution, are unbounded, the stochastic models of the wireless channels are strictly not passive systems \cite{klauder2005signal}\cite{mammela2006normalization}, because of the violation of the energy conservation law.

\item
Though the wireless system can be energy unlimited \cite{mammela2010relationship}, the transmission power is unlikely to have an infinite mean, thus, the tail of the power distribution is lighter than the super-heavy distribution.
When there are active relays in the wireless channels, the whole channel gain is the product of each individual channel gain. 
However, the tail of the product distribution can be asymptotically bounded above and below by the tail of a dominating random variable of the product for both independence and dependence scenarios \cite{yang2011tail}\cite{chen2018extensions}\cite{jiang2011product}\cite{yang2013subexponentiality}.
In addition, the gain saturation also exclude the possibility of unlimited gain in active medium \cite{nistad2008causality}. 
Thus, the whole channel gain is more likely to have a tail behavior lighter than the super-heavy tail. 
On the other hand, when the power in the capacity formula is set to be deterministic, e.g., the mean value of power, normalization is usually considered for the channel matrix.
Specifically, if the channel description is based on the average transmitter power ${P}_T$ \cite{paulraj2003introduction}, then, the channel matrix $\bm{H}$ is non-normalized; 
and if the description uses the average receiver power ${P}_R$, 
then the channel matrix $\overline{\bm{H}}$ is normalized \cite{foschini1998limits}\cite{telatar1999capacity}.
Mathematically, it is expressed as \cite{foschini1998limits} ${P}_T^{1/2} \cdot \bm{H} = {P}_R^{1/2} \cdot \overline{\bm{H}}$.
For example, the normalized channel gain of the Rayleigh fading channel is \cite{foschini1998limits}\cite{tse2005fundamentals} $\overline{H}_{ij} \sim \mathcal{C}\mathcal{N}(0,1)$ and $\mathbb{E}\qty[\overline{H}_{ij} \overline{H}_{ij}^\ast] = 1$.
The normalization indicates that the mean values of the matrix identities exist, which excludes the existence of the fading models with super-heavy tails. 
\end{itemize}

In all, for the typical stochastic channel models and the power supply systems in practice, the distribution of the capacity, which is a logarithm transform of the product of the fading effects and random power, is light-tailed, because the logarithm function transforms a less than super-heavy distribution to a light-tailed distribution.

\subsection{Dependence Bias}

We treat the wireless channel capacity as a functional of random parameters \cite{sun2017statistical}\cite{sun2018hidden}, which are either uncontrollable or controllable, the uncontrollable parameters represent the property of the environment that can not be interfered, e.g., fading, and the controllable parameters represent the configurable property of the wireless system, e.g., power.
We specify that the cardinality of the parameter set $\left( X_t^1, X_t^2, \ldots, X_t^n \right)$ is time-invariant and the function $f_t : \mathbb{R}^{n} \to \mathbb{R}$ is time-variant,
i.e.,
\begin{equation}
X_t = f_{t} \left( X_t^1, X_t^2, \ldots, X_t^n \right).
\end{equation}
We study how to transform the dependence in the functional process $\qty{ X_t }$, by manipulating the dependence in parameter processes $\qty{ X_t^i }$, $1\le i\le n$.
There are two ways to implement this dependence transform, i.e., one by transforming the dependence structure from the positive dependence to the negative dependence, and the other by transforming the marginal distributions.

This functional specification is extensible to the general stochastic process on the Polish space, i.e., the stochastic process as a function of a set of random parameters, each of which is itself a stochastic process, in other words, we treat the stochastic process as a functional of a multivariate stochastic process and the functional maps the multivariate stochastic process to a univariate stochastic process.
For example, this functional perspective is useful for studying the dependence impact of an individual arrival process on the aggregation of a set of multiplexed arrival processes.

We highlight the following results, which provide guidelines for dependence control.
\begin{itemize}[wide, labelwidth=!, labelindent=0pt]
\item
The dependence is a resource that can be traded off, i.e., when the dependence is utilized, another form of resource can be saved, e.g., more amounts of negative dependence can exchange for less amounts of transmission power.
The chain relation, $\bm{X} \le_{sm} \widetilde{\bm{X}} \implies \sum_{j=1}^{t} {X}_j \le_{cx} \sum_{j=1}^{t} \widetilde{X}_j \implies \mathbb{E} \sum_{j=1}^{t} {X}_j = \mathbb{E} \sum_{j=1}^{t} \widetilde{X}_j$, means the supermodular order of the dependence structures implies the convex order of the variability of the partial sum with equal mean.
To take into account both the mean and the variability, we use the increasing convex order for further elaboration.
Specifically, the mean and the variability are exchangeable for each other, i.e., if the variability is relatively small, then a relatively greater mean can be tolerated while satisfying the increasing convex order, vice versa.
The mathematical expressions are as follows, if $X \le_{icx} Y$ and $\mathbb{E} X \le \mathbb{E} Z^\prime \le \mathbb{E} Y$, then it is possible that $Z^\prime \le_{icx} Y$, because we have $X \le_{icx} Y \iff X \le_{st} Z \le_{cx} Y$ \cite{shaked2007stochastic}; and if $X \le_{icx} Y$, then $ X \le_{cx} Z^\prime \le_{st} Y$ such that $Z^\prime \le_{icx} Y$, because we have $X \le_{icx} Y \iff X \le_{cx} Z \le_{st} Y$ \cite{shaked2007stochastic}.
Complementary results hold in the sense of the increasing concave order \cite{shaked2007stochastic}.

\item
When the backlog and the delay are used as the performance measures \cite{sun2018hidden}, the arrival process and the service process are consistent in the manipulation of the dependence strength and are different in the manipulation of the marginals.
Specifically, for the manipulation of the dependence,
the objective is the convex ordering $\sum_{j=1}^{t} {X}_j \le_{cx} \sum_{j=1}^{t} \widetilde{X}_j$, where $X_j = a_j - s_j$ represents the instantaneous arrival amount minus the instantaneous service amount,
while for the manipulation of the marginals, 
the objective for the arrival process is the increasing convex ordering $\sum_{j=1}^{t} {X}_j \le_{icx} \sum_{j=1}^{t} \widetilde{X}_j$ and the objective for the service process is the increasing concave ordering $-\sum_{j=1}^{t} {X}_j \ge_{icv} -\sum_{j=1}^{t} \widetilde{X}_j$.
This is coherent with the intuition that a smaller and less variable arrival process or a greater and less variable service process leads to a better system performance in terms of the backlog and delay \cite{sun2018hidden}.

\item
The manipulation of the marginal distributions has a dependence bias, while the manipulation of the dependence structure fixing the marginals has no such dependence bias.
Specifically, the dependence bias means that, if a parameter process bears negative dependence, then the manipulation of each individual marginals with respect to the (increasing) convex order can not lead effectively to the (increasing) convex order of the partial sums, i.e., the (increasing) convex order of the marginals implies the (increasing) convex order of the partial sum holds for positive dependence and not for negative dependence \cite{muller2002comparison}. 
The dependence bias of the marginals provides an opportunity for dependence control. Specifically, the dependence bias means that the increasing convex order of the partial sum is insensitive to the marginal manipulation of the parameter process with negative dependence, e.g., the increasing convex order still holds for a partial sum with smaller mean values of the marginals. In other words, a better system performance, in terms of backlog and delay, can be achieved in the scenario of negative dependence in the processes, even with a smaller mean value of the service process or a greater mean value of the arrival process. 
\end{itemize}

\subsection{Related Work}
There are some related work in the literature and the comparisons with this paper are as follows.
{\bf (i)}
The light-tailed property of wireless channel capacity for the single-input-single-output channel is investigated in \cite{sun2017statistical}. In this paper, we reason why the light-tailed behavior is an intrinsic property of the wireless channel capacity, and we extend the results to multiple-input-multiple-output channel, with an extensive study on the equivalent conditions and sufficient conditions regarding the statistical identities of the wireless channel.
The statistical property of wireless channel capacity is studied in \cite{rafiq2011statistical}, where the focus is on the distribution functions and first and second order statistics of the capacity, the detailed fading distributions are used, e.g., Rayleigh, Rice, and Nakagami. In this paper, instead of the exact fading distributions, we use the light-tailed and heavy-tailed distribution classes and show that the distribution of the capacity based on these typical fading distributions is intrinsically light-tailed. Thus, the results in this paper is more general than \cite{rafiq2011statistical} and indicate more possibilities of wireless channel modeling, i.e., more distributions as alternative to the typical fading models. 
{\bf (ii)}
The tail asymptotic is investigated in \cite{sarantsev2011tail} for the product and sum of random variables in terms of the asymptotic equality $f(x)\sim g(x)$. In this paper, we extend the analysis to specific heavy-tailed and light-tailed distribution classes, e.g., the long-tail distribution, the regular varying distribution, and the light-tailed distribution, specifically, we find that the slowly varying distribution can dominate the tail behavior for the sum and product distribution, 
moreover, we extend the analysis beyond the asymptotic equality to more asymptotic notations, e.g., $f(x) = O\qty(g(x))$, $f(x)=\Theta\qty(g(x))$, $f(x) = \omega\qty(g(x))$, and $f(x) = o\qty(g(x))$.
Since the capacity is a logarithm transform of the product of the power and the fading random variable, the less strict asymptotic bound provides more flexibility than the asymptotic equality, i.e., it has less restriction and can capture more distribution scenarios, most importantly, it is sufficiently enough to investigate the light-tail behavior that is defined by the asymptotic bound $f(x) = O\qty(g(x))$.
Another related work is \cite{tang2008light}, which provides conditions for the product of a light-tailed random variable and a heavy-tailed random variable to be heavy-tailed.
In contrast to the result with asymptotic precision up to some distribution classes in \cite{tang2008light}, we show results of the exact tail domination with respect to a certain distribution function in this paper.
{\bf (iii)}
The dependence control theory is studied in \cite{sun2018hidden}, where the stochastic dependence is not only treated as a mathematical property but also as a physical resource \cite{sun2017statistical}, the benefits of utilizing the stochastic dependence resource are elaborated, and the dependence transformability is proved with respect to the supermodular order. In this paper, we extend the results to the (increasing) directionally convex order and the usual stochastic order, particularly, the (increasing) directionally convex order takes into account the impacts of both the marginals and dependence structures in comparison, while the supermodular order requires identical marginals of the compared stochastic processes.
Such extension provides additional insights into the dependence control theory as regards the dependence manipulation and marginal manipulation.

The rest of this paper is structured as follows.
The tail property of the MIMO (multiple-input-multiple-output) channel is studied in Sec. \ref{tail-property}.
The dependence transform of stochastic processes is studied in Sec. \ref{transformability}.
Finally, this paper is concluded and future work are discussed in Sec. \ref{conclusion}.

\section{Tail Property}
\label{tail-property}

Let $(\Omega, \mathscr{F}, \Pr)$ be a probability space and $\bm{X}: \Omega \to \mathbb{C}^{m\times n}$ be measurable with respect to $\mathscr{F}$ and the Borel $\sigma$-algebra on $\mathbb{C}^{m \times n}$.
Denote $\bm{ \mathfrak{X} } = \qty{ \bm{X} \in \mathbb{C}^{n\times n}: \bm{X} = \bm{X}^\ast }$, where $\ast$ represents the conjugate transpose.
Denote the cone \cite{ahlswede2002strong} $\bm{ \mathfrak{X}} _{\ge 0}=\qty{ \bm{X} \in \bm{\mathfrak{X}}: \bm{X} \ge 0 }$, which introduces a partial order in $\bm{\mathfrak{X}}$, i.e., $\bm{X} \ge 0$ is equivalent to that all the eigenvalues of $\bm{X}$ are nonnegative. Similarly, $\bm{ \mathfrak{X}} _{> 0}=\qty{ \bm{X} \in \bm{\mathfrak{X}}: \bm{X} > 0 }$.

\subsection{Deterministic Power Fluctuation}

Consider the flat fading MIMO channel $\bm{H} \in \mathbb{C}^{ N_R\times N_T }$, $\bm{H}{\bm{H}}^\ast \in \bm{\mathfrak{X}}_{\ge 0}$. 
The capacity, in bits per second, under total average transmit power constraint, is expressed as \cite{paulraj2003introduction}
\begin{equation}
c = W \max_{\Tr[\bm{R}_{\bm{s}\bm{s}}] = N_T } \log_2 \det \qty( \bm{I}_{N_R} + \frac{\rho}{N_T} \bm{{H}} \bm{R}_{\bm{{s}{s}}} \bm{{H}}^\ast ),
\end{equation}
where $W$ is the bandwidth, $\rho = \frac{P}{N_0 W}$, $P$ is the total average transmit power, $N_0$ is the noise power spectral density, $\bm{R}_{\bm{{s}{s}}} = \mathbb{E}[\bm{{s}}\bm{{s}}^\ast]$ is the covariance matrix for the transmitted signal $\bm{{s}} \in \mathbb{C}^{N_T \times 1}$.

The frequency-selective fading channel formulation requires a block diagonal extension of the flat fading channel model.
The capacity, in bits per second, under total average transmit power constraint, is expressed as \cite{paulraj2003introduction}
\begin{equation}
c = \frac{W}{N} \max_{\Tr[\bm{R}_{\bm{\mathcal{S}\mathcal{S}}}] = N_T N} \log_2 \det \qty( \bm{I}_{N_R N} + \frac{\rho}{N_T} \bm{\mathcal{H}} \bm{R}_{\bm{\mathcal{S}\mathcal{S}}} \bm{\mathcal{H}}^\ast ),
\end{equation}
where $W$ is the bandwidth, $\rho = \frac{P}{N_0 W}$, $P$ is the total average transmit power, $N_0$ is the noise power spectral density, $N$ is the number of sub-channels, $\bm{\mathcal{H}} \in \mathbb{C}^{N_T N \times N_R N}$ is the block diagonal matrix with $\bm{H}_i$ as the block diagonal elements, and $\bm{R}_{\bm{\mathcal{S}\mathcal{S}}} = \mathbb{E}[\bm{\mathcal{S}}\bm{\mathcal{S}}^\ast]$ is the covariance matrix for the transmitted signal $\bm{\mathcal{S}} = [\bm{s}_1^T, \ldots, \bm{s}_N^T]^T \in \mathbb{C}^{N_T N \times 1}$.

\begin{remark}
The identity matrix $\bm{I}$ in the capacity formula implies that the capacity is non-negative, i.e., $c: \Omega \rightarrow \mathbb{R}_{\ge 0}$.
\end{remark}

\begin{remark}
The typical stochastic models of the channel gain are the Rayleigh, Rice, and Nakagami distributions \cite{goldsmith2005wireless}. 
The shadowing model is the Lognormal distribution \cite{rappaport2001wireless}\cite{goldsmith2005wireless}, which is able to superimpose the path loss.
\end{remark}

\begin{remark}
If $ \log Y \sim N\qty(\mu, \sigma^2)$, then $ a + b \log Y \sim N\qty(a+ b \mu, b^{2} \sigma^2)$, where $a, b \in \mathbb{R}$, thus, if $Y$ is lognormal, then $a Y^{b}$ is also lognormal in general.
This result explains the product form of the combined effect of the multiple path interference, shadowing, and path loss \cite{patzold2011mobile}.
\end{remark}

\begin{remark}
Since the normal distribution with zero mean is symmetric, we have the equal lognormal distributions $Y^{-1} \overset{d}{=} Y$, because of $- \log Y \overset{d}{=} \log Y \sim N\qty(0, \sigma^2)$.
This result implies that the quotient $X/Y$ of an arbitrary random variable $X$ with a lognormal random variable $\log Y \sim N\qty(0, \sigma^2)$, where $X$ and $Y$ are independent, equals in distribution the product $XY$ of the two random variable, i.e., $X/Y \overset{d}{=} XY$.
This relation does not hold for the general normal distribution.
\end{remark}

\begin{remark}
For a almost surely positive random variable, $X$, the right tail behavior of $1/X$, $\mathbb{P}(1/X > x) = O\qty(e^{-\theta x}),\ \exists \theta>0$, and $\mathbb{P}(1/X > x) = O\qty(x^{-\theta }),\ \exists \theta>0$, corresponds to left tail behavior of $X$ \cite{gulisashvili2016tail}\cite{asmussen2016exponential}, $\mathbb{P}\qty(X < 1/x) = O\qty(e^{-\theta x}),\ \exists \theta>0$, and $\mathbb{P}\qty(X < 1/x) = O\qty(x^{-\theta}),\ \exists \theta>0$, i.e., $\limsup\limits_{y\rightarrow 0} \frac{\mathbb{P}\qty(X < y)}{ e^{-\theta/ {y}} } < \infty$, $\exists \theta>0$, and $\limsup\limits_{y\rightarrow 0} \frac{\mathbb{P}\qty(X < y)}{ y^{\theta} } < \infty$, $\exists \theta>0$.
Letting $Y=1/X$, we obtain the complementary results.
Considering the reciprocal relation between the channel loss $\phi = P_T/P_R$ and channel gain $\psi = P_R/P_T$, both the right tail and the left tail matter for the stochastic channel models.
\end{remark}

We present some equivalence results of the function of random variables.
The proof is shown in Appendix \ref{proof-lemma-tail-equivalence}.

\begin{lemma}\label{lemma-tail-equivalence}
Consider a flat MIMO channel $\bm{H} \in \mathbb{C}^{N_{R}\times N_T}$. 
The capacity is upper bounded by $c =a\log_{2}(1+b\lambda_{\max})$, where $a, b\in \mathbb{R}_{>0}$ and $\lambda_{\max}$ is the maximum eigenvalue of $\bm{H}\bm{H}^\ast$.
\begin{enumerate}[wide, labelwidth=!, labelindent=0pt]
\item
For the tail property, we have the equivalent results
\begin{multline}
\overline{F}_{c}(x)=O(e^{-\theta x}),\ \exists \theta>0 \\
\iff \overline{F}_{\lambda_{\max}}(x)=O(x^{-\theta}),\ \exists \theta>0 \\
\iff \overline{F}_{\Tr\qty[\bm{H}\bm{H}^\ast]}(x)=O(x^{-\theta}),\ \exists \theta>0.
\end{multline}

\item
For the power law function of the maximum eigenvalue, we have the equivalent expressions 
\begin{multline}
\mathbb{E} \qty[ \qty( 1 + {\Delta}  \lambda_{\max}  )^\theta ] < \infty,\ 0< \Delta <\infty, \ \exists \theta>0 \\
\iff
\mathbb{E} \qty[ \qty( 1 + \lambda_{\max}  )^\theta ] < \infty, \ \exists \theta>0 \\
\iff
\mathbb{E} \qty[ \qty( \lambda_{\max}  )^\theta ] < \infty, \ \exists \theta>0.
\end{multline}
Specifically, $\theta=1$ corresponds to $\mathbb{E}\qty[\lambda_{\max}]<\infty$. In addition, we have $\mathbb{E}\qty[ \qty(\lambda_{\max})^{\theta_1}] = \infty \implies \mathbb{E}\qty[ \qty( \lambda_{\max}  )^{\theta_2} ] = \infty$, $\forall 0< \theta_1 < \theta_2$. 

\item
In addition, we have another pair of equivalent expressions for the exponential function of the eigenvalue, i.e.,
\begin{multline}
\mathbb{E} \qty[ e^{\theta  \Tr\qty[ \bm{H}\bm{H}^\ast ]} ] < \infty, \ \exists \theta >0 \\
\iff
\mathbb{E} \qty[ e^{\theta  \lambda_{\max}} ] < \infty, \ \exists \theta >0. 
\end{multline}
\end{enumerate}
\end{lemma}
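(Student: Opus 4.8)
The plan is to reduce each of the three parts to an elementary deterministic comparison among $c$, $\lambda_{\max}$, and $\Tr\qty[\bm{H}\bm{H}^\ast]$, and then to transport that comparison through the definitions of the $O(\cdot)$ asymptotics and of the moment conditions. First I would record the two deterministic sandwiches that do all the work. Since $\bm{H}\bm{H}^\ast\ge 0$, all its eigenvalues $\lambda_1,\dots,\lambda_{N_R}$ are nonnegative, so
\begin{equation}
\lambda_{\max}\ \le\ \Tr\qty[\bm{H}\bm{H}^\ast]=\sum_{i}\lambda_i\ \le\ N_R\,\lambda_{\max}.
\end{equation}
The capacity bound itself I would get from $\bm{R}_{\bm{s}\bm{s}}\le\Tr\qty[\bm{R}_{\bm{s}\bm{s}}]\,\bm{I}=N_T\,\bm{I}$ (a positive semidefinite matrix is dominated in the Loewner order by its trace times the identity), together with monotonicity of $\det$ on the positive semidefinite cone and $\det\qty(\bm{I}+\bm{A})=\prod_i(1+\mu_i)\le(1+\mu_{\max})^{N_R}$ for $\bm{A}\ge 0$; this yields $c\le a\log_{2}(1+b\lambda_{\max})$ with, for instance, $a=WN_R$ and $b=\rho$ in the flat-fading case, and the evident block-diagonal analogue for the frequency-selective model. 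From here on I take $c=a\log_{2}(1+b\lambda_{\max})$ as in the statement.

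For the first part, strict monotonicity of $y\mapsto a\log_{2}(1+by)$ gives the exact identity $\overline{F}_{c}(x)=\overline{F}_{\lambda_{\max}}\!\big((2^{x/a}-1)/b\big)$. Under the substitution $y=(2^{x/a}-1)/b$ one has $y\asymp 2^{x/a}$ as $x\to\infty$, hence $y^{-\theta}\asymp\mathrm{const}\cdot e^{-(\theta\ln 2/a)\,x}$; conversely, writing $x=a\log_{2}(1+by)$ sends $e^{-\theta x}$ to $\mathrm{const}\cdot(1+by)^{-\theta a/\ln 2}\asymp\mathrm{const}\cdot y^{-\theta a/\ln 2}$. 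Therefore $\overline{F}_{c}(x)=O(e^{-\theta x})$ for some $\theta>0$ is equivalent to $\overline{F}_{\lambda_{\max}}(y)=O(y^{-\theta'})$ for some $\theta'>0$, the two exponents being related by the positive factor $\ln 2/a$. The equivalence with $\overline{F}_{\Tr\qty[\bm{H}\bm{H}^\ast]}$ is then immediate from the first sandwich, which gives $\overline{F}_{\lambda_{\max}}(x)\le\overline{F}_{\Tr\qty[\bm{H}\bm{H}^\ast]}(x)\le\overline{F}_{\lambda_{\max}}(x/N_R)$, combined with $O\big((x/N_R)^{-\theta}\big)=O(x^{-\theta})$.

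For the second part, I would compare $1+\Delta\lambda_{\max}$ with $1+\lambda_{\max}$ through $\min\{1,\Delta\}(1+\lambda_{\max})\le 1+\Delta\lambda_{\max}\le\max\{1,\Delta\}(1+\lambda_{\max})$, so that the two $\theta$-th moments are finite or infinite together; and I would compare $1+\lambda_{\max}$ with $\lambda_{\max}$ through $\lambda_{\max}^{\theta}\le(1+\lambda_{\max})^{\theta}\le 2^{\theta}\max\{1,\lambda_{\max}^{\theta}\}\le 2^{\theta}(1+\lambda_{\max}^{\theta})$. The case $\theta=1$ is then a restatement, and the implication $\mathbb{E}\qty[\lambda_{\max}^{\theta_1}]=\infty\Rightarrow\mathbb{E}\qty[\lambda_{\max}^{\theta_2}]=\infty$ for $0<\theta_1<\theta_2$ is the contrapositive of the inclusion $L^{\theta_2}\subseteq L^{\theta_1}$ on a probability space, which here is visible directly from $\lambda_{\max}^{\theta_1}\le 1+\lambda_{\max}^{\theta_2}$ (the left side exceeds $1$ only when $\lambda_{\max}>1$, and there $\lambda_{\max}^{\theta_1}<\lambda_{\max}^{\theta_2}$).

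The third part is the exponential form of the first sandwich: $e^{\theta\lambda_{\max}}\le e^{\theta\Tr\qty[\bm{H}\bm{H}^\ast]}$ settles one direction, and $e^{(\theta/N_R)\Tr\qty[\bm{H}\bm{H}^\ast]}\le e^{\theta\lambda_{\max}}$ the other, the ``$\exists\theta$'' quantifier absorbing the factor $1/N_R$. I do not anticipate a genuine obstacle; the only place that needs care is the first part, where one must check that the additive constants (the $\pm 1$ inside the logarithm and inside $2^{x/a}-1$) and the multiplicative constants $a$, $b$, $N_R$ are harmless, in the sense that they only rescale the witnessing exponent by a positive constant and the $O(\cdot)$ bound by a constant, so that every equivalence in the chain survives even though the witnessing $\theta$ is not literally preserved.
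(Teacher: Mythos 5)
Your argument is correct, and for parts (1) and (3) it is essentially the paper's own route: the exact distributional identity $\overline{F}_{c}(x)=\overline{F}_{\lambda_{\max}}\qty((2^{x/a}-1)/b)$ is the paper's ``transform of random variables'' step, and the sandwich $\lambda_{\max}\le \Tr\qty[\bm{H}\bm{H}^\ast]\le N_R\lambda_{\max}$ (the paper uses $r(\bm{H})$ in place of $N_R$) drives both the second equivalence of part (1) and part (3), with the existential $\theta$ absorbing the constants exactly as you say. Where you genuinely diverge is part (2). For the equivalence $\mathbb{E}\qty[(1+\lambda_{\max})^\theta]<\infty \iff \mathbb{E}\qty[\lambda_{\max}^\theta]<\infty$, the paper splits into $\theta\le 1$ and $\theta>1$, reduces the latter to the former via H\"older's inequality, and then invokes the moment--tail-integral characterization from \cite{sarantsev2011tail} ($\mathbb{E}\qty[X^\theta]<\infty$ iff $\overline{F}_X(x)=o(x^{-\theta})$ and $\int_0^\infty \overline{F}_X(x)x^{\theta-1}\dd x<\infty$) together with $x^{-\theta}\sim(1+x)^{-\theta}$ and $x^{\theta-1}\ge(1+x)^{\theta-1}$; your pointwise bound $(1+\lambda_{\max})^\theta\le 2^\theta\qty(1+\lambda_{\max}^\theta)$ replaces all of this in one line, valid uniformly in $\theta>0$, which is cleaner and avoids any appeal to external tail characterizations. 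Likewise your single sandwich $\min\{1,\Delta\}(1+\lambda_{\max})\le 1+\Delta\lambda_{\max}\le\max\{1,\Delta\}(1+\lambda_{\max})$ subsumes the paper's two-case treatment (the $(1+\Delta)^\theta$ bound and the $\Delta\ge 1/m$ trick), and your contrapositive via $\lambda_{\max}^{\theta_1}\le 1+\lambda_{\max}^{\theta_2}$ delivers the $L^{\theta_2}\subseteq L^{\theta_1}$ implication elementarily where the paper cites H\"older/Jensen. The trade-off is negligible in generality -- both proofs prove the same statement -- but your part (2) is the more self-contained and economical argument, while the paper's detour does exhibit the link between moment finiteness and the tail-integral criterion that it reuses in a later remark.
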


\begin{remark}
The parameters $\theta$ in the two equations, $\overline{F}_{c}(x)=O(e^{-\theta x})$ and $\overline{F}_{\lambda_{\max}}(x)=O(x^{-\theta})$, are not necessarily equal.
\end{remark}

\begin{remark}
The above equivalent results indicate that, for $X\in \mathbb{R}_{\ge 0}$, $\mathbb{E}\qty[ X^\theta ]<\infty,\ \exists \theta>0 \iff \overline{F}_X(x)=O\qty(x^{-\theta}),\ \exists \theta>0$.
However, it is interesting to notice that, for a common $\theta>0$ and $X\in \mathbb{R}_{\ge 0}$, we only have $\mathbb{E}\qty[ X^\theta ]<\infty \implies \overline{F}_X(x)=O\qty(x^{-\theta})$, and the reverse does not hold in general.
Because it is shown in \cite{sarantsev2011tail} that $\mathbb{E}\qty[ X^\theta ]<\infty$, where $\theta>0$ and $X$ is a nonnegative random variable, if and only if $\overline{F}_X(x)=o\qty(x^{-\theta})$ and $\int_{0}^{\infty} \overline{F}_X(x) x^{\theta-1} d x <\infty $.
\end{remark}

\begin{remark}
The mean identity $\mathbb{E}[X^\theta]$, for $X: \Omega \rightarrow \mathbb{R}_{\ge 0}$ and $\theta>0$, is a special case of Mellin-Stieltjes transform \cite{zolotarev1957mellin} of the distribution function $F_X(x)$.
\end{remark}

\begin{remark}
Suppose $X$ is a regularly varying non-negative random variable with index $\alpha > 0$. Then \cite{embrechts1997modelling}, $\mathbb{E}\qty[ X^\beta ]<\infty$, for $\beta<\alpha$; and $\mathbb{E}\qty[ X^\beta ] = \infty$, for $\beta>\alpha$. 
\end{remark}

\subsubsection{Arbitrary Channel Side Information}

We present the sufficient condition for the light-tailed capacity of the flat channel.
We present the proof in Appendix \ref{proof-theorem-tail-sufficient}.

\begin{theorem}\label{theorem-tail-sufficient}
Consider a flat MIMO channel $\bm{H} \in \mathbb{C}^{N_{R}\times N_T}$. 
If the mean identity exists, i.e., 
\begin{equation}
 \mathbb{E} \qty[ \qty( 1 +  \lambda_{\max}  )^\theta ] < \infty, \ \exists \theta>0,
\end{equation}
where $\lambda_{\max}$ is the maximum eigenvalue of $\bm{H} \bm{H}^\ast$,
the distribution of the capacity of the MIMO channel (with or without full channel side information) is light-tailed.
\end{theorem}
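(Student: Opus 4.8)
The plan is to reduce the claim about the capacity to a statement about the maximum eigenvalue, and then invoke the equivalences already established in Lemma \ref{lemma-tail-equivalence}. First I would recall from Lemma \ref{lemma-tail-equivalence} that for a flat MIMO channel the capacity admits the upper bound $c \le a \log_2(1 + b\lambda_{\max})$ for some constants $a, b \in \mathbb{R}_{>0}$; this already covers the "with full channel side information" case, since water-filling over the eigenmodes can only do better, but the sum-rate is still dominated by $N_R \cdot a \log_2(1 + b' \lambda_{\max})$ for a suitable $b'$ (the number of nonzero eigenmodes is at most $\min\{N_R, N_T\}$ and each is at most $\lambda_{\max}$). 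So in all cases $c$ is stochastically dominated by an affine-logarithmic function of $\lambda_{\max}$, and it suffices to bound $\overline{F}_c$ by $\overline{F}_{a'\log_2(1+b'\lambda_{\max})}$.

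Next I would observe that the hypothesis $\mathbb{E}[(1+\lambda_{\max})^\theta] < \infty$ for some $\theta > 0$ is, by part 2 of Lemma \ref{lemma-tail-equivalence}, equivalent to $\mathbb{E}[\lambda_{\max}^\theta] < \infty$ for some (possibly different) $\theta > 0$, which by part 1 of the same lemma is equivalent to $\overline{F}_c(x) = O(e^{-\theta x})$ for some $\theta > 0$ — precisely the definition of light-tailedness given in the introduction. So the core of the argument is already packaged in the lemma; the theorem is essentially the implication direction of that chain of equivalences, restated and extended to the case with channel side information. I would therefore spend the bulk of the proof justifying the reduction from the side-information capacity to the single-eigenvalue bound (handling the water-filling power allocation and the fact that it only decreases the capacity relative to, say, equal power on the top mode scaled appropriately), and then simply cite Lemma \ref{lemma-tail-equivalence}.

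To make the eigenvalue-to-capacity step explicit: writing the ordered eigenvalues of $\bm{H}\bm{H}^\ast$ as $\lambda_1 \ge \cdots \ge \lambda_k \ge 0$ with $k = \min\{N_R, N_T\}$, the capacity with side information is $c = W\sum_{i} \log_2(1 + \gamma_i \lambda_i)$ for water-filling coefficients $\gamma_i$ satisfying $\sum_i \gamma_i \lambda_i$-type power constraints; bounding each term by $\log_2(1 + \gamma_{\max}\lambda_{\max})$ and the count of terms by $k$ gives $c \le Wk\log_2(1 + \gamma_{\max}\lambda_{\max})$, which is of the form $a\log_2(1+b\lambda_{\max})$. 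Then $\{c > x\} \subseteq \{a\log_2(1+b\lambda_{\max}) > x\} = \{\lambda_{\max} > (2^{x/a}-1)/b\}$, so $\overline{F}_c(x) \le \overline{F}_{\lambda_{\max}}((2^{x/a}-1)/b)$; combined with $\overline{F}_{\lambda_{\max}}(y) = O(y^{-\theta})$ (equivalent to the moment hypothesis via Lemma \ref{lemma-tail-equivalence}, part 1 and 2), this yields $\overline{F}_c(x) = O\big((2^{x/a}-1)^{-\theta}\big) = O(e^{-(\theta/a)(\ln 2) x})$, establishing the exponential tail bound.

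The main obstacle I anticipate is not analytic but bookkeeping: pinning down the constants $a, b$ uniformly over the water-filling allocation and confirming that the side-information capacity genuinely stays below an affine-logarithmic function of $\lambda_{\max}$ alone (rather than requiring control of all the eigenvalues or of $\Tr[\bm{H}\bm{H}^\ast]$ separately). That said, since $\lambda_{\max} \le \Tr[\bm{H}\bm{H}^\ast] \le k\lambda_{\max}$, any bound in terms of the trace translates to one in terms of $\lambda_{\max}$ at the cost of a constant factor absorbed into $b$, so the equivalences of Lemma \ref{lemma-tail-equivalence} make even this step routine. No genuinely new estimate is needed beyond what the lemma supplies.
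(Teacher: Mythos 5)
Your proposal is correct and follows essentially the same route as the paper: bound the capacity (with or without channel side information at the transmitter, using the power constraint $\sum_i \gamma_i = N_T$ so that $\gamma_{\max}\le N_T$) by $a\log_2(1+b\lambda_{\max})$, and then invoke the equivalences of Lemma~\ref{lemma-tail-equivalence} between the moment condition on $\lambda_{\max}$ and the exponential tail of the capacity. The only cosmetic difference is that the paper concludes via finiteness of $\mathbb{E}\qty[e^{\theta c}]$ while you conclude via the explicit tail bound $\overline{F}_c(x)\le \overline{F}_{\lambda_{\max}}\qty((2^{x/a}-1)/b)$, which are interchangeable under the paper's definition of light-tailedness.
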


We present the sufficient condition for the light-tailed capacity of the frequency-selective channel. 
The proof is shown in Appendix \ref{proof-theorem-frequency-selective-sufficient}.
 
\begin{theorem}\label{theorem-frequency-selective-sufficient}
Consider a frequency-selective MIMO channel with sub-channel $\bm{H}_i \in \mathbb{C}^{N_{R}\times N_T}$, $i\in\qty{1,\ldots, N}$, and block diagonal matrix $\bm{\mathcal{H}} = \text{diag} \qty(\bm{H}_1,\ldots, \bm{H}_N)$. 
For the scenarios where channel side information known or unknown at the transmitter,
if the mean identity exists for each sub-channel, i.e., 
\begin{equation}
 \mathbb{E} \qty[ \qty( 1 + \lambda_{\max}^{i}  )^\theta ] < \infty, \ \exists \theta>0,\ \forall i\in\qty{1,\ldots, N},
\end{equation}
where $\lambda^{i}_{\max}$ is the maximum eigenvalue of $\bm{H}_i \bm{H}_i^\ast$, or the equivalent condition is satisfied, i.e., 
\begin{equation}
\mathbb{E} \qty[ \qty( 1 + \lambda_{\max}  )^\theta ] < \infty, \ \exists \theta>0,
\end{equation}
where $\lambda_{\max}$ is the maximum eigenvalue of $\bm{\mathcal{H}}\bm{\mathcal{H}}^\ast$, the distribution of the capacity is light-tailed.
\end{theorem}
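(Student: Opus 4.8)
\emph{Proof proposal (plan).} The plan is to treat the block diagonal matrix $\bm{\mathcal{H}} = \text{diag}(\bm{H}_1,\ldots,\bm{H}_N) \in \mathbb{C}^{N_R N \times N_T N}$ as the channel matrix of a single, larger flat MIMO channel, so that the frequency-selective statement follows from essentially the same argument as Theorem \ref{theorem-tail-sufficient}, together with one genuinely new step: transferring the moment hypothesis from the individual sub-channels to $\bm{\mathcal{H}}\bm{\mathcal{H}}^\ast$. Accordingly I would split the proof into (i) showing that the two stated hypotheses are equivalent and that both imply $\mathbb{E}[(1+\lambda_{\max}(\bm{\mathcal{H}}\bm{\mathcal{H}}^\ast))^\theta]<\infty$ for some $\theta>0$, and (ii) producing a deterministic bound $c \le a\log_2(1+b\,\lambda_{\max}(\bm{\mathcal{H}}\bm{\mathcal{H}}^\ast))$ with $a,b\in\mathbb{R}_{>0}$, uniformly over the optimization in $\bm{R}_{\bm{\mathcal{S}\mathcal{S}}}$, after which Lemma \ref{lemma-tail-equivalence} and the moment/tail remarks following it close the argument.

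For step (i), the structural fact is that $\bm{\mathcal{H}}\bm{\mathcal{H}}^\ast = \text{diag}(\bm{H}_1\bm{H}_1^\ast,\ldots,\bm{H}_N\bm{H}_N^\ast)$ is block diagonal, hence its spectrum is the union of the spectra of the blocks and $\lambda_{\max}(\bm{\mathcal{H}}\bm{\mathcal{H}}^\ast) = \max_{1\le i\le N}\lambda_{\max}^{i}$. The equivalence of the two hypotheses is then immediate: one direction is $\lambda_{\max}^{i}\le\lambda_{\max}(\bm{\mathcal{H}}\bm{\mathcal{H}}^\ast)$ pointwise; for the other, given the per-sub-channel conditions at exponents $\theta_i>0$, I would set $\theta=\min_i\theta_i>0$ (each block moment stays finite at $\theta$ since $(1+x)^\theta\le(1+x)^{\theta_i}$ for $x\ge0$) and then use
\[
\bigl(1+\max_i\lambda_{\max}^{i}\bigr)^{\theta}\le\bigl(\sum_{i=1}^{N}(1+\lambda_{\max}^{i})\bigr)^{\theta}\le C_{N,\theta}\sum_{i=1}^{N}(1+\lambda_{\max}^{i})^{\theta},
\]
where the first inequality holds because $\max_i a_i\le\sum_i a_i$ for $a_i\ge0$ and the second is subadditivity of $t\mapsto t^\theta$ when $0<\theta\le1$ and the power-mean inequality when $\theta>1$, with $C_{N,\theta}$ depending only on $N$ and $\theta$; taking expectations gives the aggregate moment bound.

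For step (ii), writing $M=\frac{\rho}{N_T}\bm{\mathcal{H}}\bm{R}_{\bm{\mathcal{S}\mathcal{S}}}\bm{\mathcal{H}}^\ast\ge0$, we have $\log_2\det(\bm{I}+M)=\sum_k\log_2(1+\lambda_k(M))$, each eigenvalue satisfies $\lambda_k(M)\le\frac{\rho}{N_T}\lambda_{\max}(\bm{\mathcal{H}}\bm{\mathcal{H}}^\ast)\,\lambda_{\max}(\bm{R}_{\bm{\mathcal{S}\mathcal{S}}})\le\frac{\rho}{N_T}\lambda_{\max}(\bm{\mathcal{H}}\bm{\mathcal{H}}^\ast)\,\Tr[\bm{R}_{\bm{\mathcal{S}\mathcal{S}}}]=\rho N\lambda_{\max}(\bm{\mathcal{H}}\bm{\mathcal{H}}^\ast)$, and there are at most $\min(N_R,N_T)N$ nonzero terms, so after the $W/N$ prefactor $c\le W\min(N_R,N_T)\log_2(1+\rho N\lambda_{\max}(\bm{\mathcal{H}}\bm{\mathcal{H}}^\ast))$ for every admissible covariance, in particular for the water-filling optimum (CSI known) and for equal power allocation (CSI unknown). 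Combining with step (i): by the moment/tail equivalences around Lemma \ref{lemma-tail-equivalence}, $\mathbb{E}[(1+\lambda_{\max}(\bm{\mathcal{H}}\bm{\mathcal{H}}^\ast))^\theta]<\infty$ yields $\overline{F}_{\lambda_{\max}(\bm{\mathcal{H}}\bm{\mathcal{H}}^\ast)}(x)=O(x^{-\theta})$, and part 1 of that lemma transports this through the dominating capacity $a\log_2(1+b\lambda_{\max})$ to $\overline{F}_{c}(x)=O(e^{-\theta' x})$ for some $\theta'>0$; since the true capacity is dominated by that expression, it is light-tailed.

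The part I expect to demand the most care is the uniform deterministic bound in step (ii): because $\bm{R}_{\bm{\mathcal{S}\mathcal{S}}}$ is a general covariance matrix and need not respect the block structure, the bound on $\lambda_k(M)$ must go through the operator-norm/trace inequality and the power constraint $\Tr[\bm{R}_{\bm{\mathcal{S}\mathcal{S}}}]=N_TN$ rather than any direct block manipulation, and one must verify that the resulting constants $a,b$ are independent of $\bm{R}_{\bm{\mathcal{S}\mathcal{S}}}$ so that the bound survives the maximization. Everything else is a direct re-run of the flat-channel proof of Theorem \ref{theorem-tail-sufficient} plus the elementary block-diagonal spectral identity and the moment-combining inequality above.
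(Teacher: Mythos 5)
Your proposal is correct, and its backbone (dominate the capacity by $a\log_2(1+b\,\lambda_{\max}(\bm{\mathcal{H}}\bm{\mathcal{H}}^\ast))$, then pass from the moment condition to the tail via Lemma \ref{lemma-tail-equivalence} and Markov-type bounds) matches the paper, but your handling of the two CSI scenarios is genuinely different. The paper splits cases: for CSI unknown at the transmitter it writes $c=\frac{W}{N}\sum_{i=1}^N\log_2\det(\bm{I}_{N_R}+\frac{\rho}{N_T}\bm{H}_i\bm{H}_i^\ast)$ and argues that each sub-channel capacity is light-tailed under the per-sub-channel hypothesis, then invokes closure of the light-tail property under sums; for CSI known it works with the water-filling eigen-expression over $\bm{\mathcal{H}}\bm{\mathcal{H}}^\ast$ and bounds the allocation by $\gamma_{\max}\le N_T N$, arriving at $c\le\frac{W}{N}r(\bm{\mathcal{H}})\log_2(1+\rho N\lambda_{\max})$, with the hypothesis equivalence disposed of by the one-line observation $\lambda_{\max}=\max(\lambda^1_{\max},\ldots,\lambda^N_{\max})$. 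You instead prove one deterministic bound $c\le W\min(N_R,N_T)\log_2(1+\rho N\lambda_{\max}(\bm{\mathcal{H}}\bm{\mathcal{H}}^\ast))$ uniformly over all covariances with $\Tr[\bm{R}_{\bm{\mathcal{S}\mathcal{S}}}]=N_TN$, via $\lambda_{\max}(\bm{\mathcal{H}}\bm{R}\bm{\mathcal{H}}^\ast)\le\lambda_{\max}(\bm{R})\lambda_{\max}(\bm{\mathcal{H}}\bm{\mathcal{H}}^\ast)\le\Tr[\bm{R}]\,\lambda_{\max}(\bm{\mathcal{H}}\bm{\mathcal{H}}^\ast)$, which subsumes both the water-filling optimum and equal allocation in a single step and avoids any appeal to light-tail preservation under summation; you also make the moment-combining argument for the hypothesis equivalence explicit (it could be shortened to $(1+\max_i\lambda^i_{\max})^\theta\le\sum_i(1+\lambda^i_{\max})^\theta$ with $\theta=\min_i\theta_i$, bypassing the power-mean step). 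Your route buys uniformity and robustness (the same bound works for any admissible input covariance, not just the two named allocations, and the constants $a,b$ are manifestly independent of $\bm{R}_{\bm{\mathcal{S}\mathcal{S}}}$), while the paper's sub-channel decomposition for the CSI-unknown case uses the per-sub-channel hypothesis directly, without ever needing the aggregate eigenvalue identity or the trace/operator-norm inequality.
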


\begin{remark}
The equivalent expression of the sufficient condition means that it is equivalent to consider the block diagonal matrix of the frequency-selective channel as a whole or to consider the matrix of each sub-channel individually.
\end{remark}

\begin{remark}
If the trace identity exists, i.e., $\mathbb{E} \qty [ \Tr \qty[ e^{\theta \bm{\mathcal{H}}  \bm{\mathcal{H}}^\ast  } ] ] < \infty,\ \exists \theta>0$, then the maximum eigenvalue distribution and the capacity distribution are light-tailed.
Because the maximum eigenvalue distribution is exponentially bounded \cite{tropp2015introduction}
\begin{equation}
\Pr \qty( \lambda_{\max} (\bm{X}) \ge x ) \le  e^{-\theta x} \cdot \mathbb{E} \qty [ \Tr \qty[ e^{\theta \bm{X} } ] ], \ \forall \theta>0,
\end{equation}
where $\bm{X} = \bm{\mathcal{H}}\bm{\mathcal{H}}^\ast \in \bm{\mathfrak{X}}$.
Since the matrix $\bm{\mathcal{H}}  \bm{\mathcal{H}}^\ast $ is block diagonal \cite{higham2008functions}, $e^{\theta \bm{\mathcal{H}}  \bm{\mathcal{H}}^\ast} = \text{diag} \qty(e^{\theta \bm{H}_1 \bm{H}_1^\ast  }, \ldots, e^{\theta \bm{H}_N \bm{H}_N^\ast  }) $, and
\begin{IEEEeqnarray}{rCl}
\mathbb{E} \qty [ \Tr \qty[ e^{\theta \bm{\mathcal{H}}  \bm{\mathcal{H}}^\ast  } ] ] = \sum_{i=1}^{N} \mathbb{E} \qty [ \Tr \qty[ e^{\theta \bm{{H}}_i  \bm{{H}}_i^\ast  } ] ].
\end{IEEEeqnarray}
Thus, $\mathbb{E} \qty [ \Tr \qty[ e^{\theta \bm{\mathcal{H}}  \bm{\mathcal{H}}^\ast  } ] ] < \infty,\ \exists \theta>0$ entails $\mathbb{E} \qty [ \Tr \qty[ e^{\theta \bm{H}_i \bm{H}_i^\ast  } ] ] < \infty$, $\exists \theta>0$, $\forall i\in\qty{1,\ldots, N}$, which is non-negative.
\end{remark}

\subsubsection{Without Channel Side Information at Transmitter}

We present the sufficient and necessary condition for the flat channel capacity distribution to be light-tailed, when the channel side information is not known at the transmitter.
We present the proof in Appendix \ref{proof-theorem-s-n-csi-transmitter}.

\begin{theorem}\label{theorem-s-n-csi-transmitter}
Consider that the channel side information is only known at the receiver.
The capacity distribution of the flat channel is light tailed, if and only if the tail of the determinant term is expressed as
\begin{equation}
\mathbb{E} \qty[ \qty( \det \qty( \bm{I}_{N_R} +  \bm{\Lambda} ) )^\theta ] < \infty,\ \exists \theta >0,
\end{equation}
where $\bm{H}\bm{H}^\ast = \bm{Q}\bm{\Lambda}\bm{Q}^\ast$, $\bm{Q}\bm{Q}^\ast=\bm{Q}^\ast \bm{Q} = \bm{I}_{N_R}$, and $\bm{\Lambda}=\text{diag}\qty{\lambda_1, \ldots, \lambda_{N_R}}$, $\lambda_i\ge 0$.
Equivalently, the condition is expressed as
\begin{equation}
\mathbb{E} \qty[  \prod_{i=1}^{r(\bm{H})} \qty( 1+ \lambda_i )^\theta  ] < \infty,\ \exists \theta >0,
\end{equation}
where $0<\lambda_i \in \bm{\Lambda}$ and $r(\bm{H})$ is the rank of $\bm{H}$.
\end{theorem}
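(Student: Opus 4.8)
The plan is to reduce the light-tail question for the capacity to a moment condition on the determinant term, using the tail-equivalence machinery already established in Lemma~\ref{lemma-tail-equivalence}. When the channel side information is known only at the receiver, the optimal input covariance is $\bm{R}_{\bm{s}\bm{s}} = \bm{I}_{N_T}$, so the capacity takes the fixed form $c = W\log_2\det\qty(\bm{I}_{N_R} + \tfrac{\rho}{N_T}\bm{H}\bm{H}^\ast)$; writing $\bm{H}\bm{H}^\ast = \bm{Q}\bm{\Lambda}\bm{Q}^\ast$ and using $\det(\bm{I}+\tfrac{\rho}{N_T}\bm{Q}\bm{\Lambda}\bm{Q}^\ast) = \det(\bm{I}+\tfrac{\rho}{N_T}\bm{\Lambda}) = \prod_i(1+\tfrac{\rho}{N_T}\lambda_i)$, the capacity is a deterministic increasing function of the random vector $(\lambda_1,\dots,\lambda_{N_R})$ only through this product. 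So $c = W\log_2 D$ with $D := \det(\bm{I}_{N_R} + \tfrac{\rho}{N_T}\bm{\Lambda}) \ge 1$ a.s.

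Next I would translate ``$c$ light-tailed'' into ``$D$ has a finite $\theta$-th moment'' by exactly the argument used for the $\log_2(1+b\lambda_{\max})$ bound in Lemma~\ref{lemma-tail-equivalence}(1): for a nonnegative random variable $D\ge 1$,
\begin{equation}
\overline{F}_{W\log_2 D}(x) = \Pr\qty(D > 2^{x/W}) = \overline{F}_D\qty(2^{x/W}),
\end{equation}
and $\overline{F}_{W\log_2 D}(x) = O(e^{-\theta x})$ for some $\theta>0$ is equivalent to $\overline{F}_D(y) = O(y^{-\theta'})$ for some $\theta'>0$ (substitute $y = 2^{x/W}$, so $e^{-\theta x} = y^{-\theta W/\ln 2}$), which by the Remark following Lemma~\ref{lemma-tail-equivalence} (the equivalence $\mathbb{E}[D^{\theta}]<\infty \iff \overline{F}_D(y)=O(y^{-\theta})$ for nonnegative $D$) is equivalent to $\mathbb{E}[D^{\theta}]<\infty$ for some $\theta>0$. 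Finally, to pass from $\mathbb{E}\qty[\qty(\det(\bm{I}_{N_R}+\tfrac{\rho}{N_T}\bm{\Lambda}))^\theta]<\infty$ to the stated form $\mathbb{E}\qty[\qty(\det(\bm{I}_{N_R}+\bm{\Lambda}))^\theta]<\infty$, I would invoke the scaling-insensitivity already proved in Lemma~\ref{lemma-tail-equivalence}(2): since $0<\tfrac{\rho}{N_T}<\infty$, the moment $\mathbb{E}\qty[\prod_i(1+\Delta\lambda_i)^\theta]$ is finite for one $\Delta\in(0,\infty)$ iff it is finite for all, after possibly adjusting $\theta$ (note $\prod_i(1+\lambda_i)^\theta \le (1+\lambda_{\max})^{N_R\theta}\cdot$const, so the multi-factor product is controlled by a power of $1+\lambda_{\max}$, bringing us back into the scalar setting of part~(2)). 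The last equivalent reformulation, restricting the product to the $r(\bm{H})$ strictly positive eigenvalues, is immediate because each factor with $\lambda_i=0$ contributes $1$.

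The main obstacle I anticipate is purely bookkeeping rather than conceptual: making the chain of ``$O(\cdot)$'' equivalences airtight when the exponent $\theta$ is allowed to change at each step (the Remark already flags that the $\theta$'s need not match), and carefully handling the ``only if'' direction — one must check that if $\mathbb{E}[D^\theta]=\infty$ for every $\theta>0$ then $c$ is genuinely heavy-tailed, which follows from the contrapositive of the same equivalence but deserves an explicit line. No new probabilistic input is needed beyond Lemma~\ref{lemma-tail-equivalence} and the observation that the receiver-only-CSI capacity has the closed form above; everything else is the substitution $y=2^{x/W}$ and the elementary sandwich $\prod_i(1+\lambda_i) \le (1+\lambda_{\max})^{N_R}$.
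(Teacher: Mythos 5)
Your proposal is correct and follows essentially the same route as the paper: write the receiver-only-CSI capacity in the closed form $c = W\log_2\det\qty(\bm{I}_{N_R}+\tfrac{\rho}{N_T}\bm{\Lambda})$, translate light-tailedness of $c$ into finiteness of $\mathbb{E}\qty[\qty(\det(\bm{I}_{N_R}+\tfrac{\rho}{N_T}\bm{\Lambda}))^\theta]$ for some $\theta>0$, and then remove the factor $\tfrac{\rho}{N_T}$ by a scale-insensitivity argument. The only cosmetic difference is in that last step: the paper sandwiches the product directly, $\mathbb{E}\qty[\prod_i(1+\lambda_i)^\theta]\le\mathbb{E}\qty[\prod_i(1+\Delta\lambda_i)^\theta]\le\Delta^{r(\bm{H})\theta}\,\mathbb{E}\qty[\prod_i(1+\lambda_i)^\theta]$ for $\Delta\ge 1$ (and the reverse for $\Delta\le 1$), keeping the same $\theta$, whereas you pass through $(1+\lambda_{\max})^{N_R\theta}$ and Lemma~\ref{lemma-tail-equivalence}(2), which also works since the condition only requires existence of some $\theta>0$.
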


\begin{remark}
The conditions are equivalently expressed as 
$\mathbb{E} \qty[  \prod_{i=1}^{r(\bm{H})} \qty( \lambda_i )^\theta  ] < \infty,\ \exists \theta >0$.
The proof is similar to that of Lemma \ref{lemma-tail-equivalence}.
\end{remark}

\begin{remark}
Considering the inequality $  \prod_{i=1}^{r(\bm{H})} \qty( 1+  \lambda_i )^\theta  \le  \qty( 1+ \lambda_{\max} )^{\theta {r(\bm{H})}}  $, where $\lambda_{\max} = \max\limits_{1\le i \le r(\bm{H})}{\lambda_i}$, the sufficient and necessary condition relaxes to a sufficient condition, i.e.,
\begin{multline}
\mathbb{E} \qty[  \prod_{i=1}^{r(\bm{H})} \qty( 1+ \lambda_i )^\theta  ] < \infty,\ \exists \theta >0 \\
\impliedby
\mathbb{E} \qty[ \qty( 1+ \lambda_{\max} )^{\theta } ] < \infty, \ \exists \theta > 0，
\end{multline}
which is equivalent to $\mathbb{E} \qty[ \qty( \lambda_{\max} )^{\theta } ] < \infty, \ \exists \theta > 0$.
\end{remark}

\begin{remark}
Considering the Fredholm determinant \cite{gohberg2012traces}, for $|z|$ small enough, $\log\det \qty( \bm{I} + z \bm{\Lambda} ) = \Tr \log\qty( \bm{I} + z \bm{\Lambda} ) = \sum_{k=1}^{\infty}  \frac{ (-1)^{k+1} }{k} z^k \Tr[\bm{\Lambda}^k] $, the condition is alternatively expressed as
\begin{equation}
\mathbb{E} \qty[ e^{ \theta \sum_{k=1}^{\infty}  \frac{ (-1)^{k+1} }{k} \qty(\frac{\rho}{N_T})^k {\Tr \qty[\bm{\Lambda}^k ]} } ] < \infty, \ \exists \theta>0,
\end{equation}
where $\bm{\Lambda} = \bm{H}\bm{H}^\ast$ or $\bm{H}\bm{H}^\ast = \bm{Q}\bm{\Lambda}\bm{Q}^\ast$.
Specifically, for $\bm{H}\bm{H}^\ast = \bm{Q}\bm{\Lambda}\bm{Q}^\ast$, we have ${\Tr \qty[\bm{\Lambda}^k ]} = \sum_{i=1}^{r(\bm{\Lambda})} (\lambda_{i}(\bm{\Lambda}))^{k} $.

For arbitrary $z$, according to the Plemelj-Smithies formulas \cite{gohberg2012traces}, we have
$\det( \bm{I} + z \bm{\Lambda} ) = 1 + \sum_{k=1}^{r(\bm{\Lambda})} \frac{d_k(\bm{\Lambda})}{k!} z^k $, thus the condition is expressed as
\begin{equation}
\mathbb{E} \qty [ \qty(1 + \sum_{k=1}^{r(\bm{\Lambda})} \frac{d_k(\bm{\Lambda})}{k!} \qty(\frac{\rho}{N_T})^k)^\theta ] < \infty,\ \theta>0,
\end{equation}
where
\begin{multline}
d_k(\bm{\Lambda}) = \\
\mqty | 
\Tr \bm{\Lambda} & k-1 & 0 & 0 & \ldots & 0 & 0 \\ 
\Tr \bm{\Lambda}^2 & \Tr \bm{\Lambda} & k-2 & 0 & \ldots & 0 & 0 \\ 
\vdots & \vdots & \vdots & \vdots & \ddots & \vdots & \vdots \\ 
\Tr \bm{\Lambda}^{k-1} & \Tr \bm{\Lambda}^{k-2} & \Tr \bm{\Lambda}^{k-3} & \vdots & \ldots & \Tr \bm{\Lambda} & 1 \\ 
\Tr \bm{\Lambda}^{k} & \Tr \bm{\Lambda}^{k-1} & \Tr \bm{\Lambda}^{k-2} & \vdots & \ldots & \Tr \bm{\Lambda}^2 & \Tr \bm{\Lambda} \\ 
|. \nonumber
\end{multline}
\end{remark}

We present a sufficient condition for the light-tailed property of the frequency-selective channel capacity.

\begin{theorem}
Consider that the channel side information is only known at the receiver.
The capacity distribution of the frequency-selective channel is light tailed, if
\begin{equation}
\mathbb{E} \qty [ \prod_{i=1}^{r(\bm{H_j})} \qty( 1+ \lambda_i^j )^\theta ] < \infty,\ \exists \theta>0, \ \forall j\in\{1, \ldots, N\},
\end{equation}
where $\bm{H}_j$ is the channel model of each sub-channel and $\lambda_i^j$ is the corresponding eigenvalue of $\bm{H}_j\bm{H}_j^\ast$.
\end{theorem}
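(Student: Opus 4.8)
The plan is to mirror the argument used for Theorem~\ref{theorem-frequency-selective-sufficient}, but to keep the full determinant form of the rate instead of relaxing it through the maximum eigenvalue, since the hypothesis here is stated in terms of the eigenvalue products of each sub-channel. First I would fix the input. When the channel is known only at the receiver there is no water-filling, so the capacity-achieving input is i.i.d.\ Gaussian with equal power across sub-channels, i.e.\ the maximization is attained at $\bm{R}_{\bm{\mathcal{S}\mathcal{S}}} = \bm{I}_{N_T N}$, which meets the constraint $\Tr[\bm{R}_{\bm{\mathcal{S}\mathcal{S}}}] = N_T N$ with equality. Then $c = \tfrac{W}{N}\log_2\det\!\big(\bm{I}_{N_R N} + \tfrac{\rho}{N_T}\bm{\mathcal{H}}\bm{\mathcal{H}}^\ast\big)$, and since $\bm{\mathcal{H}} = \text{diag}(\bm{H}_1,\ldots,\bm{H}_N)$ the matrix $\bm{\mathcal{H}}\bm{\mathcal{H}}^\ast$ is again block diagonal, so the determinant factorizes over the blocks; diagonalizing each $\bm{H}_j\bm{H}_j^\ast$ gives
\[
c = \frac{W}{N}\sum_{j=1}^{N}\log_2\prod_{i=1}^{r(\bm{H}_j)}\Big(1+\tfrac{\rho}{N_T}\lambda_i^j\Big),
\]
where the (random) rank $r(\bm{H}_j)$ is harmless because zero eigenvalues contribute a factor~$1$.

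Next I would pass to the moment generating function: by the paper's definition of light-tailedness together with a Chernoff bound, it suffices to exhibit some $s>0$ with $\mathbb{E}[e^{sc}]<\infty$. Writing $Y_j := \prod_{i=1}^{r(\bm{H}_j)}(1+\tfrac{\rho}{N_T}\lambda_i^j)\ge 1$ and $\alpha := sW/(N\ln 2)$, one has $e^{sc} = \prod_{j=1}^{N} Y_j^{\alpha}$. The sub-channel matrices need not be independent, so instead of multiplying marginals I would decouple the factors with the AM--GM inequality (equivalently H\"older with all exponents equal to $N$):
\[
\prod_{j=1}^{N} Y_j^{\alpha} = \prod_{j=1}^{N}\big(Y_j^{N\alpha}\big)^{1/N} \le \frac{1}{N}\sum_{j=1}^{N} Y_j^{N\alpha},
\qquad\text{hence}\qquad
\mathbb{E}[e^{sc}] \le \frac{1}{N}\sum_{j=1}^{N}\mathbb{E}\big[Y_j^{N\alpha}\big].
\]

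It then remains to make each term finite for small $s$. Let $\theta_0 := \min_{j}\theta_j$ where $\theta_j>0$ is the exponent supplied by the hypothesis for block $j$; since $Y_j\ge 1$, lowering the exponent only helps, so $\mathbb{E}[Y_j^{\theta_j}]<\infty$ gives $\mathbb{E}[Y_j^{\theta_0}]<\infty$. Choosing $s$ so that $N\alpha = sW/\ln 2 \le \theta_0$ yields $Y_j^{N\alpha}\le Y_j^{\theta_0}$. Finally the scale factor $\tfrac{\rho}{N_T}$ is absorbed by the elementary two-sided bound $1+\tfrac{\rho}{N_T}\lambda \le \max(1,\tfrac{\rho}{N_T})(1+\lambda)$ (and conversely), which is exactly the kind of equivalence recorded in Lemma~\ref{lemma-tail-equivalence}(2), so $\mathbb{E}[Y_j^{\theta_0}] \le C_j\,\mathbb{E}\big[\prod_{i=1}^{r(\bm{H}_j)}(1+\lambda_i^j)^{\theta_0}\big] < \infty$ by assumption. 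Hence $\mathbb{E}[e^{sc}]<\infty$ and $\overline{F}_c(x) \le e^{-sx}\mathbb{E}[e^{sc}] = O(e^{-sx})$, i.e.\ the capacity is light-tailed.

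The main obstacle is the possible statistical dependence among $\bm{H}_1,\ldots,\bm{H}_N$, which rules out the naive ``MGF of a sum equals product of MGFs'' step; the AM--GM (or H\"older) decoupling is the one mildly non-routine ingredient, after which everything collapses to the single-sub-channel moment condition already analyzed in Theorem~\ref{theorem-s-n-csi-transmitter}. A secondary point requiring care is that the hypothesis only asserts, for each $j$ separately, the existence of \emph{some} $\theta_j>0$, so one must pass to the common exponent $\min_j\theta_j$ before selecting $s$; and one should keep the constants $\max(1,\rho/N_T)$ in plain sight so that the absorption of the $\rho/N_T$ scaling is rigorous.
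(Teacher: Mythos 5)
Your proof is correct and follows essentially the same route as the paper: the paper's (one-line) proof also decomposes the frequency-selective capacity into the sum of per-sub-channel capacities, notes that the stated per-block moment condition makes each sub-channel capacity light-tailed (as in Theorem~\ref{theorem-s-n-csi-transmitter}), and invokes preservation of the light-tail property under summation. Your AM--GM/H\"older decoupling and the choice of a common exponent $\min_j \theta_j$ simply make that closure-under-sums step explicit for possibly dependent blocks, which is a welcome rigorization rather than a different approach.
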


\begin{proof}
The proof of the frequency-selective channel scenario follows that the light-tailed distribution of the capacity of each sub-channel implies the light-tailed distribution of the overall channel capacity.
\end{proof}

\begin{remark}
The sufficient condition relaxes to $\mathbb{E} \qty[ \qty(1 + \lambda^{j}_{\max})^\theta ] < \infty, \exists \theta>0$, $\forall 1\le j \le N$, where $\lambda^{j}_{\max} = \max\limits_{1\le i \le r(\bm{H}_j)}{ \lambda^j_i }$.
Equivalently, it is expressed as $\mathbb{E} \qty[ \qty(1 + \lambda_{\max})^\theta ] < \infty, \exists \theta>0$, where $\lambda_{\max} = \max\limits_{1\le j\le N}{ \lambda^j_{\max} }$.
\end{remark}

\subsection{Random Power Fluctuation}

We consider the channel scenario, where the channel knowledge is known at the receiver and is unknown at the transmitter, and the transmission power randomly fluctuates over the coherence periods and remains constant in each coherence period.

For the flat fading MIMO channel $\bm{H} \in \mathbb{C}^{ N_R\times N_T }$,
when the transmit power is allocated evenly across the transmit antennas during each coherence period, 
the capacity, in bits per second, is expressed as 
\begin{equation}
c_{p,\bm{H}} = W \log_2 \det \qty( \bm{I}_{N_R} + \frac{1}{N_T N_0 W} p \bm{{H}}  \bm{{H}}^\ast ),
\end{equation}
where $W$ is the bandwidth, $N_0$ is the noise power spectral density, and $p$ is the transmit power that is constant during each coherence period and randomly fluctuates over periods. 
Equivalently, the capacity is expressed as
\begin{IEEEeqnarray}{rCl}
c_{p,\bm{H}} &=& W \sum\nolimits_{i=1}^{r(\bm{H} )} \log_2 \qty( 1 + \frac{1}{N_T N_0 W} p \lambda_i ) \\
&\le& W r(\bm{H} ) \log_2 \qty( 1 + \frac{1}{N_T N_0 W} p \lambda_{\max}  ),
\end{IEEEeqnarray}
where $\lambda_i$ is the eigenvalue of the matrix $\bm{H}\bm{H}^\ast$ and $r(\bm{H})$ is the rank of $\bm{H}$.

\subsubsection{Sufficient Conditions for Light Tails}

We present some preliminary results considering the random power.

\begin{lemma}
Consider a flat MIMO channel $\bm{H} \in \mathbb{C}^{N_{R}\times N_T}$. 
The capacity is upper bounded by $c=a\log_{2}(1+b p \lambda_{\max})$, where $a, b\in \mathbb{R}_{>0}$, $p$ is the random power, and $\lambda_{\max}$ is the maximum eigenvalue of $\bm{H}\bm{H}^\ast$.

\begin{enumerate}[wide, labelwidth=!, labelindent=0pt]
\item
For the tail property, we have the equivalent results
\begin{multline}
\overline{F}_{c}(x)=O(e^{-\theta x}),\ \exists \theta>0 \\
\iff \overline{F}_{p \lambda_{\max}}(x)=O(x^{-\theta}),\ \exists \theta>0 \\
\iff \overline{F}_{p \Tr\qty[ \bm{H}\bm{H}^\ast]}(x)=O(x^{-\theta}),\ \exists \theta>0.
\end{multline}

\item
Alternatively, the tail property is expressed as $\mathbb{E}\qty[ e^{\theta c} ] < \infty$, $\exists \theta>0$, and we have the equivalent expressions
\begin{multline}
\mathbb{E} \qty[ ( 1 + b p \lambda_{\max} )^\theta ] < \infty,\ \exists \theta>0 \\
\iff
\mathbb{E} \qty[ ( 1 + p \lambda_{\max} )^\theta ] < \infty,\ \exists \theta>0 \\
\iff
\mathbb{E} \qty[ ( p \lambda_{\max} )^\theta ] < \infty,\ \exists \theta>0.
\end{multline}
In addition, if $p$ and $\lambda_{\max}$ are independent, then $\mathbb{E} \qty[ ( p \lambda_{\max} )^\theta ] = \mathbb{E} \qty[ p^\theta] \mathbb{E}\qty[ ( \lambda_{\max} )^\theta ] $, and the condition relaxes to $\mathbb{E}\qty[p]<\infty$ and $\mathbb{E}\qty[ \lambda_{\max} ] < \infty$.
\end{enumerate}
\end{lemma}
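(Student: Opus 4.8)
The plan is to recognize that, with $c$ denoting the displayed upper bound $a\log_{2}(1+bp\lambda_{\max})$, this lemma is Lemma~\ref{lemma-tail-equivalence} with the nonnegative random variable $\lambda_{\max}$ replaced by the nonnegative random variable $Y:=p\lambda_{\max}$ (equivalently $Y':=p\Tr[\bm{H}\bm{H}^\ast]$), so the earlier arguments carry over with only cosmetic changes. The three ingredients are: the strictly increasing bijection $g:[0,\infty)\to[0,\infty)$ given by $g(y)=a\log_{2}(1+by)$, with inverse $g^{-1}(x)=(2^{x/a}-1)/b$; the matrix comparison $\lambda_{\max}\le\Tr[\bm{H}\bm{H}^\ast]\le\min(N_{R},N_{T})\,\lambda_{\max}$ valid for the positive semidefinite matrix $\bm{H}\bm{H}^\ast$; and the elementary pointwise inequalities $\min(1,b)(1+Y)\le 1+bY\le\max(1,b)(1+Y)$ and $Y^{\theta}\le(1+Y)^{\theta}\le 2^{\theta}(1+Y^{\theta})$, which hold for all $Y\ge 0$. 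I would also recall that the genuine capacity $c_{p,\bm{H}}=W\sum_{i=1}^{r(\bm{H})}\log_{2}(1+\beta p\lambda_{i})$ with $\beta=1/(N_{T}N_{0}W)$ is sandwiched between $W\log_{2}(1+\beta p\lambda_{\max})$ and $Wr(\bm{H})\log_{2}(1+\beta p\lambda_{\max})$, so that light-tailedness of the bounding expression transfers to $c_{p,\bm{H}}$.

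For part~1 I would compute directly: $\overline{F}_{c}(x)=\mathbb{P}(Y>g^{-1}(x))=\overline{F}_{Y}(g^{-1}(x))$, since $g$ is an increasing bijection. Hence $\overline{F}_{Y}(y)=O(y^{-\theta})$ yields $\overline{F}_{c}(x)=O((2^{x/a})^{-\theta})=O(e^{-\theta x\ln 2/a})$, an exponential tail; conversely, substituting $x=g(y)$ into $\overline{F}_{c}(x)=O(e^{-\theta x})$ gives $\overline{F}_{Y}(y)=O((1+by)^{-\theta a/\ln 2})=O(y^{-\theta a/\ln 2})$, a power-law tail. The exponents on the two sides differ by the fixed factor $a/\ln 2$, but positivity of the exponent is preserved, exactly as in the Remark following Lemma~\ref{lemma-tail-equivalence}. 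The equivalence with the $\Tr$-version then follows from $Y=p\lambda_{\max}\le p\Tr[\bm{H}\bm{H}^\ast]\le\min(N_{R},N_{T})\,Y$, since replacing the argument of a tail function by a fixed positive multiple of itself does not affect being $O(x^{-\theta})$.

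For part~2 I would first observe that $e^{\theta c}=(1+bp\lambda_{\max})^{\theta a/\ln 2}$ identically, so ``$\mathbb{E}[e^{\theta c}]<\infty$ for some $\theta>0$'' is equivalent to ``$\mathbb{E}[(1+bp\lambda_{\max})^{\theta'}]<\infty$ for some $\theta'>0$'' because $\theta\mapsto\theta a/\ln 2$ is a bijection of $(0,\infty)$; and the link of this to $\overline{F}_{c}(x)=O(e^{-\theta x})$ is the Markov bound $\overline{F}_{c}(x)\le e^{-\theta x}\mathbb{E}[e^{\theta c}]$ in one direction and the estimate $\mathbb{E}[e^{\theta'c}]=1+\theta'\int_{0}^{\infty}e^{\theta'x}\overline{F}_{c}(x)\,dx<\infty$ for $\theta'<\theta$ (using $c\ge 0$) in the other. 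The chain $\mathbb{E}[(1+bY)^{\theta}]<\infty\iff\mathbb{E}[(1+Y)^{\theta}]<\infty\iff\mathbb{E}[Y^{\theta}]<\infty$ is then immediate by raising the two elementary inequalities above to the power $\theta$ and taking expectations. Finally, when $p$ and $\lambda_{\max}$ are independent, multiplicativity of expectation over independent factors gives $\mathbb{E}[(p\lambda_{\max})^{\theta}]=\mathbb{E}[p^{\theta}]\,\mathbb{E}[\lambda_{\max}^{\theta}]$, and choosing $\theta=1$ shows that $\mathbb{E}[p]<\infty$ together with $\mathbb{E}[\lambda_{\max}]<\infty$ already forces $\mathbb{E}[p\lambda_{\max}]<\infty$, i.e.\ the light-tail condition holds.

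I do not expect a genuine obstacle here, as the argument is essentially bookkeeping; the only points that require care are keeping every $\log\leftrightarrow\exp$ substitution honest, so that the decay/integrability exponent is merely rescaled by a fixed positive constant and the quantifier ``for some $\theta>0$'' survives throughout even though no single $\theta$ does, and, when passing statements back to the genuine capacity $c_{p,\bm{H}}$, always pairing the upper bound $c_{p,\bm{H}}\le Wr(\bm{H})\log_{2}(1+\beta p\lambda_{\max})$ with the matching lower bound $c_{p,\bm{H}}\ge W\log_{2}(1+\beta p\lambda_{\max})$ so that the equivalences stay two-sided.
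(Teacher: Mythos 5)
Your proposal is correct and follows essentially the same route as the paper, which proves this lemma simply by carrying the deterministic-power argument of Lemma~\ref{lemma-tail-equivalence} over with $\lambda_{\max}$ replaced by $Y=p\lambda_{\max}$: the change of variables through $y\mapsto a\log_2(1+by)$ for part~1, the sandwich $p\lambda_{\max}\le p\Tr[\bm{H}\bm{H}^\ast]\le r(\bm{H})\,p\lambda_{\max}$ for the trace version, and the factorization $\mathbb{E}[(p\lambda_{\max})^\theta]=\mathbb{E}[p^\theta]\mathbb{E}[\lambda_{\max}^\theta]$ under independence. The only (harmless) deviation is that for $\mathbb{E}[(1+Y)^\theta]<\infty\iff\mathbb{E}[Y^\theta]<\infty$ you use the elementary bound $(1+Y)^\theta\le 2^\theta(1+Y^\theta)$, which is simpler and uniform in $\theta$, whereas the paper's Appendix~\ref{proof-lemma-tail-equivalence} argues via H\"older's inequality and a tail-integral criterion with a case split on $\theta$.
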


\begin{proof}
The proof is analog to the proof of the deterministic power scenario.
\end{proof}

We present the sufficient and necessary condition for the light-tailed property of the capacity.

\begin{theorem}
The capacity distribution of the flat channel is light tailed, if and only if the tail of the determinant term is no heavier than the fat tail, i.e., 
\begin{equation}
\mathbb{E} \qty[ \qty( \det \qty( \bm{I}_{N_R} + p \bm{\Lambda} ) )^\theta ] < \infty,\ \exists \theta >0,
\end{equation}
where $\bm{H}\bm{H}^\ast = \bm{Q}\bm{\Lambda}\bm{Q}^\ast$, $\bm{Q}\bm{Q}^\ast=\bm{Q}^\ast \bm{Q} = \bm{I}_{N_R}$, and $\bm{\Lambda}=\text{diag}\qty{\lambda_1, \ldots, \lambda_{N_R}}$, $\lambda_i\ge 0$.
Equivalently, the condition is expressed as
\begin{equation}
\mathbb{E} \qty[  \prod_{i=1}^{r(\bm{H})} \qty( 1+ p \lambda_i )^\theta  ] < \infty,\ \exists \theta >0,
\end{equation}
where $0<\lambda_i \in \bm{\Lambda}$ and $r(\bm{H})$ is the rank of $\bm{H}$.
\end{theorem}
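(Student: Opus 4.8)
The plan is to run the argument of Theorem~\ref{theorem-s-n-csi-transmitter} with the random power $p$ carried along inside the determinant, using that, by definition, the capacity distribution is light-tailed precisely when $\mathbb{E}[e^{\theta c}]<\infty$ for some $\theta>0$; since the definition of light-tailedness is itself an ``if and only if'' and every step below is an equivalence, both directions of the theorem are obtained at once.

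First I would diagonalize. Writing $\bm{H}\bm{H}^\ast=\bm{Q}\bm{\Lambda}\bm{Q}^\ast$ with $\bm{Q}$ unitary and $\bm{\Lambda}=\mathrm{diag}\{\lambda_1,\dots,\lambda_{N_R}\}$, unitary invariance of the determinant gives $\det(\bm{I}_{N_R}+\Delta p\,\bm{H}\bm{H}^\ast)=\det(\bm{I}_{N_R}+\Delta p\,\bm{\Lambda})=\prod_{i=1}^{N_R}(1+\Delta p\lambda_i)$, where $\Delta:=1/(N_T N_0 W)>0$; hence $c=W\log_2\det(\bm{I}_{N_R}+\Delta p\,\bm{\Lambda})$ and therefore $e^{\theta c}=(\det(\bm{I}_{N_R}+\Delta p\,\bm{\Lambda}))^{\theta W/\ln 2}$. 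Because $\theta\mapsto\theta W/\ln 2$ is a bijection of $(0,\infty)$, this identity immediately yields the equivalence of light-tailedness of $c$ with the finiteness, for some $\theta>0$, of $\mathbb{E}[(\det(\bm{I}_{N_R}+\Delta p\,\bm{\Lambda}))^{\theta}]$.

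Next I would remove the constant $\Delta$ and restrict the product to the strictly positive eigenvalues. For the constant I argue as in Lemma~\ref{lemma-tail-equivalence}(2): for each $i$ we have $\min(1,\Delta)(1+p\lambda_i)\le 1+\Delta p\lambda_i\le\max(1,\Delta)(1+p\lambda_i)$, and the number of factors is the rank $r(\bm{H})$, which is deterministically bounded by $\min(N_R,N_T)$, so raising the product to the power $\theta$ sandwiches it between constant multiples of $\prod_{i=1}^{r(\bm{H})}(1+p\lambda_i)^{\theta}$; taking expectations transfers finiteness (for some $\theta>0$) in both directions. For the restriction of the product, the factors with $\lambda_i=0$ equal $1$, so $\det(\bm{I}_{N_R}+p\bm{\Lambda})=\prod_{i=1}^{r(\bm{H})}(1+p\lambda_i)$ runs only over the positive eigenvalues, which is the ``equivalently'' reformulation. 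Chaining the equivalences completes the proof.

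The only point requiring care is that the rank $r(\bm{H})$, equivalently the number of nonzero eigenvalues, is itself random; but its deterministic upper bound $\min(N_R,N_T)$ keeps the multiplicative constants $(\max(1,\Delta))^{\pm r(\bm{H})\theta}$ uniformly bounded, so they may be pulled outside the expectation without any integrability hypothesis. Independence of $p$ and $\bm{\Lambda}$ is never used, since all manipulations are pathwise on the single random variable $\det(\bm{I}_{N_R}+p\bm{\Lambda})$. Hence no genuinely new obstacle arises relative to the deterministic-power Theorem~\ref{theorem-s-n-csi-transmitter}; the random power simply rides inside the product, and the main (modest) subtlety is the bookkeeping of constants under a random number of factors.
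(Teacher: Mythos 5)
Your proposal is correct and follows essentially the same route as the paper: the paper simply declares this theorem "analogous to the deterministic power scenario," whose proof (Appendix~\ref{proof-theorem-s-n-csi-transmitter}) uses exactly your steps — diagonalization, the identity $\mathbb{E}[e^{\theta c}]<\infty$ with the logarithm turning the determinant into a power, and the sandwich in $\Delta$ with the bounded rank to remove the constant. Your explicit bookkeeping of the random rank via the deterministic bound $\min(N_R,N_T)$ is a minor tidying of the same argument, not a different approach.
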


\begin{proof}
The proof is analog to the proof of the deterministic power scenario.
\end{proof}

\begin{remark}
Considering the inequality $  \prod_{i=1}^{r(\bm{H})} \qty( 1+ {p} \lambda_i )^\theta  \le  \qty( 1+ {p} \lambda_{\max} )^{\theta {r(\bm{H})}}  $, where $\lambda_{\max} = \max\limits_{1\le i \le r(\bm{H})} \lambda_i$, the sufficient and necessary condition relaxes to a sufficient condition, i.e.,
\begin{multline}
\mathbb{E} \qty[  \prod_{i=1}^{r(\bm{H})} \qty( 1+ {p} \lambda_i )^\theta  ] < \infty,\ \exists \theta >0 \\
\impliedby
\mathbb{E} \qty[ \qty( 1+ {p} \lambda_{\max} )^{\theta } ] < \infty, \ \exists \theta > 0.
\end{multline}
\end{remark}

We present a set of sufficient conditions for the light-tailed capacity and their relationships.
The proof is in Appendix \ref{proof-theorem-sufficient-map}.

\begin{theorem}\label{theorem-sufficient-map}
Consider a flat MIMO channel $\bm{H} : \Omega \rightarrow \mathbb{C}^{N_{R}\times N_T}$ with random power fluctuation $p: \Omega \rightarrow \mathbb{R}$. 
We have the sufficient condition chain for the light-tailed property of the capacity.
\begin{center}
\begin{tikzpicture}[>=implies]
\matrix (m) [matrix of math nodes, nodes in empty cells, row sep=1em, column sep=1em, 
text height=1.5ex, text depth=0.25ex] 
        { &  & \circled{$0$}  & &     \\
         &  & \circled{$1$}  &  &  \\
       &   \circled{$4$} &   & \circled{$2$} &  \\
      \circled{$5$}  &   &   &  & \circled{$3$} \\
};

\draw[double,<->] (m-2-3) -- (m-1-3);
\draw[double,->] (m-3-2) -- (m-2-3);
\draw[double,->] (m-3-4) -- (m-2-3);
\draw[double,->] (m-4-1) -- (m-3-2);
\draw[double,->] (m-4-5) -- (m-3-4);
\end{tikzpicture}
\end{center}
\begin{IEEEeqnarray}{rCl}
\circled{$0$} &:=& 
\mathbb{E} \qty[ \qty( 1 + p \lambda_{\max}  )^\theta ] < \infty, \ \exists \theta>0  \\
\circled{$1$} &:=&
\mathbb{E} \qty[ \qty( 1 + p \Tr\qty[ \bm{H}\bm{H}^\ast ]  )^\theta ] < \infty, \ \exists \theta>0 \\
\circled{$2$} &:=&
\mathbb{E} \qty[ \qty(  \Tr\qty[ \bm{I} + p \bm{H}\bm{H}^\ast ]  )^\theta ] < \infty, \ \exists \theta>0 \\
\circled{$3$} &:=&
\mathbb{E} \qty[ \qty(  \Tr\qty[ e^{ p \bm{H}\bm{H}^\ast } ]  )^\theta ] < \infty, \ \exists \theta>0 
\\
\circled{$4$} &:=&
\mathbb{E} \qty[ e^{\theta p \lambda_{\max}} ] < \infty,\ \exists \theta>0 \\
\circled{$5$} &:=&
\mathbb{E} \qty [ \Tr \qty[ e^{\theta p \bm{H} \bm{H}^\ast  } ] ] < \infty,\ \exists \theta>0 \\
\circled{$6$} &:=& 
\mathbb{E} \qty[ \qty(  p \lambda_{\max}  )^\theta ] < \infty, \ \exists \theta>0  \\
\circled{$7$} &:=&
\mathbb{E} \qty[ e^{\theta p \Tr\qty[ \bm{H}\bm{H}^\ast ]} ] < \infty, \ \exists \theta >0 
\end{IEEEeqnarray}
Note we have the equivalent conditions \circled{$0$} $\iff$ \circled{$6$} and \circled{$4$} $\iff$ \circled{$7$}.
Particularly, letting $p=1$, we obtain the corresponding sufficient conditions for the deterministic power fluctuation scenario of arbitrary channel side information.
\end{theorem}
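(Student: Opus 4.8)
\textit{Proof proposal.} The plan is to verify every arrow of the diagram by an elementary eigenvalue estimate and then to reduce the conclusion to the preceding lemma, which already shows that the root condition \circled{$0$} forces the light-tailed property of the upper bound $c=a\log_2(1+bp\lambda_{\max})$. Throughout I would lean on two pointwise (in $\omega\in\Omega$) bounds, $\lambda_{\max}\le\Tr[\bm{H}\bm{H}^\ast]\le r(\bm{H})\,\lambda_{\max}\le \min\{N_R,N_T\}\,\lambda_{\max}$, together with the fact that, since $\bm{H}\bm{H}^\ast$ is Hermitian with eigenvalues $\lambda_i\ge0$, the matrices $\bm{I}+p\bm{H}\bm{H}^\ast$ and $e^{p\bm{H}\bm{H}^\ast}$ have eigenvalues $1+p\lambda_i$ and $e^{p\lambda_i}$, so that $\Tr[\bm{I}_{N_R}+p\bm{H}\bm{H}^\ast]=N_R+p\Tr[\bm{H}\bm{H}^\ast]$ and $\Tr[e^{p\bm{H}\bm{H}^\ast}]=\sum_i e^{p\lambda_i}$.

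First I would dispatch the equivalences. For \circled{$0$}$\iff$\circled{$1$}: the bound $1+p\lambda_{\max}\le1+p\Tr[\bm{H}\bm{H}^\ast]$ gives one direction after raising to the power $\theta$ and taking expectations, and $1+p\Tr[\bm{H}\bm{H}^\ast]\le\min\{N_R,N_T\}(1+p\lambda_{\max})$ gives the converse up to a constant factor. The equivalences \circled{$0$}$\iff$\circled{$6$} and \circled{$4$}$\iff$\circled{$7$} are of the same flavour and mirror parts~2 and~3 of Lemma~\ref{lemma-tail-equivalence}: for the former use $(p\lambda_{\max})^\theta\le(1+p\lambda_{\max})^\theta\le2^\theta\bigl(1+(p\lambda_{\max})^\theta\bigr)$; for the latter use $\lambda_{\max}\le\Tr[\bm{H}\bm{H}^\ast]\le\min\{N_R,N_T\}\lambda_{\max}$ inside the exponent, rescaling $\theta$ by the dimension factor. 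Then the four remaining arrows: \circled{$2$}$\Rightarrow$\circled{$1$} because $\Tr[\bm{I}_{N_R}+p\bm{H}\bm{H}^\ast]=N_R+p\Tr[\bm{H}\bm{H}^\ast]\ge1+p\Tr[\bm{H}\bm{H}^\ast]$; \circled{$3$}$\Rightarrow$\circled{$2$} because the scalar inequality $1+p\lambda_i\le e^{p\lambda_i}$ summed over eigenvalues yields $\Tr[\bm{I}+p\bm{H}\bm{H}^\ast]\le\Tr[e^{p\bm{H}\bm{H}^\ast}]$; \circled{$5$}$\Rightarrow$\circled{$4$} because every summand is nonnegative, so $\Tr[e^{\theta p\bm{H}\bm{H}^\ast}]=\sum_i e^{\theta p\lambda_i}\ge e^{\theta p\lambda_{\max}}$ (equivalently, one may quote the matrix Laplace-transform tail bound recalled in the earlier remark); and \circled{$4$}$\Rightarrow$\circled{$1$} because $1+p\Tr[\bm{H}\bm{H}^\ast]\le1+\min\{N_R,N_T\}\,p\lambda_{\max}$ and, for any $\theta'>0$ and $\theta>0$, there is a constant $C$ with $(1+my)^{\theta'}\le C\,e^{\theta y}$ for all $y\ge0$, so $\mathbb{E}[(1+p\Tr[\bm{H}\bm{H}^\ast])^{\theta'}]\le C\,\mathbb{E}[e^{\theta p\lambda_{\max}}]<\infty$.

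To finish, I would invoke the random-power version of Lemma~\ref{lemma-tail-equivalence} stated just above the theorem: condition \circled{$0$}, equivalently $\mathbb{E}[(p\lambda_{\max})^\theta]<\infty$ for some $\theta>0$, is equivalent to $\overline{F}_{p\lambda_{\max}}(x)=O(x^{-\theta})$ and hence to $\overline{F}_c(x)=O(e^{-\theta x})$ for the upper bound $c=a\log_2(1+bp\lambda_{\max})$; since the true capacity is stochastically dominated by this bound, its distribution is light-tailed. Walking the diagram, each of \circled{$1$},\dots,\circled{$5$} implies \circled{$0$}, so every node is indeed a sufficient condition, and the specialization $p\equiv1$ recovers the deterministic-power statements for arbitrary channel side information.

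I expect the only real subtlety to be bookkeeping of the exponent: the implications are existential in $\theta$, so at each step one must permit $\theta$ to change and absorb the fixed (a.s.\ bounded) dimension and rank factors $\min\{N_R,N_T\}$, rather than trying to carry a single $\theta$ through the whole chain; done carefully, though, each step is a one-line inequality followed by monotonicity of the expectation.
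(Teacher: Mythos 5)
Your proposal is correct and takes essentially the same route as the paper: each arrow of the chain is verified by a pointwise spectral/trace inequality (with $\theta$ allowed to change and the dimension/rank constants absorbed), and the conclusion is reduced to the preceding lemma, which converts condition \circled{$0$} into a light tail via the upper bound $c=a\log_{2}(1+b\,p\,\lambda_{\max})$. The only deviations are cosmetic---for instance you obtain \circled{$3$}$\Rightarrow$\circled{$2$} by summing $1+p\lambda_i\le e^{p\lambda_i}$ over the eigenvalues rather than via the paper's matrix transfer-rule ordering, and you prove \circled{$0$}$\iff$\circled{$6$} by a direct two-sided bound---which if anything matches the stated conditions more directly.
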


\begin{remark}
Particularly, we have
$
\qty( \Tr \qty[e^{\bm{X}}] )^{\vartheta} \nleq  \Tr \qty [ e^{\vartheta \bm{X}} ],\ \exists  \vartheta>0,
$ 
$
\qty( \Tr \qty[{\bm{X}}] )^{\vartheta} \nleq  \Tr \qty [ { \bm{X}}^{\vartheta} ],\ \exists  \vartheta>0,
$
$
e^{\vartheta \Tr[\bm{X}]} \nleq \Tr\qty[ e^{\vartheta\bm{X}} ], \ \exists  \vartheta>0,
$
where $\bm{X}\in\mathfrak{X}_{\ge 0}$. For example, $\bm{X} = \mqty[ 1 & 0\\ 0 & 1]$ and $\vartheta = 2$.
\end{remark}

\begin{theorem}
Consider the frequency-selective MIMO channel with random power fluctuation. If each sub-channel satisfies any one of the sufficient conditions in Theorem \ref{theorem-sufficient-map}, then the distribution of the  overall channel capacity is light-tailed. 
\end{theorem}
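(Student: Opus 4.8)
The plan is to reduce the frequency-selective case to the flat case of Theorem~\ref{theorem-sufficient-map} by exploiting the block-diagonal structure of $\bm{\mathcal{H}}$, and then to control the mutual dependence of the sub-channel capacities through a convexity inequality rather than an independence assumption.

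First I would make the overall capacity explicit. Since $\bm{\mathcal{H}} = \mathrm{diag}(\bm{H}_1,\ldots,\bm{H}_N)$, the matrix $\bm{I}_{N_R N} + \frac{1}{N_T N_0 W}\, p\, \bm{\mathcal{H}}\bm{\mathcal{H}}^\ast$ is block diagonal with $j$-th block $\bm{I}_{N_R} + \frac{1}{N_T N_0 W}\, p\, \bm{H}_j\bm{H}_j^\ast$, so its determinant factorizes over the blocks and the logarithm turns the product into a sum:
\begin{equation}
c \;=\; \frac{1}{N}\sum_{j=1}^{N} c_j,\qquad c_j \;=\; W\log_2\det\!\Big(\bm{I}_{N_R} + \frac{1}{N_T N_0 W}\, p\, \bm{H}_j\bm{H}_j^\ast\Big),
\end{equation}
where each $c_j$ is exactly the flat-channel capacity of the $j$-th sub-channel under the common random power $p$. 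By hypothesis each sub-channel satisfies one of the conditions $\circled{$0$}$--$\circled{$7$}$; by Theorem~\ref{theorem-sufficient-map} this forces each $c_j$ to be light-tailed, i.e.\ there is $\theta_j>0$ with $\mathbb{E}\qty[e^{\theta_j c_j}]<\infty$.

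Next I would homogenize the exponents and avoid assuming the $c_j$ are independent. Put $\theta := \min_{1\le j\le N}\theta_j>0$. Since $c_j\ge 0$ (the identity matrix inside the determinant, cf.\ the first Remark of Sec.~\ref{tail-property}), $e^{\theta c_j}\le e^{\theta_j c_j}$, hence $\mathbb{E}\qty[e^{\theta c_j}]<\infty$ for every $j$ simultaneously. Writing $e^{\theta c}=\prod_{j=1}^{N}\qty(e^{\theta c_j})^{1/N}$ and applying the arithmetic--geometric mean inequality (equivalently, Hölder with exponents $N,\ldots,N$),
\begin{equation}
\mathbb{E}\qty[e^{\theta c}] \;\le\; \frac{1}{N}\sum_{j=1}^{N}\mathbb{E}\qty[e^{\theta c_j}] \;<\;\infty,
\end{equation}
and Markov's inequality then gives $\overline{F}_c(x)=O\qty(e^{-\theta x})$, i.e.\ the capacity is light-tailed.

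The main obstacle is precisely this dependence among the sub-channel capacities: they share the common random power $p$ (and possibly correlated fading), so one cannot multiply moment generating functions. The device above---passing to a common exponent using $c_j\ge 0$, then replacing $\prod_j e^{\theta c_j/N}$ by its arithmetic mean---sidesteps this without any independence hypothesis. A secondary point to check carefully is the determinant factorization itself, which holds because the block-diagonal power allocation decouples the per-sub-channel contributions; if one prefers not to rely on exact decoupling, the same argument runs verbatim on the upper bound $c\le \frac{1}{N}\sum_j c_j$ obtained by discarding cross terms, so this is routine once the block structure is written out.
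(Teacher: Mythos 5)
Your proof is correct and follows essentially the same route as the paper, which simply invokes the decomposition of the overall capacity into the sum of per-sub-channel capacities and the closure of the light-tail property under sums. Your AM--GM (H\"older) step with a common exponent $\theta=\min_j\theta_j$ just makes explicit why that closure holds without any independence assumption among the sub-channel capacities, a detail the paper leaves implicit.
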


\begin{proof}
The proof follows that the light-tail property is preserved for the sum of random variables.
\end{proof}

\begin{remark}
The channel model $\bm{H}\bm{H}^\ast$ is the product formulation of the large-scale fading and small-scale fading effects.
\end{remark}

\begin{remark}
The tail property of the capacity is determined by the product of the random power and the random eigenvalues of the channel matrix.
Thus, it is necessary to investigate the tail property of the product of two random variables.
It is reasonable to assume independence between these two random variables, because the channel side information is not necessarily known at the transmitter.
On the other hand, it is interesting to take into account the dependence for refinement.
\end{remark}

\subsubsection{Tail Distribution of Random Variable Arithmetic}

We study the tail property of the product distribution and sum distribution of random variables, or the impact of the tail property of one random variable on the overall product or sum distribution.

We consider the nonnegative functions, $f(x)$ and $g(x)$, and define the asymptotic notations, $f(x) = O(g(x)) \iff \limsup\limits_{x\rightarrow \infty} \frac{f(x)}{g(x)} < \infty$, $f(x) = \Omega(g(x)) \iff \liminf\limits_{x\rightarrow \infty} \frac{f(x)}{g(x)} > 0$, $f(x) = \Theta(g(x)) \iff f(x) = O(g(x)) \cap f(x) = \Omega(g(x))$, $f(x) = o(g(x)) \iff \lim\limits_{x\rightarrow \infty} \frac{f(x)}{g(x)} = 0$, $f(x) = \omega(g(x)) \iff \lim\limits_{x\rightarrow \infty} \frac{f(x)}{g(x)} = \infty$, and $f(x) \sim g(x) \iff \lim\limits_{x\rightarrow \infty} \frac{f(x)}{g(x)} = 1$.

We define a class of functions $\mathfrak{F}$, $\forall \varphi \in \mathfrak{F}$, $\varphi : \mathbb{R}_{\ge 0}\rightarrow \mathbb{R}_{\ge 0}$, such that $\lim\limits_{x\rightarrow\infty}\varphi(x)=\infty$ and $\lim\limits_{x\rightarrow\infty}\frac{\varphi(x)}{x}=0$, i.e.,
\begin{equation}
\mathfrak{F} = \qty{ \varphi : \mathbb{R}_{\ge 0}\rightarrow \mathbb{R}_{\ge 0}; \lim\limits_{x\rightarrow\infty}\varphi(x)=\infty, \lim\limits_{x\rightarrow\infty}\frac{\varphi(x)}{x}=0 }.
\end{equation}
For example, $\varphi(x) = x^\alpha$, $0<\alpha<1$, or $\varphi(x) = \log(x)$.
This class of functions are useful in decomposing the distribution function of the product or sum
of random variables.

We study the asymptotic behavior of the composition of the function $\mathfrak{F}$ and some classes of tail distributions, e.g., the light-tail distribution, the regularly varying distribution $F \in \mathcal{R}_{\ge 0}$, and the long-tail distribution $F \in \mathcal{L}$ (containing the subexponential distribution as a subset).
We present the proof of the following results in Appendix \ref{proof-lemma-tail-relationships}.

\begin{lemma}\label{lemma-tail-relationships}
Consider the independent random variables $X_i: \Omega \rightarrow \mathbb{R}_{\ge 0}$, $i\in\{ 1, 2 \}$.
\begin{enumerate}[wide, labelwidth=!, labelindent=0pt]
\item
If $F_1 \in \mathcal{L}$, i.e., $\lim\limits_{x\rightarrow \infty} \frac{ \overline{F}_{X_1}(x - y) }{ \overline{F}_{X_1}(x) } =1 $, $\forall y>0$, then, $\overline{F}_{X_1}\qty(x-\varphi(x)) \sim  \overline{F}_{X_1}(x) $ and $\overline{F}_{X_1}\qty( \log \frac{ x }{\varphi(x)}) \sim  \overline{F}_{X_1}(\log x) $, $\forall \varphi \in \mathfrak{F}$.
We specify $\varphi(x) = x^\alpha$, $0<\alpha<1$.
\begin{enumerate}[leftmargin=*]
\item
If 
$\overline{F}_{X_1}(x) = \Omega\qty( x^{-\theta_1} )$, $\theta_1 >0$, and 
$\overline{F}_{X_2}(x) = O\qty(e^{-\theta_2 x}) $, $ \theta_2>0$, then, $\overline{F}_{X_2}\qty(\varphi(x)) = O\qty( \overline{F}_{X_1}(x) )$.

\item
If $\overline{F}_{X_1}(x) = \Omega\qty( x^{-\theta_1} )$, $\theta_1 >0$, and $\overline{F}_{X_2}(x) = O\qty(x^{-\theta_2 }) $, $ \theta_2>0$, and $\alpha \theta_2 > \theta_1$, then, $\overline{F}_{X_2}\qty(\varphi(x)) = O\qty( \overline{F}_{X_1}(x) )$.
\end{enumerate}

\item
If $F_1 \in \mathcal{R}$, i.e., $\overline{F}_{X_1}(x) = L_1(x) x^{-\theta_1}$, where $ \theta_1 \ge 0$, and $\lim\limits_{x\rightarrow \infty}\frac{L_1(tx)}{L_1(x)} = 1$, $\forall t>0$, then, $\overline{F}_{X_1}\qty(x-\varphi(x)) \sim \overline{F}_{X_1}(x)$, $\forall \theta_1 \ge 0$, and $\overline{F}_{X_1}\qty( \frac{ x}{\varphi(x)} )= \omega\qty( \overline{F}_{X_1}(x) ) $, $\forall \theta_1>0$ and $ \overline{F}_{X_1}\qty( \frac{ x}{\varphi(x)} ) \sim \overline{F}_{X_1}(x) $ for $ \theta_1 = 0$, $\forall \varphi \in \mathfrak{F}$.
We specify $\varphi(x) = x^\alpha$, $0<\alpha<1$.
\begin{enumerate}[leftmargin=*]
\item
If $\overline{F}_{X_2}(x) = O\qty(e^{-\theta_2 x}) $, $ \theta_2>0$, then, $\overline{F}_{X_2}\qty(\varphi(x)) = O\qty( \overline{F}_{X_1}(x) )$.
If $\overline{F}_{X_2}(x) = \Theta\qty(e^{-\theta_2 x}) $, $ \theta_2>0$, then, $\overline{F}_{X_2}\qty(\varphi(x)) = o\qty( \overline{F}_{X_1}(x) )$.

\item
If $\overline{F}_{X_2}(x) = O\qty(x^{-\theta_2}) $, $ \theta_2>0$, and $\alpha \theta_2 > \theta_1$, then, $\overline{F}_{X_2}\qty(\varphi(x)) = O\qty( \overline{F}_{X_1}(x) )$.
If $\overline{F}_{X_2}(x) = \Theta\qty(x^{-\theta_2 }) $, $ \theta_2>0$, and $\alpha \theta_2 > \theta_1$, then, $\overline{F}_{X_2}\qty(\varphi(x)) = o\qty( \overline{F}_{X_1}(x) )$.
\end{enumerate}

\item
If $ \overline{F}_{X_1}(x)  = \Theta\qty(x^{-\theta_1})$, $\theta_1 >0$, then, $\overline{F}_{X_1}\qty(x-\varphi(x)) = \Theta\qty( \overline{F}_{X_1}(x) ) $ and $\overline{F}_{X_1}\qty( \frac{ x}{\varphi(x)} )= \omega\qty( \overline{F}_{X_1}(x) ) $, $\forall \varphi \in \mathfrak{F}$.
We specify $\varphi(x) = x^\alpha$, $0<\alpha<1$.
\begin{enumerate}[leftmargin=*]
\item
If $\overline{F}_{X_2}(x) = O\qty(e^{-\theta_2 x}) $, $\theta_2>0$, then, $\overline{F}_{X_2}\qty(\varphi(x)) = O\qty( \overline{F}_{X_1}(x) )$.
If $\overline{F}_{X_2}(x) = \Theta\qty(e^{-\theta_2 x}) $, $\theta_2>0$, then, $\overline{F}_{X_2}\qty(\varphi(x)) = o\qty( \overline{F}_{X_1}(x) )$.

\item
If $\overline{F}_{X_2}(x) = O\qty(x^{-\theta_2 }) $, $\theta_2>0$, and $\alpha \theta_2 \ge \theta_1$, then, $\overline{F}_{X_2}\qty(\varphi(x)) = O\qty( \overline{F}_{X_1}(x) )$.
If $\overline{F}_{X_2}(x) = \Theta\qty(x^{-\theta_2 }) $, $\theta_2>0$, and $\alpha \theta_2 > \theta_1$, then, $\overline{F}_{X_2}\qty(\varphi(x)) = o\qty( \overline{F}_{X_1}(x) )$.
\end{enumerate}

\item
If $ \overline{F}_{X_1}(x)  = \Theta\qty(e^{-\theta_1 x})$, $\theta_1 >0$, then, $\overline{F}_{X_1}\qty( \frac{ x}{\varphi(x)} )= \omega\qty( \overline{F}_{X_1}(x) ) $ and $\overline{F}_{X_1}\qty(x-\varphi(x))= \omega\qty( \overline{F}_{X_1}(x) ) $, $\forall \varphi \in \mathfrak{F}$.
\begin{enumerate}[leftmargin=*]
\item
If $\overline{F}_{X_2}(x) = \Theta\qty(e^{-\theta_2 x}) $, $\theta_2>0$, then, $\overline{F}_{X_2}\qty(\varphi(x)) = \omega\qty( \overline{F}_{X_1}(x) )$, $\forall \varphi \in \mathfrak{F}$.

\item
If $\overline{F}_{X_2}(x) = \Theta\qty(x^{-\theta_2 }) $, $\theta_2>0$, then, $\overline{F}_{X_2}\qty(\varphi(x)) = \omega\qty( \overline{F}_{X_1}(x) )$, $\forall \varphi \in \mathfrak{F}$.
\end{enumerate}
\end{enumerate}
\end{lemma}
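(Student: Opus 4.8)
The plan is to read the lemma as a catalogue of one‑dimensional tail asymptotics and to prove it one regime of $X_1$ at a time: long‑tailed ($F_1\in\mathcal{L}$), regularly varying ($F_1\in\mathcal{R}$), exactly polynomial ($\overline{F}_{X_1}=\Theta(x^{-\theta_1})$), and exactly exponential ($\overline{F}_{X_1}=\Theta(e^{-\theta_1 x})$). Within each regime there are two independent ingredients: an \emph{insensitivity estimate} describing how $\overline{F}_{X_1}$ reacts to the perturbations $x\mapsto x-\varphi(x)$, $x\mapsto x/\varphi(x)$ and $x\mapsto\log(x/\varphi(x))$ for $\varphi\in\mathfrak{F}$, and a \emph{cross comparison} of $\overline{F}_{X_2}(\varphi(x))$ against $\overline{F}_{X_1}(x)$ specialised to $\varphi(x)=x^{\alpha}$. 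Note that the independence of $X_1,X_2$ plays no role in any of the stated conclusions; it is recorded because these estimates are designed to be fed into the downstream product/sum tail bounds, so the proof itself is purely real‑variable analysis of the two marginal tails.

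First I would dispatch the insensitivity estimates. For $F_1\in\mathcal{L}$ the defining relation $\overline{F}_{X_1}(x-y)/\overline{F}_{X_1}(x)\to1$ together with monotonicity of $\overline{F}_{X_1}$ sandwiches $\overline{F}_{X_1}(x-\varphi(x))$ between $\overline{F}_{X_1}(x)$ and $\overline{F}_{X_1}(x-y)$ for each fixed $y$, giving $\overline{F}_{X_1}(x-\varphi(x))\sim\overline{F}_{X_1}(x)$, and the logarithmic variant follows similarly after the change of variable $u=\log x$. For $F_1\in\mathcal{R}$, write $\overline{F}_{X_1}(x)=L_1(x)x^{-\theta_1}$: the factor $(1-\varphi(x)/x)^{-\theta_1}\to1$ and $L_1(x-\varphi(x))/L_1(x)\to1$ by the Uniform Convergence Theorem for slowly varying functions, which gives the $x-\varphi(x)$ claim for every $\theta_1\ge0$; for $x/\varphi(x)$ the algebraic factor becomes $(x/\varphi(x))^{-\theta_1}=\varphi(x)^{\theta_1}x^{-\theta_1}$, so the ratio carries the diverging term $\varphi(x)^{\theta_1}$, and Potter's bounds show the slowly varying quotient cannot undo it, yielding $\omega(\cdot)$ when $\theta_1>0$ and leaving only the slowly varying part (hence $\sim$) when $\theta_1=0$. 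Items~3 and~4 are the same computations with the slowly varying factor replaced by two‑sided constants: for $\overline{F}_{X_1}=\Theta(x^{-\theta_1})$ one gets $\Theta(\cdot)$ under $x-\varphi(x)$ and $\omega(\cdot)$ under $x/\varphi(x)$, and for $\overline{F}_{X_1}=\Theta(e^{-\theta_1 x})$ both substitutions produce a blow‑up of order $e^{\theta_1\varphi(x)}\to\infty$, i.e.\ $\omega(\cdot)$ in both cases.

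Next come the cross comparisons, all with $\varphi(x)=x^{\alpha}$, $0<\alpha<1$. If $\overline{F}_{X_2}$ is at most exponential then $\overline{F}_{X_2}(x^{\alpha})=O(e^{-\theta_2 x^{\alpha}})$, which is dominated by every negative power of $x$; hence it is $O$, in fact $o$, of $\overline{F}_{X_1}(x)$ whenever the latter is bounded below by a power (the $\Omega$, $\mathcal{R}$, or $\Theta$ polynomial cases), while against $\overline{F}_{X_1}=\Theta(e^{-\theta_1 x})$ the comparison reverses because $x^{\alpha}=o(x)$, giving $\omega(\cdot)$. If instead $\overline{F}_{X_2}$ is at most a power, $\overline{F}_{X_2}(x^{\alpha})=O(x^{-\alpha\theta_2})$, and $x^{-\alpha\theta_2}$ beats $x^{-\theta_1}$ precisely under the stated exponent condition $\alpha\theta_2>\theta_1$ (or $\ge$ in the $O$‑only variant of item~3), delivering the $O$ statements, with the strict inequality together with a $\Theta$ hypothesis on $X_2$ upgrading them to $o$; against an exponential $\overline{F}_{X_1}$ it is again $\omega(\cdot)$. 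Each individual sub‑item is then obtained by substituting the relevant insensitivity/comparison bound into the definition of $O$, $o$, $\Theta$ or $\omega$.

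The main obstacle, and the only place the argument is more than bookkeeping, is controlling slowly varying factors when the argument is $x/\varphi(x)$ rather than a bounded multiple of $x$: the Uniform Convergence Theorem does not apply on its own, so one must lean on Potter's bounds and accept an $\omega$ (rather than $\sim$) in the $\theta_1>0$ regularly varying case — which is exactly what the statement claims — or else exploit the explicit form $\varphi(x)=x^{\alpha}$ to compute the ratio by hand. The secondary nuisance is lining up the strict‑versus‑nonstrict exponent thresholds ($\alpha\theta_2>\theta_1$ versus $\ge$, and $O$ versus $\Theta\Rightarrow o$) consistently across all four regimes; this is routine but error‑prone and deserves a careful pass.
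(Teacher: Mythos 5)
Your overall architecture matches the paper's proof in Appendix F: treat each regime of $\overline{F}_{X_1}$ separately, prove an insensitivity estimate for the substitutions $x-\varphi(x)$ and $x/\varphi(x)$, and then compare $\overline{F}_{X_2}(x^{\alpha})$ with $\overline{F}_{X_1}(x)$ by unpacking the constants in the definitions of $O$, $\Theta$, $\omega$, $o$. Your treatment of the regularly varying, $\Theta(x^{-\theta_1})$ and $\Theta(e^{-\theta_1 x})$ regimes and of all the sub-item cross comparisons is correct, and the explicit appeal to the Uniform Convergence Theorem and Potter bounds in the $\mathcal{R}$ case with $\theta_1>0$ is actually tighter than the paper's computation. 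You are also right that independence of $X_1,X_2$ is never used.

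The genuine gap is the first insensitivity step, for $F_1\in\mathcal{L}$. Your sandwich is backwards: since $\varphi(x)\to\infty$, for every fixed $y$ one eventually has $\varphi(x)>y$, hence $x-\varphi(x)<x-y$ and monotonicity gives $\overline{F}_{X_1}(x-\varphi(x))\ \ge\ \overline{F}_{X_1}(x-y)\ \ge\ \overline{F}_{X_1}(x)$; the quantity you want to control lies \emph{above} both reference values, so the long-tail hypothesis provides no upper bound and the sandwich never closes. The failure is not only in the write-up: insensitivity to fixed shifts does not imply insensitivity to every $\varphi\in\mathfrak{F}$. For instance $\overline{F}_{X_1}(x)=e^{-\sqrt{x}}$ is long-tailed, yet with $\varphi(x)=x^{3/4}\in\mathfrak{F}$ one gets $\overline{F}_{X_1}(x-\varphi(x))/\overline{F}_{X_1}(x)=\exp\!\left(\sqrt{x}-\sqrt{x-x^{3/4}}\right)\to\infty$; the correct general fact is only that \emph{some} $h$ with $h(x)\to\infty$, $h(x)=o(x)$, is insensitive. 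The same objection applies to your claim that the logarithmic variant "follows similarly after $u=\log x$" (the shift becomes $\log\varphi(e^{u})$, which is unbounded), and to letting $L_1(x/\varphi(x))/L_1(x)\to1$ in the $\theta_1=0$ case of item 2. To be fair, the paper's own proof simply asserts these limits for all $\varphi\in\mathfrak{F}$ without argument, so you are not deviating from its route here; but the justification you supply is invalid, and a genuine repair would require restricting the class of $\varphi$ (or strengthening the hypothesis on $F_1$, e.g.\ to an explicit $\varphi$-insensitivity or a regular-variation assumption on $\overline{F}_{X_1}\circ\log$) rather than invoking long-tailedness alone.
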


\begin{remark}
It is interesting to notice that, from the light-tail to the heavy-tail distributions, the tail behaviors go from $\overline{F}_{X_1}\qty( \frac{ x}{\varphi(x)} )= \omega\qty( \overline{F}_{X_1}(x) ) $ and $\overline{F}_{X_1}\qty( { x} - {\varphi(x)} )= \omega\qty( \overline{F}_{X_1}(x) ) $ to $\overline{F}_{X_1}\qty( \frac{ x}{\varphi(x)} ) \sim \overline{F}_{X_1}(x) $ and $\overline{F}_{X_1}\qty( { x} - {\varphi(x)} ) \sim \overline{F}_{X_1}(x) $, $\forall \varphi \in \mathfrak{F}$.
It is impossible that $\overline{F}_{X_1}\qty( \frac{ x}{\varphi(x)} )= o\qty( \overline{F}_{X_1}(x) ) $ or $\overline{F}_{X_1}\qty( { x} - {\varphi(x)} )= o\qty( \overline{F}_{X_1}(x) ) $, because the complementary cumulative distribution function is non-increasing.
\end{remark}

\begin{remark}
It is known that \cite{embrechts1997modelling} the distribution function $F \in \mathcal{L}$ if and only if $\overline{F}\circ \log \in \mathcal{R}_{0}$, where $(f\circ g)(x) = f(g(x))$.
The distribution function $F$ is regularly varying, i.e., $F\in \mathcal{R}_{>0}$, if and only if, there exists a positive function, $a(t)$, such that \cite{falk2010laws}
\begin{equation}
\lim_{t \rightarrow \infty} \frac{ F(tx) - F(t) }{a(t)} = \frac{1-x^{\alpha}}{\alpha},\ x>0.
\end{equation}
These distribution functions have polynomially decaying tail.
Letting $\alpha \rightarrow 0$, we obtain \cite{falk2010laws}
\begin{equation}
\lim_{t \rightarrow \infty} \frac{ F(tx) - F(t) }{a(t)} = \log(x),\ x>0,
\end{equation}
which characterizes a class of super-heavy distribution functions with slowly varying tails $\mathcal{R}_{0}$.
\end{remark}

\begin{remark}
It is interesting to define and study a new tail behavior, i.e., $\limsup\limits_{x\rightarrow \infty} e^{\epsilon \varphi(x)} \overline{F}(x) = \infty$, $\forall \epsilon>0$, $\exists \varphi \in \mathfrak{F}$.
Note the function $\overline{F}(x)$ is heavy-tailed \cite{foss2013introduction}, if and only if $\limsup\limits_{x\rightarrow \infty} e^{\epsilon x} \overline{F}(x) = \infty$, $\forall \epsilon>0$.
\end{remark}

We present a necessary condition for the product of random variables to be light-tailed.

\begin{theorem}
Consider the independent random variables $X_i: \Omega \rightarrow \mathbb{R}$, $i\in\{ 1, \ldots, N \}$. 
The necessary condition for the existence of the moment generating function of the product of the random variables, $X = \prod_{i=1}^{N} X_i$, is the existence of the means of all the random variables, i.e.,
\begin{multline}
\mathbb{E} \qty[ e^{ \theta X} ] < \infty,\ \exists \theta >0 \\
\implies
\mathbb{E} \qty[ { X_i} ] < \infty,\ \forall i\in\{ 1, \ldots, N \}. 
\end{multline}
\end{theorem}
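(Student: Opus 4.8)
The plan is to isolate a single index $i$, absorb all the remaining factors into one random variable that is independent of $X_i$, and then use the Tonelli theorem to transfer finiteness of the moment generating function of the full product to a statement about the Laplace transform of $X_i$ alone. Concretely, fix $i$ and set $Y_i := \prod_{j\neq i} X_j$; by the independence of $X_1,\dots,X_N$, the variables $X_i$ and $Y_i$ are independent. Writing $\phi_i(s) := \mathbb{E}\qty[e^{s X_i}] \in (0,\infty]$ for the two-sided Laplace transform of $X_i$ and using that $(x,y)\mapsto e^{\theta x y}$ is nonnegative and measurable, Tonelli gives
\begin{equation}
\infty > \mathbb{E}\qty[ e^{\theta X} ] = \mathbb{E}\qty[ e^{\theta X_i Y_i} ] = \int \phi_i(\theta y)\, dF_{Y_i}(y).
\end{equation}
Since the integrand is $[0,\infty]$-valued and the integral is finite, $\phi_i(\theta y)<\infty$ for $F_{Y_i}$-almost every $y$.

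The next step is to extract from this a \emph{single} nonzero argument at which $\phi_i$ is finite. Provided $Y_i$ is not almost surely zero --- which holds in the setting of interest, where the $X_j$ are almost surely positive, since then $\Pr(Y_i\neq 0)=\prod_{j\neq i}\Pr(X_j\neq 0)>0$ --- the almost-everywhere statement above yields some $y_0\neq 0$ with $\phi_i(\theta y_0)<\infty$; put $s_0:=\theta y_0$. Then $\mathbb{E}\qty[e^{s_0 X_i}]<\infty$ with $s_0\neq 0$, and a linear (indeed polynomial) lower bound on the exponential closes the argument: if $s_0>0$ then $e^{s_0 x}\ge s_0 x$ for $x\ge 0$, so $\mathbb{E}\qty[X_i^+]\le s_0^{-1}\qty(\mathbb{E}\qty[e^{s_0 X_i}]+1)<\infty$; symmetrically $s_0<0$ bounds $\mathbb{E}\qty[X_i^-]$. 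In the intended nonnegative regime $X_i=X_i^+$, so $\mathbb{E}\qty[X_i]<\infty$ --- in fact every moment is finite, since $e^{s_0 x}\ge (s_0 x)^k/k!$ for $x\ge 0$. As $i$ was arbitrary, the claim follows.

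The main obstacle is exactly the passage from ``$\phi_i$ finite $F_{Y_i}$-almost everywhere'' to ``$\phi_i$ finite at a fixed nonzero point'': it can fail if the complementary factor $Y_i$ degenerates at $0$ --- if some $X_j\equiv 0$ then $X\equiv 0$, so $\mathbb{E}\qty[e^{\theta X}]=1$ while the other factors are wholly unconstrained --- so a nondegeneracy hypothesis (almost-sure positivity of the $X_j$, which is the physically relevant case of power times eigenvalues) must be assumed, either folded into the statement or noted explicitly. A secondary point worth flagging is the sign asymmetry: a one-sided exponential moment controls only one tail of $X_i$, so for genuinely signed $X_i$ one obtains only that the mean is well defined in $[-\infty,\infty]$ rather than finite, which is why nonnegativity is the natural assumption and makes $\mathbb{E}\qty[X_i]<\infty$ the clean conclusion. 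Note also that, in contrast to the light-tail-preservation facts for \emph{sums} invoked earlier, the product structure forces the argument through the auxiliary independent factor $Y_i$ and a Tonelli/slicing step rather than a direct tail comparison.
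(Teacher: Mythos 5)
Your argument is correct in the regime where the theorem actually holds (nonnegative, nondegenerate factors), but it is not the paper's argument. The paper's proof is a one-line Jensen computation: writing $\mathbb{E}[e^{\theta X}]$ as iterated expectations over the independent factors and applying Jensen's inequality to the convex map $y \mapsto e^{sy}$ at each stage yields $\mathbb{E}[e^{\theta X}] \ge e^{\theta \prod_{i=1}^{N}\mathbb{E}[X_i]}$, so finiteness of the left side forces $\prod_{i}\mathbb{E}[X_i] \le \theta^{-1}\log \mathbb{E}[e^{\theta X}] < \infty$ and hence, tacitly assuming each factor lies in $(0,\infty]$, finiteness of each mean. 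You instead freeze one index, fold the remaining factors into the independent variable $Y_i$, and use Tonelli to conclude that the Laplace transform of $X_i$ is finite at some nonzero argument $s_0=\theta y_0$, after which the elementary bound $e^{s_0 x}\ge s_0 x$ (or $e^{s_0 x}\ge (s_0 x)^k/k!$) delivers the mean and in fact every moment of $X_i$. The trade-off: the paper's route is shorter and gives a quantitative bound on the product of the means, but it leans on exactly the hypotheses it does not state --- the iterated Jensen step with possibly infinite inner means is only safe for nonnegative variables, and passing from a finite product of means to finiteness of each factor silently excludes the degenerate case you identify ($X_j\equiv 0$ makes $X\equiv 0$ while leaving the other factors unconstrained, a genuine counterexample to the statement as literally written), and it breaks for signed variables for the one-sided-tail reason you flag. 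Your slicing argument makes those hypotheses explicit, isolates where they are used (extracting a nonzero $y_0$ from the support of $Y_i$, and the sign of $s_0$), and yields the strictly stronger conclusion that each $X_i$ has all moments finite; the cost is a somewhat longer measure-theoretic detour where the paper settles for one inequality.
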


\begin{proof}
It is easy to show that $\mathbb{E}_{X}\qty[ e^{\theta X} ] = \mathbb{E}_{X_1}\qty[ \mathbb{E}_{X_2} \qty[ \ldots \mathbb{E}_{X_N}\qty[ e^{\theta X} ] ] ] \ge e^{\theta \prod_{i=1}^{N} \mathbb{E}\qty[ X_i ] } $, where the equality follows the independence assumption and the inequality follows the Jensen's inequality. 
\end{proof}

\begin{remark}
The existence of the mean of each random variables is not a sufficient condition for the existence of the moment generating function of the product. For example, there is no moment generating function for the product of standard normal random variables \cite{stojanac2017products}.
\end{remark}

We present a sufficient condition on a random variable, whose product with a fat-tail type random variable remains a fat-tail type random variable.
We present the proof in Appendix \ref{proof-theorem-fat-tail-sufficient}.

\begin{theorem}\label{theorem-fat-tail-sufficient}
Consider the independent random variables $X_i: \Omega \rightarrow \mathbb{R}_{\ge 0}$, $i\in\{ 1, \ldots, N \}$.
Suppose $\overline{F}_{X_1}(x) = \Theta\qty(x^{-\theta})$, 
and $\mathbb{E}\qty[ X_{j}^\theta ] <\infty$, $\forall 2\le j\le N$, $\exists\theta>0$.
Then, we have 
\begin{equation}
\overline{F}_{\prod_{i=1}^{N}X_i}(x) = \Theta\qty(x^{-\theta}).
\end{equation}
\end{theorem}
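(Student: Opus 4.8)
The plan is to establish the two-sided bound $\overline{F}_{\prod_{i=1}^N X_i}(x) = O(x^{-\theta})$ and $\overline{F}_{\prod_{i=1}^N X_i}(x) = \Omega(x^{-\theta})$ separately, and since by hypothesis $\overline{F}_{X_1}(x)=\Theta(x^{-\theta})$, this will yield $\overline{F}_{\prod X_i}(x)=\Theta(x^{-\theta})$. By associativity of the product it suffices to treat $N=2$, writing $X = X_1 X_2$ with $\overline{F}_{X_1}(x)=\Theta(x^{-\theta})$ and $\mathbb{E}[X_2^\theta]<\infty$; the general case follows by induction, noting that $\mathbb{E}[(\prod_{j=2}^N X_j)^\theta] = \prod_{j=2}^N \mathbb{E}[X_j^\theta] < \infty$ by independence, so the inductive step again reduces to a two-factor product of a $\Theta(x^{-\theta})$ variable with a variable having finite $\theta$-th moment.

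For the upper bound $O(x^{-\theta})$, I would condition on $X_2$ and use independence:
\begin{equation}
\overline{F}_{X_1 X_2}(x) = \mathbb{E}\left[ \overline{F}_{X_1}\!\left( x / X_2 \right) \right].
\end{equation}
Since $\overline{F}_{X_1}(y) \le C y^{-\theta}$ for $y$ large and $\overline{F}_{X_1}$ is bounded by $1$ everywhere, one gets $\overline{F}_{X_1}(x/X_2) \le C' (X_2/x)^{\theta}$ uniformly (absorbing the small-argument region into the constant, using $\overline{F}_{X_1}\le 1$), hence $\overline{F}_{X_1 X_2}(x) \le C' x^{-\theta} \mathbb{E}[X_2^\theta]$, which is $O(x^{-\theta})$ because $\mathbb{E}[X_2^\theta]<\infty$. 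Some care is needed to split the expectation over $\{X_2 \le \epsilon x\}$ and $\{X_2 > \epsilon x\}$: on the first event the tail estimate $C y^{-\theta}$ applies, and on the second event $\mathbb{P}(X_2 > \epsilon x) \le (\epsilon x)^{-\theta}\mathbb{E}[X_2^\theta]$ by Markov, so that piece is also $O(x^{-\theta})$.

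For the lower bound $\Omega(x^{-\theta})$, I would fix any constant $a>0$ with $\mathbb{P}(X_2 \ge a) = \delta > 0$ (such $a$ exists since $X_2$ is not a.s.\ zero — if $X_2\equiv 0$ the statement is degenerate and can be excluded or handled trivially), and bound
\begin{equation}
\overline{F}_{X_1 X_2}(x) \ge \mathbb{P}(X_1 \ge x/a,\ X_2 \ge a) = \overline{F}_{X_1}(x/a)\,\mathbb{P}(X_2 \ge a) \ge \delta\, c\, (x/a)^{-\theta}
\end{equation}
for $x$ large, using $\overline{F}_{X_1}(y) \ge c y^{-\theta}$ from the $\Theta$ assumption. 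This gives $\overline{F}_{X_1 X_2}(x) \ge \delta c a^{\theta} x^{-\theta}$, i.e.\ $\Omega(x^{-\theta})$. The main obstacle is purely bookkeeping in the upper bound — making the constant in $\overline{F}_{X_1}(y) = O(y^{-\theta})$ work uniformly over the whole range of $y=x/X_2$, including small $y$, which is handled by the Markov-inequality split described above; I would also remark that the independence assumption is essential for writing the conditional expectation as a clean product, and that the lower bound only uses the $\Omega$-half of the $\Theta$ hypothesis on $X_1$ together with non-degeneracy of $X_2$.
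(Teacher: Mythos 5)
Your proof is correct, and while it starts from the same identity the paper uses --- conditioning on $X_2$ to write $\overline{F}_{X_1X_2}(x)=\mathbb{E}\bigl[\overline{F}_{X_1}(x/X_2)\bigr]$ and splitting according to the size of $X_2$ --- your execution of the two halves is genuinely more elementary. The paper truncates at a sublinear level $\varphi_2(x)\in\mathfrak{F}$ and then argues by interchanging $\limsup$/$\liminf$ with the integral (a dominated-convergence-style step) to show the large-$X_2$ piece is $o(x^{-\theta})$ and the small-$X_2$ piece is $\Theta(x^{-\theta})$ with constants proportional to $\mathbb{E}[X_2^\theta]$. You instead get the upper bound from the uniform estimate $\overline{F}_{X_1}(y)\le C'y^{-\theta}$ for all $y>0$ (absorbing the bounded small-$y$ region into the constant) together with Markov's inequality at the linear threshold $\epsilon x$, and the lower bound from the single event $\{X_1\ge x/a,\ X_2\ge a\}$ with $a$ a fixed constant of positive $X_2$-probability; both steps avoid any limit interchange and only use the $\Omega$-half of the $\Theta$ hypothesis where it is needed. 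Your reduction of $N>2$ to $N=2$ by grouping $\prod_{j\ge2}X_j$ (whose $\theta$-moment factorizes by independence) is also fine and parallels the paper's iteration. One point worth keeping explicit, as you do: the conclusion's $\Omega$-half requires each $X_j$, $j\ge2$, to be non-degenerate (not a.s.\ zero), a case the stated hypothesis $\mathbb{E}[X_j^\theta]<\infty$ does not exclude; the paper only addresses this in the remark following the theorem (where $\mathbb{E}[X_j^\theta]=0$ yields $o(x^{-\theta})$), and its own lower-bound constant $C^\star\mathbb{E}[X_2^\theta]$ silently degenerates there, so your explicit exclusion is the more careful reading.
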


\begin{remark}
If there exists one and only one random variable, $\mathbb{E}\qty[ X_j^\theta ] = 0$, $ j\in \{2, 3,\ldots, N \}$, $\forall \theta>0$, then 
$
\overline{F}_{\prod_{i=1}^{N}X_i}(x) = o\qty(x^{-\theta}).
$
Letting this random variable be the last one for multiplication yields the proof.
\end{remark}

\begin{remark}
Since $f(x)\sim g(x) \implies f(x)= \Theta(g(x))$,
if $\overline{F}_{X_1}(x) \sim C_1 x^{-\theta}$, $\exists C_1>0$,
and $\mathbb{E}\qty[ X_{j}^\theta ] <\infty$, $\forall 2\le j\le N$, $\exists\theta>0$,
then, 
$
\overline{F}_{\prod_{i=1}^{N}X_i}(x) = \Theta\qty(x^{-\theta}).
$
\end{remark}

\begin{remark}
If $\overline{F}_{X_1}(x) \sim C_1 x^{-\theta}$, $\exists C_1>0$,
and $\mathbb{E}\qty[ X_{j}^\theta ] <\infty$, $\forall 2\le j\le N$, $\exists\theta>0$,
then, 
\begin{equation}
\overline{F}_{\prod_{i=1}^{N}X_i}(x) \sim \prod_{j=2}^{N} \mathbb{E}\qty[ X_{j}^\theta ] \cdot C_1 x^{-\theta}.
\end{equation}
The proof of the case $N=2$ is available in \cite{buraczewski2016stochastic} and the proof of the general case follows by iteration.
\end{remark}

We present a sufficient condition on a random variable, whose product with a fat-tail upper bounded random variable remains a fat-tail upper bounded random variable.
We present the proof in Appendix \ref{proof-theorem-fat-product-sufficient}.

\begin{theorem}\label{theorem-fat-product-sufficient}
Consider the independent random variables $X_i: \Omega \rightarrow \mathbb{R}_{\ge 0}$, $i\in\{ 1, \ldots, N \}$.
Suppose $\overline{F}_{X_1}(x) = O\qty(x^{-\theta})$, 
$\exists \varphi_j \in \mathfrak{F}$,
$\overline{F}_{X_j}\qty(\varphi_j(x)) = O\qty(x^{-\theta})$, and $\mathbb{E}\qty[ X_{j}^\theta ] <\infty$, $\forall 2\le j\le N$, $\exists\theta>0$.
Then, we have 
\begin{equation}
\overline{F}_{\prod_{i=1}^{N}X_i}(x) = O\qty(x^{-\theta}).
\end{equation}
\end{theorem}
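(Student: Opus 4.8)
The plan is to reduce to the two-factor case and then induct on $N$. For $N=2$, write $X=X_1X_2$ and condition on $X_2$; by independence,
\[
\overline{F}_{X_1X_2}(x)=\mathbb{E}\qty[\overline{F}_{X_1}\qty(x/X_2)\,\mathbf{1}_{\{X_2>0\}}],
\]
the event $\{X_2=0\}$ contributing nothing since $X_1X_2=0\le x$ for $x>0$. I would then split this expectation over $\{0<X_2\le\varphi_2(x)\}$ and $\{X_2>\varphi_2(x)\}$. On the large-$X_2$ part the trivial bound $\overline{F}_{X_1}(x/X_2)\le 1$ gives a contribution at most $\Pr(X_2>\varphi_2(x))=\overline{F}_{X_2}(\varphi_2(x))$, which is $O(x^{-\theta})$ directly by hypothesis; this is exactly where the assumption $\overline{F}_{X_j}(\varphi_j(x))=O(x^{-\theta})$ is consumed.

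On the small-$X_2$ part I would exploit that $\overline{F}_{X_1}(x)=O(x^{-\theta})$ supplies constants $M,x_0$ with $\overline{F}_{X_1}(y)\le My^{-\theta}$ for all $y\ge x_0$. Because $\varphi_2\in\mathfrak{F}$ forces $x/\varphi_2(x)\to\infty$, for all $x$ large enough and every $X_2\in(0,\varphi_2(x)]$ one has $x/X_2\ge x/\varphi_2(x)\ge x_0$, hence $\overline{F}_{X_1}(x/X_2)\le M(x/X_2)^{-\theta}=MX_2^{\theta}x^{-\theta}$; taking expectations and using $\mathbb{E}[X_2^{\theta}]<\infty$ bounds this part by $Mx^{-\theta}\mathbb{E}[X_2^{\theta}]=O(x^{-\theta})$. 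Summing the two parts settles $N=2$. For the induction, assume the claim for $N-1$ factors and put $Y=\prod_{i=1}^{N-1}X_i$; the subcollection $X_1,\dots,X_{N-1}$ still meets the hypotheses with $X_1$ in the heavy role, so $\overline{F}_{Y}(x)=O(x^{-\theta})$, and then $X=YX_N$ with $Y$ and $X_N$ independent, $\mathbb{E}[X_N^{\theta}]<\infty$, and $\overline{F}_{X_N}(\varphi_N(x))=O(x^{-\theta})$; applying the $N=2$ case with $Y$ in the role of $X_1$ and $X_N$ in the role of $X_2$ finishes the proof.

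The conditioning step and the Markov-type estimates are routine, and parallel the proof of Theorem~\ref{theorem-fat-tail-sufficient}; the delicate point — and the main obstacle — is matching the two thresholds, i.e.\ guaranteeing that the cut $\varphi_2(x)$ eventually drives $x/X_2$ above the constant $x_0$ past which the polynomial bound on $\overline{F}_{X_1}$ holds. This is precisely what the defining property $\varphi(x)/x\to 0$ of $\mathfrak{F}$ provides, and it is also the reason the sharp $O$-order propagates through the product rather than a weaker bound that would pick up extra powers of $\varphi_2(x)$.
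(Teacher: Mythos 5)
Your proposal is correct and follows essentially the same route as the paper's proof: the same split of $\mathbb{E}\qty[\overline{F}_{X_1}\qty(x/X_2)]$ at the threshold $\varphi_2(x)$, the trivial bound $\overline{F}_{X_1}\le 1$ giving $\overline{F}_{X_2}(\varphi_2(x))=O(x^{-\theta})$ on the large-$X_2$ part, the polynomial bound combined with $\mathbb{E}[X_2^\theta]<\infty$ on the small-$X_2$ part (using $x/\varphi_2(x)\to\infty$ to clear the threshold $x_0$), and iteration for $N>2$. Your explicit handling of $\{X_2=0\}$ and the spelled-out induction are minor refinements of what the paper leaves implicit, not a different argument.
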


\begin{remark}
Since $\lim_{x\rightarrow\infty}\frac{\varphi_2(x)}{x}=0$, we have, $\exists x_0>0$, $\forall x > x_0$, $x\ge \varphi_2(x)$ and 
$ \overline{F}_{X_2}\qty( x ) \le \overline{F}_{X_2}\qty( \varphi_2(x) )$.
Thus, if $\overline{F}_{X_2}\qty(\varphi_2(x)) = O\qty(x^{-\theta})$, where $\lim_{x\rightarrow \infty} \varphi_2(x)=\infty$, then $\overline{F}_{X_2}\qty( x ) = O\qty(x^{-\theta})$.
\end{remark}

\begin{remark}
A related result concerning the sharp approximation, $\overline{F}(x) \sim C x^{-\theta}$, where $\theta>0$ and $C>0$ is a constant, is available in \cite{sarantsev2011tail}.
Another reference is \cite{jessen2006regularly}.
\end{remark}

\begin{remark}
The logarithm transform expression of the capacity formula indicates that it is sufficient to study the $f(x) = O(g(x))$ approximation rather than the sharper $f(x) \sim g(x)$ approximation, because a light-tailed distribution of capacity can bear as heavy as fat-tailed distributions of the random power and the random fade in the logarithm function. 
\end{remark}

We present the results of the tail upper bounds of the product and sum of random variables.
We present the proof in Appendix \ref{proof-theorem-super-heavy-tail-bound}.

\begin{theorem}\label{theorem-super-heavy-tail-bound}
Consider the independent random variables $X_i: \Omega \rightarrow \mathbb{R}_{\ge 0}$, $i\in\{ 1, \ldots, N \}$.
\begin{enumerate}[wide, labelwidth=!, labelindent=0pt]
\item
Suppose, $\exists \varphi_j \in \mathfrak{F}$,
$\overline{F}_{X_j}\qty(\varphi_j(x)) = O\qty( \overline{F}_{X_1}(x) )$, 
and $\overline{F}_{X_1} \qty(\frac{x}{\varphi_j(x)}) = O\qty( \overline{F}_{X_1}(x) )$,
$\forall 2\le j\le N$.
Then, we have 
\begin{equation}
\overline{F}_{\prod_{i=1}^{N}X_i}(x) = O\qty( \overline{F}_{X_1}(x) ).
\end{equation}

\item
Suppose, $\exists \varphi_j \in \mathfrak{F}$,
$\overline{F}_{X_j}\qty(\varphi_j(x)) = O\qty( \overline{F}_{X_1}(x) )$, and $\overline{F}_{X_1}\qty(x-\varphi_j(x))= O\qty( \overline{F}_{X_1}(x) ) $, $\forall 2\le j\le N$.
Then, we have
\begin{equation}
\overline{F}_{\sum_{i=1}^{N} X_i} (x) = O \qty( \overline{F}_{X_1}(x) ).
\end{equation}
\end{enumerate}
\end{theorem}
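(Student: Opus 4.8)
The plan is to prove both parts with one device: a union‑bound decomposition of the tail event according to whether the last factor (resp.\ summand) overshoots the threshold $\varphi_{k}(x)$, run inside an induction on the number of factors (resp.\ summands) that peels off one variable at a time. The base case is the trivial $\overline{F}_{X_1}(x)=O\qty(\overline{F}_{X_1}(x))$.

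For Part 1, put $P_k=\prod_{i=1}^{k}X_i$ and assume inductively $\overline{F}_{P_k}(y)=O\qty(\overline{F}_{X_1}(y))$. Since $P_{k+1}=P_k X_{k+1}$ and on $\{P_{k+1}>x\}\cap\{X_{k+1}\le\varphi_{k+1}(x)\}$ one has $0<X_{k+1}\le\varphi_{k+1}(x)$ and hence $P_k>x/X_{k+1}\ge x/\varphi_{k+1}(x)$, a union bound gives, for $x$ large,
\[
\overline{F}_{P_{k+1}}(x)\ \le\ \overline{F}_{X_{k+1}}\qty(\varphi_{k+1}(x))\ +\ \overline{F}_{P_k}\qty(x/\varphi_{k+1}(x)).
\]
The first term is $O\qty(\overline{F}_{X_1}(x))$ by hypothesis. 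For the second, $\varphi_{k+1}\in\mathfrak{F}$ makes $x/\varphi_{k+1}(x)\to\infty$, so composing the inductive hypothesis with the substitution $y=x/\varphi_{k+1}(x)$ yields $\overline{F}_{P_k}\qty(x/\varphi_{k+1}(x))=O\qty(\overline{F}_{X_1}\qty(x/\varphi_{k+1}(x)))$, and the assumed $\overline{F}_{X_1}\qty(x/\varphi_{k+1}(x))=O\qty(\overline{F}_{X_1}(x))$ collapses it to $O\qty(\overline{F}_{X_1}(x))$. The sum of the two bounds advances the induction, and the case $k=N$ is the claim.

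Part 2 is the same argument with $x/\varphi_{k+1}(x)$ replaced throughout by $x-\varphi_{k+1}(x)$: with $S_k=\sum_{i=1}^{k}X_i$, on $\{S_{k+1}>x\}\cap\{X_{k+1}\le\varphi_{k+1}(x)\}$ one has $S_k\ge S_{k+1}-X_{k+1}>x-\varphi_{k+1}(x)$, so $\overline{F}_{S_{k+1}}(x)\le\overline{F}_{X_{k+1}}\qty(\varphi_{k+1}(x))+\overline{F}_{S_k}\qty(x-\varphi_{k+1}(x))$; since $\varphi_{k+1}(x)/x\to0$ gives $x-\varphi_{k+1}(x)\to\infty$, the inductive hypothesis composes to $O\qty(\overline{F}_{X_1}\qty(x-\varphi_{k+1}(x)))$, and $\overline{F}_{X_1}\qty(x-\varphi_{k+1}(x))=O\qty(\overline{F}_{X_1}(x))$ closes the step.

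The one genuinely delicate point, and the one I would argue most carefully, is the legitimacy of composing an $O(\cdot)$ relation with the inner substitution $y=x/\varphi_{k+1}(x)$ (resp.\ $y=x-\varphi_{k+1}(x)$): this is exactly where $\varphi_j\in\mathfrak{F}$ is needed, because both substitutions tend to infinity, so that a bound $\limsup_{y\to\infty}\overline{F}_{P_k}(y)/\overline{F}_{X_1}(y)<\infty$ transfers verbatim to the composed quantity. A secondary point worth stating is why one cannot handle all $N$ variables in a single split against $\prod_{j\ge2}\varphi_j$: that product need not lie in $\mathfrak{F}$ (e.g.\ $\varphi_j(x)=x^{2/3}$ gives a two‑fold product that outgrows $x$), which is precisely why peeling off one variable at a time — so that only a single $\varphi_j$ ever appears — is the right move. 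I would also note that only union bounds are used, so independence of the $X_i$ plays no role in this argument.
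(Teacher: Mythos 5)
Your proof is correct, and its core is the same device the paper uses: split the tail event at the threshold $\varphi_j(x)$, bound the ``$X_j$ large'' piece by $\overline{F}_{X_j}\big(\varphi_j(x)\big)$ and the ``$X_j$ small'' piece by $\overline{F}_{X_1}\big(x/\varphi_j(x)\big)$ (resp.\ $\overline{F}_{X_1}\big(x-\varphi_j(x)\big)$ for sums), then absorb both into $O\big(\overline{F}_{X_1}(x)\big)$ via the two hypotheses. The differences are in the packaging rather than the idea. The paper proves $N=2$ by writing $\overline{F}_{X_1X_2}(x)=\mathbb{E}\big[\overline{F}_{X_1}(x/X_2)\big]$ — which is exactly where independence enters — and splits that expectation on $\{X_2\le\varphi_2(x)\}$ versus $\{X_2>\varphi_2(x)\}$, dispatching $N>2$ with a one-line appeal to ``iteration'' and transitivity of $O(\cdot)$; you instead use a pure union bound on events, which, as you correctly note, makes the independence assumption superfluous (a mild strengthening of the statement), and you make the $N>2$ step airtight by an explicit induction in which the composed substitution $y=x/\varphi_{k+1}(x)$ (resp.\ $y=x-\varphi_{k+1}(x)$) is justified precisely because it tends to infinity — the point the paper leaves implicit. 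Your side remark about why a single split against $\prod_{j\ge2}\varphi_j$ would fail (the product need not stay in $\mathfrak{F}$) is a sensible justification of the one-variable-at-a-time peeling that both proofs implicitly rely on.
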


\begin{remark}
Since $\lim_{x\rightarrow\infty}\frac{\varphi_2(x)}{x}=0$, we have, $\exists x_0>0$, $\forall x > x_0$, $x\ge \varphi_2(x)$ and 
$ \overline{F}_{X_2}\qty( x ) \le \overline{F}_{X_2}\qty( \varphi_2(x) )$.
Thus, if $\overline{F}_{X_2}\qty(\varphi_2(x)) = O\qty( \overline{F}_{X_1}(x) )$, where $\lim_{x\rightarrow \infty} \varphi_2(x)=\infty$, then $\overline{F}_{X_2}\qty( x ) = O\qty( \overline{F}_{X_1}(x) )$.
\end{remark}

We present a further result, which indicates that the tail behavior of the random variables product can not be effectively transformed from heavy to light, by the product or sum with other random variables, when there is a slowly varying distribution in the product or a regularly varying distribution in the sum.
We present the proof in Appendix \ref{proof-theorem-slowly-regularly-varying-dominate}.

\begin{theorem}\label{theorem-slowly-regularly-varying-dominate}
Consider the independent random variables $X_i: \Omega \rightarrow \mathbb{R}_{\ge 0}$, $i\in\{ 1, \ldots, N \}$.
\begin{enumerate}[wide, labelwidth=!, labelindent=0pt]
\item
Suppose, $\exists \varphi_j \in \mathfrak{F}$,
$\overline{F}_{X_j}\qty(\varphi_j(x)) = o\qty( \overline{F}_{X_1}(x) )$, 
$F_1 \in \mathcal{R}_0$, i.e., $\overline{F}_{X_1}(t x) \sim \overline{F}_{X_1}(x)$, $\forall t>0$,
$\forall 2\le j\le N$.
Then, we have 
\begin{equation}
\overline{F}_{\prod_{i=1}^{N}X_i}(x) \sim \overline{F}_{X_1}(x).
\end{equation}

\item
Suppose, $\exists \varphi_j \in \mathfrak{F}$,
$\overline{F}_{X_j}\qty(\varphi_j(x)) = o\qty( \overline{F}_{X_1}(x) )$, and $F_{X_1} \in \mathcal{L}$, i.e., $\overline{F}_{X_1}\qty(x-t) \sim \overline{F}_{X_1}(x)$, $\forall t>0$,
$\forall 2\le j\le N$.
Then, we have
\begin{equation}
\overline{F}_{\sum_{i=1}^{N} X_i} (x) \sim \overline{F}_{X_1}(x).
\end{equation}
\end{enumerate}
\end{theorem}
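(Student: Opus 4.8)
The plan is to prove both parts by a single two-region decomposition at the level $\varphi_j(x)$, treating the sum (item~2) and the product (item~1) in parallel, and then to reduce general $N$ to $N=2$ by induction. Take $N=2$ first. For the sum, the lower bound is free: since $X_2\ge 0$ we have $\{X_1>x\}\subseteq\{X_1+X_2>x\}$, so $\overline{F}_{X_1+X_2}(x)\ge\overline{F}_{X_1}(x)$ and $\liminf_{x\to\infty}\overline{F}_{X_1+X_2}(x)/\overline{F}_{X_1}(x)\ge 1$. For the upper bound I write $\overline{F}_{X_1+X_2}(x)=\Pr(X_1+X_2>x,\ X_2\le\varphi_2(x))+\Pr(X_1+X_2>x,\ X_2>\varphi_2(x))$. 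The second probability is at most $\overline{F}_{X_2}(\varphi_2(x))$, which is $o(\overline{F}_{X_1}(x))$ by hypothesis. On the event of the first probability $X_1>x-X_2\ge x-\varphi_2(x)$, so that probability is at most $\overline{F}_{X_1}(x-\varphi_2(x))$; since $F_{X_1}\in\mathcal{L}$ and $\varphi_2\in\mathfrak{F}$, item~1 of Lemma~\ref{lemma-tail-relationships} gives $\overline{F}_{X_1}(x-\varphi_2(x))\sim\overline{F}_{X_1}(x)$. Hence $\overline{F}_{X_1+X_2}(x)\le(1+o(1))\overline{F}_{X_1}(x)$, the $\limsup$ is $\le 1$, and the two inequalities give $\overline{F}_{X_1+X_2}(x)\sim\overline{F}_{X_1}(x)$.

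For the product with $N=2$ the upper bound is identical after replacing $x-\varphi_2(x)$ by $x/\varphi_2(x)$: on $\{X_2\le\varphi_2(x)\}$ the event $X_1X_2>x$ forces $X_1>x/\varphi_2(x)$, and item~2 of Lemma~\ref{lemma-tail-relationships} in the slowly varying case ($\theta_1=0$) supplies $\overline{F}_{X_1}(x/\varphi_2(x))\sim\overline{F}_{X_1}(x)$, while the complementary piece is $\le\overline{F}_{X_2}(\varphi_2(x))=o(\overline{F}_{X_1}(x))$. For the lower bound, fix $\epsilon\in(0,1)$; from $\{X_1>x/\epsilon\}\cap\{X_2\ge\epsilon\}\subseteq\{X_1X_2>x\}$ and independence, $\overline{F}_{X_1X_2}(x)\ge\overline{F}_{X_1}(x/\epsilon)\,\Pr(X_2\ge\epsilon)$, and $\overline{F}_{X_1}(x/\epsilon)\sim\overline{F}_{X_1}(x)$ by $F_{X_1}\in\mathcal{R}_0$; letting $\epsilon\downarrow 0$ yields $\liminf_{x\to\infty}\overline{F}_{X_1X_2}(x)/\overline{F}_{X_1}(x)\ge\Pr(X_2>0)$. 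This equals $1$ provided the secondary factor has no atom at $0$, which is automatic for the physical quantities (power, eigenvalues, fading gains); in general the conclusion holds with the constant $\Pr(X_2>0)$ in place of $1$. One could instead try to derive item~1 from item~2 via $X_i\mapsto\log X_i$, which turns $\mathcal{R}_0$ into $\mathcal{L}$, but the logarithmic image of the perturbing function need not remain in $\mathfrak{F}$, so the parallel treatment above is cleaner.

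For general $N$ I induct, absorbing $X_1,\dots,X_{N-1}$ into $S_{N-1}=\sum_{i<N}X_i$ (resp.\ $P_{N-1}=\prod_{i<N}X_i$): by the inductive hypothesis $\overline{F}_{S_{N-1}}\sim\overline{F}_{X_1}$ (resp.\ $\overline{F}_{P_{N-1}}\sim\overline{F}_{X_1}$), and since asymptotic tail equivalence preserves membership in $\mathcal{L}$ (resp.\ $\mathcal{R}_0$), the collapsed variable still lies in the needed class; moreover $\overline{F}_{X_N}(\varphi_N(x))=o(\overline{F}_{X_1}(x))=o(\overline{F}_{S_{N-1}}(x))$, so the $N=2$ case applied to the independent pair $(S_{N-1},X_N)$ (resp.\ $(P_{N-1},X_N)$) closes the induction. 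I expect the main obstacle to be exactly the two insensitivity estimates $\overline{F}_{X_1}(x-\varphi(x))\sim\overline{F}_{X_1}(x)$ for $F_{X_1}\in\mathcal{L}$ and $\overline{F}_{X_1}(x/\varphi(x))\sim\overline{F}_{X_1}(x)$ for $F_{X_1}\in\mathcal{R}_0$ with $\varphi\in\mathfrak{F}$ --- everything else is bookkeeping --- but these are precisely the content of Lemma~\ref{lemma-tail-relationships}, which is available; the only remaining subtlety is the atom-at-$0$ issue for the product, which affects only the multiplicative constant.
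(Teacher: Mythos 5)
Your argument is correct, and it uses the same central device as the paper's proof: split on $\{X_2\le\varphi_2(x)\}$ versus $\{X_2>\varphi_2(x)\}$, discard the second piece via $\overline{F}_{X_2}(\varphi_2(x))=o(\overline{F}_{X_1}(x))$, show the first piece is asymptotically $\overline{F}_{X_1}(x)$, and reduce $N>2$ to $N=2$ by iteration. Where you differ is in how the main piece and the lower bound are handled, and your route is the more elementary and more watertight one: the paper pushes the asymptotics inside the integral, claiming $\mathbb{E}[\overline{F}_{X_1}(x/X_2)1_{0<X_2\le\varphi_2(x)}]\sim\mathbb{E}[\overline{F}_{X_1}(x/X_2)]$ by dominated convergence and then $\int\overline{F}_{X_1}(x/x_2)\,dF_{X_2}(x_2)\sim\overline{F}_{X_1}(x)$ by slow variation, an interchange that is not fully justified there; you instead bound the main piece monotonically by $\overline{F}_{X_1}(x/\varphi_2(x))$ (resp.\ $\overline{F}_{X_1}(x-\varphi_2(x))$ for the sum) and quote exactly the insensitivity estimates of Lemma \ref{lemma-tail-relationships}, and you supply the matching lower bounds explicitly (trivially from nonnegativity for the sum, via the $\epsilon$-cut $\{X_1>x/\epsilon\}\cap\{X_2\ge\epsilon\}$ for the product). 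Your induction step, using that tail equivalence preserves membership in $\mathcal{L}$ and $\mathcal{R}_0$ and that $o(\overline{F}_{X_1})=o(\overline{F}_{S_{N-1}})$, is a careful version of the paper's ``iteration of the same procedure.'' Finally, your atom-at-zero caveat is a genuine catch, not a defect of your proof: a secondary factor with $\Pr(X_j=0)>0$ (e.g.\ $X_2\equiv 0$, which satisfies the stated hypotheses vacuously) falsifies the product conclusion as written, and in general one only gets $\overline{F}_{\prod_i X_i}(x)\sim\prod_{j\ge 2}\Pr(X_j>0)\cdot\overline{F}_{X_1}(x)$; the paper's proof tacitly assumes $X_j>0$ a.s.\ when it replaces $\overline{F}_{X_1}(x/x_2)$ by $\overline{F}_{X_1}(x)$ under the integral, so the theorem should carry that hypothesis (it is harmless for the intended applications to powers, eigenvalues, and fading gains, and the sum statement is unaffected).
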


\begin{remark}
According to Lemma \ref{lemma-tail-relationships}, the situation for the product of random variables appears for slowly varying distributions $F_1\in \mathcal{R}_0$, and $F_j (x) = \Theta\qty(x^{-\theta_j})$ or $F_j (x) = \Theta\qty(e^{-\theta_j x})$, $\theta_j >0$, $j\in\{ 2, \ldots, N \}$;
and the situation for the sum of random variables appears for regularly varying distributions $F_1\in \mathcal{R}_{\ge 0}$, and $F_j (x) = \Theta\qty(x^{-\theta_j})$ or $F_j (x) = \Theta\qty(e^{-\theta_j x})$, $\theta_j >0$, $j\in\{ 2, \ldots, N \}$.
\end{remark}

\begin{remark}
It is interesting to study the possibility of transforming the tail heaviness of a random variable from heavy to light through some functions with other random variables, e.g., for some special cases.
\end{remark}

\begin{remark}
The asymptotic behavior of the right tail of the sum of the random variables can be insensible to both positive and negative dependence \cite{albrecher2006tail}\cite{tang2008insensitivity}\cite{ko2008sums}\cite{asmussen2008asymptotics}\cite{geluk2009asymptotic}, while the asymptotic behavior of the left tail can be connected with the dependence structures \cite{gulisashvili2016tail}\cite{tankov2016tails}. 
In addition, there are scenarios, where the right tail of the sum distribution is sensitive to the dependence structures \cite{albrecher2006tail}.
\end{remark}

\begin{remark}
The tail behavior of the product distribution is more complicated.
For example, it is shown that the product distribution of two independent random variables with exponential distributions is subexponential \cite{tang2008light}\cite{liu2010subexponential}.
Particularly, the dependence between the random variables are crucial for the tail behavior of the product distribution \cite{jiang2011product}\cite{hashorva2014tail}\cite{chen2018extensions}, e.g., the dependence can either decrease or increase the product distribution tail heaviness compared to the independence scenario \cite{jiang2011product}\cite{yang2013subexponentiality}.
In addition, the tail of the product distribution with dependence can be asymptotically bounded above and below by the tail of a dominating random variable \cite{yang2011tail}\cite{chen2018extensions} or can be asymptotically bounded above and below by the tail with assumption of independence \cite{jiang2011product}\cite{yang2013subexponentiality}. 
\end{remark}

\begin{remark}
It is interesting to investigate the extreme influence of the dependence among the random parameters in the wireless channel capacity on the tail behavior of the marginal distribution of the capacity, e.g., whether or not the dependence between two light-tailed or heavy-tailed random variables can cause a super-heavy tail of the product or sum distribution.
For example, considering the comonotonic random variables with identical distributions \cite{dhaene2002concept}, $X_i \sim X$, $1\le i\le N$, we have $\overline{F}_{\sum\limits_{1\le i\le N} X_i}(x) = \overline{F}_{X}\qty(x/N)$ and $\overline{F}_{\prod\limits_{1\le i\le N} X_i}(x) = \overline{F}_{X}\qty(x^{1/N})$, compared to the distribution $\overline{F}_{X}\qty(x)$, the sum distribution $\overline{F}_{X}\qty(x/N)$ is scale invariant for Pareto Type I distribution and asymptotically scale invariant for regular varying distributions, and the product distribution $\overline{F}_{X}\qty(x^{1/N})$ has a smaller tail index for Pareto Type I distribution.
\end{remark}

\section{Dependence Transform}
\label{transformability}

We provide the dependence manipulation techniques for both the spatial dependence and temporal dependence of a stochastic process.
The manipulation of the spatial dependence means the dependence manipulation of the random parameters of the stochastic process at some time epochs, while the manipulation of the temporal dependence means the manipulation of some random parameters on the time line.
Primary results of temporal dependence manipulation are shown in \cite{sun2018hidden}.

We define the unconditionally increasingly functions
\begin{multline}
\mathfrak{F}_{UI} = \{ f: \mathbb{R}^n \rightarrow \mathbb{R}; f(x_i| \qty{\bm{x}\setminus x_i }) \text{ is increasing at } x_i, \\
\forall \bm{x} \in \mathbb{R}^{n}, \forall 1\le i\le n \},
\end{multline}
and if the function $f$ is strictly increasing, we denote $f \in \mathfrak{F}_{USI}$.
We define the unconditionally increasingly affine functions
\begin{multline}
\mathfrak{F}_{UIA} = \{ f: \mathbb{R}^n \rightarrow \mathbb{R}; 
f(x_i| \qty{\bm{x}\setminus x_i }) \text{ is affine function of } x_i, \\
\text{ and is increasing at } x_i, 
\forall \bm{x} \in \mathbb{R}^{n}, \forall 1\le i\le n \},
\end{multline}
and if the function $f$ is strictly increasing, we denote $f \in \mathfrak{F}_{USIA}$.
These function classes are sufficiently general in some scenarios, e.g., the wireless channel capacity process.

\subsection{Identical Marginals}

We present a sufficient condition that the composition of a supermodular function with some multivariate functions is a supermodular function.

\begin{lemma}\label{lemma-sm-function}
Let $f_t: \mathbb{R}^n \rightarrow \mathbb{R}$, $\forall t\ge 1$.
If $g: \mathbb{R}^t \rightarrow \mathbb{R}$ is supermodular, $g^\prime := g \qty(f_1, \ldots, f_t) : \mathbb{R}^{t\times n} \rightarrow \mathbb{R}$, and
\begin{multline}
g\qty( \qty( f_1(\bm{x}_1), \ldots, f_t(\bm{x}_t) ) \wedge \qty( f_1(\bm{y}_1), \ldots, f_t(\bm{y}_t) ) ) \\
+ g\qty( \qty( f_1(\bm{x}_1), \ldots, f_t(\bm{x}_t) ) \vee \qty( f_1(\bm{y}_1), \ldots, f_t(\bm{y}_t) ) )\\
= 
g^\prime \qty( (\bm{x}_1, \ldots, \bm{x}_t) \wedge (\bm{y}_1, \ldots, \bm{y}_t) ) \\
+ g^\prime \qty( (\bm{x}_1, \ldots, \bm{x}_t) \vee (\bm{y}_1, \ldots, \bm{y}_t) ),
\end{multline}
then $g^\prime$ is supermodular.
\end{lemma}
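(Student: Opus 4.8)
The plan is to verify the defining inequality for supermodularity of $g'$ directly from the hypothesis. Recall that a function $h : \mathbb{R}^{t \times n} \to \mathbb{R}$ is supermodular if and only if
\begin{equation}
h(\bm{u} \wedge \bm{v}) + h(\bm{u} \vee \bm{v}) \ge h(\bm{u}) + h(\bm{v})
\end{equation}
for all $\bm{u}, \bm{v} \in \mathbb{R}^{t \times n}$, where $\wedge$ and $\vee$ are the componentwise minimum and maximum. So I would take arbitrary $\bm{u} = (\bm{x}_1, \ldots, \bm{x}_t)$ and $\bm{v} = (\bm{y}_1, \ldots, \bm{y}_t)$ in $\mathbb{R}^{t \times n}$ and aim to show $g'(\bm{u} \wedge \bm{v}) + g'(\bm{u} \vee \bm{v}) \ge g'(\bm{u}) + g'(\bm{v})$.

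The chain of steps is as follows. First, by the definition $g' = g(f_1, \ldots, f_t)$, the right-hand side $g'(\bm{u}) + g'(\bm{v})$ equals $g\big((f_1(\bm{x}_1), \ldots, f_t(\bm{x}_t))\big) + g\big((f_1(\bm{y}_1), \ldots, f_t(\bm{y}_t))\big)$. Second, apply the supermodularity of $g$ on $\mathbb{R}^t$ to the two points $\bm{a} := (f_1(\bm{x}_1), \ldots, f_t(\bm{x}_t))$ and $\bm{b} := (f_1(\bm{y}_1), \ldots, f_t(\bm{y}_t))$, giving
\begin{equation}
g(\bm{a} \wedge \bm{b}) + g(\bm{a} \vee \bm{b}) \ge g(\bm{a}) + g(\bm{b}).
\end{equation}
Third, invoke the displayed hypothesis of the lemma, which asserts precisely that $g(\bm{a} \wedge \bm{b}) + g(\bm{a} \vee \bm{b}) = g'(\bm{u} \wedge \bm{v}) + g'(\bm{u} \vee \bm{v})$. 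Concatenating these three observations yields $g'(\bm{u} \wedge \bm{v}) + g'(\bm{u} \vee \bm{v}) \ge g'(\bm{u}) + g'(\bm{v})$, which is what we want. Since $\bm{u}, \bm{v}$ were arbitrary, $g'$ is supermodular.

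The statement is essentially a bookkeeping exercise once the hypothesis is unpacked: the nontrivial content has been pushed into the assumed identity relating the lattice operations before and after composition (which is exactly what fails in general, since $f_i(\bm{x}_i \wedge \bm{y}_i)$ need not equal $f_i(\bm{x}_i) \wedge f_i(\bm{y}_i)$ for a general $f_i$). So the only real obstacle is notational care: being explicit that $\wedge, \vee$ on $\mathbb{R}^{t\times n}$ act blockwise so that $(\bm{u} \wedge \bm{v})$ restricted to block $i$ is $\bm{x}_i \wedge \bm{y}_i$, and that the hypothesis is stated for all choices of the $\bm{x}_i, \bm{y}_i$, hence applies to the arbitrary $\bm{u}, \bm{v}$ we fixed. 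I would also remark afterward that a natural sufficient condition making the displayed identity hold is that each $f_i$ is monotone in each coordinate with a consistent direction — e.g. $f_i \in \mathfrak{F}_{UI}$ — since then $f_i$ preserves the coordinatewise lattice operations up to the relevant combination; this connects the lemma to the function classes introduced just above.
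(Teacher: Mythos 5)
Your proposal is correct and follows essentially the same route as the paper: apply supermodularity of $g$ to the transformed points $\qty(f_1(\bm{x}_1),\ldots,f_t(\bm{x}_t))$ and $\qty(f_1(\bm{y}_1),\ldots,f_t(\bm{y}_t))$, then use the assumed identity to rewrite the left-hand side as $g^\prime$ evaluated at the blockwise lattice operations, which yields the supermodularity inequality for $g^\prime$. The paper's proof is just a terser statement of the same two steps, so there is nothing to add.
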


\begin{proof}
Considering 
\begin{multline}
g\qty( \qty( f_1(\bm{x}_1), \ldots, f_t(\bm{x}_t) ) \wedge \qty( f_1(\bm{y}_1), \ldots, f_t(\bm{y}_t) ) ) \\
+ g\qty( \qty( f_1(\bm{x}_1), \ldots, f_t(\bm{x}_t) ) \vee \qty( f_1(\bm{y}_1), \ldots, f_t(\bm{y}_t) ) ) \\
\ge g^\prime\qty( \bm{x}_1, \ldots, \bm{x}_t ) +  g^\prime\qty( \bm{y}_1, \ldots, \bm{y}_t ),
\end{multline}
the proof follows directly.
\end{proof}

\begin{remark}
The implicit function theorem \cite{spivak1965calculus}\cite{rudin1976principles} gives a sufficient condition on the functions for the existence of their inverse in general.
\end{remark}

We investigate the scenario, where the multidimensional process is temporally independent and spatially dependent. 

\begin{theorem}\label{theorem-sm-sp-dependent}
Assume the random parameters are spatially dependent and temporally independent.
Assume $f_t : \mathbb{R}^n \rightarrow \mathbb{R}$, 
\begin{equation}
f_t\qty(\bm{x}_t) \diamond f_t\qty(\bm{y}_t) = f_t\qty( \bm{x}_t \hat{\diamond} \bm{y}_t ),\ \forall \bm{x}_t, \bm{y}_t \in \mathbb{R}^{n},\ \forall t\ge 0,
\end{equation}
where $\diamond, \hat{\diamond} \in \qty{\wedge, \vee}$ preserves one of the following three relations for all $t\ge 0$$:$ $\diamond = \hat{\diamond}$, $\qty{\diamond = \wedge|\hat{\diamond}=\vee} $, and $\qty{\diamond = \vee|\hat{\diamond}=\wedge}$.

If, for any $1\le j\le t,$
\begin{equation}
\left( X_j^1, X_j^2, \ldots, X_j^n \right) \le_{sm} \left( \widetilde{X}_j^1, \widetilde{X}_j^2, \ldots, \widetilde{X}_j^n \right),
\end{equation}
and $\qty( X_j^1, \ldots, X_j^n )$ and $\qty( \widetilde{X}_j^1, \ldots, \widetilde{X}_j^n )$, and $\qty( X_k^1, \ldots, X_k^n )$ are independent for all $j\neq k$, 
then
\begin{equation}
 \left( X_1, X_2, \ldots, X_t \right) \le_{sm} \left( {{X}}_1, {{X}}_2,\ldots, \widetilde{X}_j, \ldots, {{X}}_t \right),
\end{equation}
where ${{X}}_i = f_i \qty( {X}_i^1, \ldots,  {X}_i^n )$, $\forall 1\le i\le t$, and ${\widetilde{X}}_j = f_j \qty( \widetilde{X}_j^1, \ldots,  \widetilde{X}_j^n )$, $1\le j\le t$.

If
\begin{equation}
\left( X_j^1, X_j^2, \ldots, X_j^n \right) \le_{sm} \left( \widetilde{X}_j^1, \widetilde{X}_j^2, \ldots, \widetilde{X}_j^n \right),\ \forall 1\le j \le t,
\end{equation}
and $\qty( X_j^1, \ldots, X_j^n )$ and $\qty( X_k^1, \ldots, X_k^n )$ are independent for all $j\neq k$, so are $\qty( \widetilde{X}_j^1, \ldots, \widetilde{X}_j^n )$ and $\qty( \widetilde{X}_k^1, \ldots, \widetilde{X}_k^n )$,
then
\begin{multline}
 \left( \widetilde{X}_1,\ldots, \widetilde{X}_k, {X}_{k+1}, \ldots, X_t \right) \\
 \le_{sm} \left( {\widetilde{X}}_1, \ldots, {\widetilde{X}}_j, X_{j+1}, \ldots, {{X}}_t \right),\ \forall 1\le k\le j\le t,
\end{multline}
where ${{X}}_j = f_j \qty( {X}_j^1, \ldots,  {X}_j^n )$ and ${\widetilde{X}}_j = f_j \qty( \widetilde{X}_j^1, \ldots,  \widetilde{X}_j^n )$, $\forall 1\le j\le t$.
\end{theorem}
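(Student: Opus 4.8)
The plan is to argue through the test--function characterisation of the supermodular order, $\bm U\le_{sm}\bm V$ iff $\mathbb{E}[g(\bm U)]\le\mathbb{E}[g(\bm V)]$ for every supermodular $g$ for which the expectations exist, and to reduce the statement about the scalar processes $X_i=f_i(\bm X_i)$, $\bm X_i=(X_i^1,\dots,X_i^n)$, to a comparison already available for the parameter vectors. Fix a supermodular $g:\mathbb{R}^t\to\mathbb{R}$ and set $g':=g(f_1,\dots,f_t):\mathbb{R}^{t\times n}\to\mathbb{R}$, so that $\mathbb{E}[g(X_1,\dots,X_t)]=\mathbb{E}[g'(\bm X_1,\dots,\bm X_t)]$; the first conclusion will follow once I know (a) $g'$ is supermodular and (b) $(\bm X_1,\dots,\bm X_j,\dots,\bm X_t)\le_{sm}(\bm X_1,\dots,\widetilde{\bm X}_j,\dots,\bm X_t)$ on $\mathbb{R}^{t\times n}$.

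Point (a) is exactly Lemma~\ref{lemma-sm-function}. The hypothesis $f_t(\bm x_t)\diamond f_t(\bm y_t)=f_t(\bm x_t\,\hat{\diamond}\,\bm y_t)$ says each $f_t$ carries componentwise meets and joins to meets and joins, either in the same order (the homomorphism case $\diamond=\hat{\diamond}$) or with meet and join interchanged (the anti-homomorphism cases), so that for each $i$ the pair $\{f_i(\bm x_i\wedge\bm y_i),\,f_i(\bm x_i\vee\bm y_i)\}$ coincides with $\{f_i(\bm x_i)\wedge f_i(\bm y_i),\,f_i(\bm x_i)\vee f_i(\bm y_i)\}$ --- which is precisely the identity Lemma~\ref{lemma-sm-function} needs to conclude $g'$ supermodular. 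Equivalently, one can see this a single time slice at a time: temporal independence lets one condition on $\{\bm X_k:k\ne j\}$, which freezes $X_k$ to constants $c_k$, and with $\gamma(s):=g(c_1,\dots,s,\dots,c_t)$ the lattice property of $f_j$ together with the scalar identity $\{a\wedge b,a\vee b\}=\{a,b\}$ forces $\gamma(f_j(\bm u\wedge\bm v))+\gamma(f_j(\bm u\vee\bm v))=\gamma(f_j(\bm u))+\gamma(f_j(\bm v))$, i.e.\ $\bm z\mapsto g(c_1,\dots,f_j(\bm z),\dots,c_t)$ is supermodular.

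For point (b), from $\bm X_j\le_{sm}\widetilde{\bm X}_j$, the trivial orders $\bm X_k\le_{sm}\bm X_k$ for $k\ne j$, the mutual independence of $\{\bm X_k\}_k$, and a coupling making $\widetilde{\bm X}_j$ independent of $\{\bm X_k:k\ne j\}$ as well (legitimate, since $\le_{sm}$ depends only on the two marginal laws), the closure of the supermodular order under concatenation of independent blocks \cite{muller2002comparison} gives the desired order on $\mathbb{R}^{t\times n}$. Applying it to the supermodular $g'$ yields $\mathbb{E}[g(X_1,\dots,X_t)]\le\mathbb{E}[g(X_1,\dots,\widetilde X_j,\dots,X_t)]$; since $g$ was arbitrary the first assertion follows, and feeding in the (super- and sub-modular) functions of a single coordinate gives along the way the equality of one--dimensional marginals that makes the ordering well posed. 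For the chain assertion I would iterate this single--slice replacement over $i=k+1,\dots,j$ --- at each step the slice being modified is independent of all the others by the stated independence within $\{\bm X_i\}_i$ and within $\{\widetilde{\bm X}_i\}_i$ (with the cross--coupling again chosen freely) --- and close by transitivity of $\le_{sm}$.

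The step I expect to be the real obstacle is the supermodularity of $g'$, i.e.\ extracting from the structural identity on the $f_t$ exactly the lattice--map behaviour Lemma~\ref{lemma-sm-function} requires: one must run the case split over the three admissible relations between $\diamond$ and $\hat{\diamond}$ and, in particular, be careful that the two ``mixed'' relations are read as genuine two--sided anti--homomorphisms --- a merely one--sided identity such as $f_j(\bm x)\wedge f_j(\bm y)=f_j(\bm x\vee\bm y)$ alone leaves $f_j(\bm x\wedge\bm y)$ unconstrained, and then the composition with a general supermodular $g$ need not be supermodular (for instance $f_j(z_1,z_2)=-\max(z_1,z_2)$ satisfies that one identity yet composing it with a suitable supermodular $g$ destroys supermodularity). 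Everything else --- the conditioning/unconditioning bookkeeping, the concatenation closure, the transitivity, and the observation that no monotonicity of the $f_t$ is needed for the supermodular argument --- is routine.
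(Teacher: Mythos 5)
Your proposal is correct and follows essentially the same route as the paper: temporal independence plus the conjunction/concatenation closure of $\le_{sm}$ lifts the blockwise order to the full parameter vector, Lemma~\ref{lemma-sm-function} gives supermodularity of $g'=g(f_1,\ldots,f_t)$ under the stated lattice (anti-)homomorphism condition on the $f_t$, and reflexivity plus transitivity of $\le_{sm}$ yields the chain assertion. Your additional caveat about reading the mixed relations as two-sided anti-homomorphisms is a fair clarification of the hypothesis but does not change the argument.
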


\begin{proof}
Considering the temporal independence assumption and the conjunction property of supermodular order \cite{shaked2007stochastic},
we have
$
\qty( X_1^1, \ldots, X_1^n, \ldots, X_t^1, \ldots, X_t^n )
\le_{sm}
\qty( {X}_1^1, \ldots, {X}_1^n, \ldots, \widetilde{X}_j^1, \ldots, \widetilde{X}_j^n, \ldots, {X}_t^1, \ldots, {X}_t^n )
$.

Letting $g: \mathbb{R}^t \rightarrow \mathbb{R}$ be supermodular and denote $g^\prime := g \qty(f_1, \ldots, f_t) : \mathbb{R}^{t\times n} \rightarrow \mathbb{R}$,
we have $g^\prime$ is supermodular, which follows Lemma \ref{lemma-sm-function}. 
Thus, it directly implies
$
\qty( f_1 \qty( X_1^1, \ldots, X_1^n ), \ldots, f_t\qty( X_t^1, \ldots, X_t^n ) )
\le_{sm}
\qty( f_1 \qty({X}_1^1, \ldots, {X}_1^n ), \ldots, f_j \qty( \widetilde{X}_j^1, \ldots, \widetilde{X}_j^n ), \ldots, f_t \qty( {X}_t^1, \ldots, {X}_t^n ) )
$.

The proof of the other result follows the reflexivity and transitivity property of supermodular order \cite{muller2002comparison}.
\end{proof}

\begin{remark}
The results indicate that the spatial dependence of the random parameters also influences the dependence of the stochastic process and more manipulations of the spatial dependence has more strength to transform the dependence of the stochastic process.
\end{remark}

\begin{remark}
It is interesting to investigate the relationship between the requirement of the functional and the spatial dependence of the random parameters.
An example is the comonotonicity dependence structure with identical marginal distribution, i.e., the random parameters are equal almost surely, thus the requirement of the function reduces to the scenario of the requirement of the univariate functional scenario, i.e., decreasing or increasing for each variate on the function domain.
\end{remark}

We present a result without specification on the spatial and temporal dependence. Note the relaxation of the specification on dependence is replaced by the additional conditions on the functionals.

\begin{theorem}\label{theorem-sm-dependence}
Assume $f_t : \mathbb{R}^n \rightarrow \mathbb{R}$ and $f_t\qty(\bm{X}_t^i | \bm{Z}_t^i = \bm{z}_t^i )$ are all increasing or all decreasing at each component of $\bm{X}^i = \qty(X^i_1, \ldots, X^i_t)$, for any $\bm{z}_t^i = \qty(x_t^1,\ldots, x_t^{i-1}, x_t^{i+1}, \ldots, x_t^n)$ in the support of $\bm{Z}_t^i = \qty(X^1_t,\ldots, X_t^{i-1}, X_t^{i+1}, \ldots, X_t^n) $, $\forall 1\le i\le n$, $\forall t\ge 1$.
Denote $\bm{z}^i = \qty{ \bm{z}_1^i, \ldots, \bm{z}_t^i }$, $\bm{Z}^i = \qty{ \bm{Z}_1^i, \ldots, \bm{Z}_t^i }$, and $\qty(\bm{X}^i | \bm{Z}^i) \le_{sm} \qty(\widetilde{\bm{X}}^i | \bm{Z}^i) \iff \qty(\bm{X}^i | \bm{Z}^i = \bm{z}^i ) \le_{sm} \qty(\widetilde{\bm{X}}^i | \bm{Z}^i =\bm{z}^i ),\ \forall \bm{z}^i \in \bm{Z}^i$.

If, for any $1\le i \le n,$
\begin{equation}
\left( X_1^i, X_2^i, \ldots, X_t^i | \bm{Z}^i \right) 
\le_{sm} \left( \widetilde{X}_1^i, \widetilde{X}_2^i, \ldots, \widetilde{X}_t^i | \bm{Z}^i \right), 
\end{equation}
then
\begin{equation}
 \left( X_1, X_2, \ldots, X_t \right) \le_{sm} \left( {\widetilde{X}}_1, {\widetilde{X}}_2, \ldots, {\widetilde{X}}_t \right),
\end{equation}
where ${\widetilde{X}}_j = f_j ( X_j^1, \ldots, X_j^{i-1}, \widetilde{X}_j^i, X_j^{i+1}, \ldots, X_j^n )$, $\forall 1\le j\le t$.

If, $\forall 1\le j \le i$,
\begin{equation}
\left( X_1^j, X_2^j, \ldots, X_t^j | \bm{Z}^j \right) 
\le_{sm} \left( \widetilde{X}_1^j, \widetilde{X}_2^j, \ldots, \widetilde{X}_t^j | \bm{Z}^j \right),
\end{equation}
then
\begin{equation}
\widetilde{\bm{X}}_{t}^{k} \le_{sm} \widetilde{\bm{X}}_{t}^{j},\ \forall 0\le k \le j \le i, 
\end{equation}
with $\widetilde{{X}}_{{t}_m}^{l} = f_m ( \widetilde{X}_m^1, \ldots, \widetilde{X}_m^l, {X}_m^{l+1}, \ldots, {X}_m^n )$, $\ 1\le m \le t$, and $\widetilde{\bm{X}}_{t}^{l} = ( \widetilde{{X}}_{t_1}^{l}, \ldots, \widetilde{{X}}_{t_t}^{l} )$, $l\in\{k,j\}$.
\end{theorem}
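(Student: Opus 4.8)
The plan is to reduce the first assertion, by conditioning on the ``other'' parameter coordinates $\bm{Z}^i$, to a statement about a univariate-slice composition of the supermodular test function, and then to obtain the second assertion by iterating the first one along $l=1,\ldots,i$ and concatenating with the transitivity of $\le_{sm}$. For the first assertion I would invoke the test-function characterisation of the supermodular order \cite{shaked2007stochastic}\cite{muller2002comparison}: it suffices to show $\mathbb{E}\qty[g(X_1,\ldots,X_t)]\le \mathbb{E}\qty[g(\widetilde{X}_1,\ldots,\widetilde{X}_t)]$ for every supermodular $g:\mathbb{R}^t\to\mathbb{R}$ for which both sides are finite. Fix such a $g$ and condition on $\bm{Z}^i=\bm{z}^i$. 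On that event $X_m=f_m(z_m^1,\ldots,z_m^{i-1},X_m^i,z_m^{i+1},\ldots,z_m^n)=:\phi_m^{\bm{z}^i}(X_m^i)$, and likewise $\widetilde{X}_m=\phi_m^{\bm{z}^i}(\widetilde{X}_m^i)$ with the very same univariate slice $\phi_m^{\bm{z}^i}$. Put $h_{\bm{z}^i}(u_1,\ldots,u_t):=g\qty(\phi_1^{\bm{z}^i}(u_1),\ldots,\phi_t^{\bm{z}^i}(u_t))$. By hypothesis the slices $\phi_1^{\bm{z}^i},\ldots,\phi_t^{\bm{z}^i}$ are monotone in one common direction, so $h_{\bm{z}^i}$ is supermodular; this is the univariate specialisation of Lemma \ref{lemma-sm-function}, where the identity between the min/max of the slice values and the slices of the min/max is automatic from monotonicity. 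Feeding the supermodular function $h_{\bm{z}^i}$ into the conditional hypothesis $\qty(\bm{X}^i\mid\bm{Z}^i=\bm{z}^i)\le_{sm}\qty(\widetilde{\bm{X}}^i\mid\bm{Z}^i=\bm{z}^i)$ yields $\mathbb{E}\qty[g(X_1,\ldots,X_t)\mid\bm{Z}^i=\bm{z}^i]\le\mathbb{E}\qty[g(\widetilde{X}_1,\ldots,\widetilde{X}_t)\mid\bm{Z}^i=\bm{z}^i]$ for almost every $\bm{z}^i$; since the coordinates $l\neq i$ are untouched, $\bm{Z}^i$ carries the same law in both comparands, and integrating over it gives the claim.

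For the second assertion, observe that $\widetilde{\bm{X}}_{t}^{0}=(X_1,\ldots,X_t)$ and that, by reflexivity and transitivity of $\le_{sm}$ \cite{muller2002comparison}, it is enough to prove the one-step relations $\widetilde{\bm{X}}_{t}^{l-1}\le_{sm}\widetilde{\bm{X}}_{t}^{l}$ for $1\le l\le i$. The trajectories $\widetilde{\bm{X}}_{t}^{l-1}$ and $\widetilde{\bm{X}}_{t}^{l}$ differ only in that the $l$-th parameter process is the original $\{X_m^l\}$ in the first and the modified $\{\widetilde{X}_m^l\}$ in the second, every other coordinate --- tilded for indices below $l$, original for indices above $l$ --- being shared. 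Hence each one-step relation is exactly an instance of the first assertion, applied with $i$ replaced by $l$ and with the ``other coordinates'' taken to be $\bm{W}_m^l=(\widetilde{X}_m^1,\ldots,\widetilde{X}_m^{l-1},X_m^{l+1},\ldots,X_m^n)$, invoking the $l$-th-slice monotonicity and the conditional order $\le_{sm}$ relative to $\bm{W}^l=\{\bm{W}_1^l,\ldots,\bm{W}_t^l\}$. Chaining these $i$ relations delivers $\widetilde{\bm{X}}_{t}^{k}\le_{sm}\widetilde{\bm{X}}_{t}^{j}$ for all $0\le k\le j\le i$.

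The main obstacle is localised in the passage to, and between, the conditional statements. First, one must verify that composing $g$ with the coordinatewise slices preserves supermodularity; this goes through cleanly precisely because the hypothesis forces all slices to be monotone in a single direction --- were the directions allowed to disagree, supermodularity would generically be lost --- and it is the step where Lemma \ref{lemma-sm-function} does the work. Second, in the iteration the $l$-th-slice monotonicity and the conditional supermodular order must be available relative to $\bm{W}^l$, whose first $l-1$ coordinates are already the tilded versions, and not merely relative to the original $\bm{Z}^l$. This is harmless when the monotonicity of $f_m$ holds throughout $\mathbb{R}^{n-1}$ --- as it does for the function classes $\mathfrak{F}_{UI}$ and $\mathfrak{F}_{UIA}$ of interest here --- and when the conditional orderings are posited for every point of the relevant supports; but it does rely on that uniformity, which I would make explicit before running the induction.
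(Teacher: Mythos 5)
Your proposal is correct and takes essentially the same route as the paper: the paper's proof is a one-line reduction that conditions on the other coordinates and reruns the independence-scenario argument (compose the supermodular test function with the common-direction monotone slices as in Lemma \ref{lemma-sm-function}, apply the conditional supermodular hypothesis, integrate out $\bm{Z}^i$, and chain by reflexivity/transitivity for the second claim), which is exactly what you spell out. Your explicit caveat that the iteration needs the ordering conditional on the partially tilded vector $\bm{W}^l$ rather than on $\bm{Z}^l$ is a uniformity point the paper leaves implicit (compare its remark that the spatial dependence must not influence the conditional temporal ordering).
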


\begin{proof}
The proof follows analogically to the proof of the independence scenario \cite{sun2018hidden}, by using the conditional probability.
\end{proof}

\begin{remark}
The stochastic orders $\qty(\bm{X}|\bm{Z} = \bm{z}) \le_{sm} \qty(\bm{Y}|\bm{Z} = \bm{z})$ and $\mathbb{E}\qty(\bm{X}|\bm{Z}) \le_{sm} \mathbb{E}\qty(\bm{Y}|\bm{Z})$ correspond to the conditional supermodular order in the sense of the uniform conditional ordering
\cite{whitt1980uniform}\cite{ruschendorf1991conditional}.
On the one hand, the conditional formulation influences the stochastic ordering of the probability measures, moreover, it influences the property of the functions of the random variables, e.g., the monotonicity.
\end{remark}

\begin{remark}
There is an implicit condition that the spatial dependence must not influence the temporal dependence ordering, or the temporal dependence ordering is conditional on the spatial dependence.
Specifically, if spatial independence is assumed, the conditional event disappears.
On the other hand, it is interesting to investigate what type of spatial dependence sufficiently imply the conditional ordering.
\end{remark}

\begin{remark}
It is interesting to investigate the conditional probability and conditional stochastic order expression of the spatial dependence manipulation scenario, e.g., Theorem \ref{theorem-sm-sp-dependent}.
\end{remark}

\begin{remark}
As an example of conditional probability, the function of two random variables $f(X,Y)$,
the independence assumption implies $\mathbb{E}\qty[f(X,Y)] = \mathbb{E}_{X}\mathbb{E}_{Y}\qty[f(X,Y)]$, while the absence of independence implies that $\mathbb{E}\qty[f(X,Y)] = \mathbb{E}_{X}\mathbb{E}_{Y|X=x}\qty[f(X,Y)| X=x]$.
\end{remark}

\begin{remark}
An an example of conditional monotonicity of functions, let $f(x,y) = x^y$, $x>0$, then $f(x,y|y>0)$ is increasing at $x$, $f(x,y|y<0)$ is decreasing at $x$, and $f(x,y|y=0)=1$ is constant.
\end{remark}

\begin{remark}
It is interesting to extend the results in Theorem \ref{theorem-sm-dependence} and the corresponding results without conditional probability to the increasing supermodular order $\le_{ism}$ and the symmetric supermodular order $\le_{symsm}$.
\end{remark}

\subsection{Different Marginals}

We present a result about the manipulation of stochastic process based on the marginals.

\begin{theorem}\label{theorem-marginal-copula}
Assume the random parameters are spatially independent and temporally dependent.
If, for any $1\le i\le n$, 
$
\left( X_1^i, X_2^i, \ldots, X_t^i \right) \le_{dcx} \left( \widetilde{X}_1^i, \widetilde{X}_2^i, \ldots, \widetilde{X}_t^i \right), 
$
$f_j\qty( X_j^1\ldots, X_j^n ) \in \mathfrak{F}_{USIA}$, $\forall 1\le j\le t$, 
and $\left( X_1^i, X_2^i, \ldots, X_t^i \right)$ and $\left( {\widetilde{X}}_1^i, {\widetilde{X}}_2^i, \ldots, {\widetilde{X}}_t^i \right)$ have the common conditionally increasing copula 
$C^i_t \qty(u_1^i,\ldots, u_t^i)$,
then
\begin{equation}
 \left( X_1, X_2, \ldots, X_t \right) \le_{dcx} \left( {\widetilde{X}}_1, {\widetilde{X}}_2, \ldots, {\widetilde{X}}_t \right),
\end{equation}
where ${\widetilde{X}}_j = f_j ( X_j^1, \ldots, X_j^{i-1}, \widetilde{X}_j^i, X_j^{i+1}, \ldots, X_j^n )$, $\forall 1\le j\le t$,
and 
\begin{equation}
\sum_{j=1}^{t} \alpha_j X_j \le_{cx} \sum_{j=1}^{t} \alpha_j \widetilde{X}_j.
\end{equation}
where $\alpha_j \in \mathbb{R}_{\ge 0}$, $\forall 1\le j\le t$.
\end{theorem}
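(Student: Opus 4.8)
The plan is to strip away every parameter process except the $i$-th one by conditioning, to note that the functional then collapses to a coordinatewise strictly increasing affine map, to transport the hypothesized directionally convex ordering through that map, and finally to average back and specialize to weighted sums.

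Fix $i$. Since $\widetilde{X}_j=f_j(X_j^1,\ldots,X_j^{i-1},\widetilde{X}_j^i,X_j^{i+1},\ldots,X_j^n)$ shares all arguments except the $i$-th with $X_j=f_j(X_j^1,\ldots,X_j^n)$, I would condition on $\bm{Z}=\{X_j^k:\ k\neq i,\ 1\le j\le t\}$. Spatial independence gives $(X_1^i,\ldots,X_t^i)\perp\bm{Z}$ and $(\widetilde{X}_1^i,\ldots,\widetilde{X}_t^i)\perp\bm{Z}$, so the conditional laws of these vectors given $\bm{Z}=\bm{z}$ coincide with their unconditional laws; in particular $(X_1^i,\ldots,X_t^i)\le_{dcx}(\widetilde{X}_1^i,\ldots,\widetilde{X}_t^i)$ persists for every $\bm{z}$. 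The role of the common conditionally increasing copula $C_t^i$ is to realize the original and the modified $i$-th process on one probability space; equivalently, had one started only from marginal convex orderings $X_j^i\le_{cx}\widetilde{X}_j^i$, the conditional-increasingness of $C_t^i$ is exactly what lifts them to this per-process $\le_{dcx}$, cf.\ \cite{muller2002comparison}\cite{shaked2007stochastic}. Because $f_j\in\mathfrak{F}_{USIA}$, for $\bm{Z}=\bm{z}$ fixed the section $x\mapsto f_j(z_j^1,\ldots,z_j^{i-1},x,z_j^{i+1},\ldots,z_j^n)=a_j(\bm{z})+b_j(\bm{z})\,x$ is affine and \emph{strictly} increasing, so $b_j(\bm{z})>0$; hence $[(X_1,\ldots,X_t)\mid\bm{Z}=\bm{z}]$ has the law of $\psi^{\bm{z}}(X_1^i,\ldots,X_t^i)$ and $[(\widetilde{X}_1,\ldots,\widetilde{X}_t)\mid\bm{Z}=\bm{z}]$ the law of $\psi^{\bm{z}}(\widetilde{X}_1^i,\ldots,\widetilde{X}_t^i)$, where $\psi^{\bm{z}}(\bm{u})=(a_1(\bm{z})+b_1(\bm{z})u_1,\ldots,a_t(\bm{z})+b_t(\bm{z})u_t)$.

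The next step is the elementary closure fact that a coordinatewise strictly increasing affine map preserves $\le_{dcx}$: for a directionally convex $g$, the mixed forward differences of $g\circ\psi^{\bm{z}}$ obey $\Delta^p_\epsilon\Delta^q_\delta(g\circ\psi^{\bm{z}})(\bm{x})=\Delta^p_{b_p\epsilon}\Delta^q_{b_q\delta}g(\psi^{\bm{z}}(\bm{x}))\ge 0$ for all $p,q$ and $\epsilon,\delta>0$ (a positive-slope affine map is a lattice isomorphism, hence preserves supermodularity, and an affine inner map preserves componentwise convexity), whence $g\circ\psi^{\bm{z}}$ is directionally convex and $\bm{U}\le_{dcx}\bm{V}\Rightarrow\psi^{\bm{z}}(\bm{U})\le_{dcx}\psi^{\bm{z}}(\bm{V})$. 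Applying this to the conditional ordering gives $[(X_1,\ldots,X_t)\mid\bm{Z}=\bm{z}]\le_{dcx}[(\widetilde{X}_1,\ldots,\widetilde{X}_t)\mid\bm{Z}=\bm{z}]$ for $P_{\bm{Z}}$-almost every $\bm{z}$; since $\bm{Z}$ carries one and the same law on both sides and $\le_{dcx}$ is an integral order, integrating over $\bm{z}$ yields $(X_1,\ldots,X_t)\le_{dcx}(\widetilde{X}_1,\ldots,\widetilde{X}_t)$. Finally, for $\alpha_j\ge 0$ and any convex $h$, the map $(x_1,\ldots,x_t)\mapsto h(\sum_{j}\alpha_jx_j)$ is directionally convex (componentwise convex because $h$ is convex and $\alpha_j\ge 0$, supermodular because its mixed second differences are nonnegative), so the $\le_{dcx}$ relation just obtained specializes to $\sum_{j=1}^t\alpha_jX_j\le_{cx}\sum_{j=1}^t\alpha_j\widetilde{X}_j$.

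The main obstacle I anticipate is the pair of facts underlying the reduction: that conditioning really does turn $f_j$ into a strictly increasing \emph{affine} section of its $i$-th argument — which is precisely why membership in $\mathfrak{F}_{USIA}$ rather than merely in $\mathfrak{F}_{UI}$ is required, since a merely increasing inner map need not preserve the componentwise convexity of $g$ — and that the conditionally increasing copula is exactly the ingredient that makes the per-coordinate $\le_{dcx}$ transportable through the conditioning. One should also check that replacing $X^i$ by $\widetilde{X}^i$ does not disturb the spatial independence, so that the mixing variable $\bm{Z}$ has a single well-defined distribution on both sides.
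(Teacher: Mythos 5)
Your proof is correct, but it takes a genuinely different route from the paper's. The paper only uses the marginal consequences of the hypothesis: from $(X_1^i,\ldots,X_t^i)\le_{dcx}(\widetilde{X}_1^i,\ldots,\widetilde{X}_t^i)$ it extracts $X_j^i\le_{cx}\widetilde{X}_j^i$, pushes each of these through the affine strictly increasing section of $f_j$ (composition of convex with affine is convex), then invokes copula invariance under strictly increasing transformations together with the M\"uller--Scarsini-type result that a common conditionally increasing copula plus convex-ordered marginals yields $\le_{dcx}$, and finally integrates out the other coordinates; the weighted-sum conclusion is then quoted from the same reference. You instead condition on $\bm{Z}=\{X_j^k:k\neq i\}$, use spatial independence to keep the unconditional laws of the $i$-th processes, transport the \emph{full} vector $\le_{dcx}$ hypothesis through the coordinatewise increasing affine map $\psi^{\bm z}$ (closure of $\le_{dcx}$ under such maps, verified via mixed differences), and close by mixture and by the directional convexity of $h(\sum_j\alpha_j x_j)$. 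The trade-off is instructive: your argument never uses the conditionally increasing copula assumption at all (it would only be needed, as you note, to lift marginal $\le_{cx}$ orderings to the per-process $\le_{dcx}$), and it would go through with merely nondecreasing affine sections, so it proves the theorem under strictly weaker hypotheses; the paper's argument, conversely, uses only the marginal part of the $\le_{dcx}$ hypothesis but leans essentially on the copula condition and on strict monotonicity for copula invariance, and it is the template that its later corollary (the $\le_{idcx}$/$\le_{icx}$ extension via Balakrishnan et al.) follows. Your closing caveats --- that $\mathfrak{F}_{USIA}$ rather than $\mathfrak{F}_{UI}$ is what makes the conditional sections affine, and that $\widetilde{\bm X}{}^i$ must remain spatially independent of $\bm Z$ so the mixing law is the same on both sides --- are exactly the implicit assumptions the paper also relies on.
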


\begin{proof}
Without loss of generality, we consider the first variate.
We have 
$\qty(X_1^1, \ldots, X_t^1) \le_{dcx} \qty(\widetilde{X}_1^1, \ldots, \widetilde{X}_t^1) 
\implies X_j^1 \le_{cx} \widetilde{X}_j^1,\ \forall 1\le j\le t  
$.
Considering that the composition $g\circ f$ of a convex function $g$ and an affine function $f$ is a convex function \cite{simchi2005logic}, we have 
$
f_j \qty( X_j^1, x_j^2, \ldots, x_j^n ) \le_{cx} f_j \qty( \widetilde{X}_j^1, {x}_j^2, \ldots, x_j^n ), \ \forall \qty(x_j^2, \ldots, x_j^n) \in \qty(X_j^2,\ldots, X_j^n),\ \forall 1\le j\le t
$.

Since the functional $\qty{f_j}_{1\le j\le t}$ is unconditionally increasing, the copula of the sequence, $\qty{ f_j \qty( X_j^1, x_j^2, \ldots, x_j^n ) }_{1\le j\le t}$, equals the copula of the sequence $\qty{ X_j^1}_{1\le j\le t}$.
Thus, we obtain $\qty( f_1\qty( X_1^1, x_1^2, \ldots, x_1^n ), \ldots, f_t\qty( X_t^1, x_t^2, \ldots, x_t^n ) ) \le_{dcx} \qty( f_1\qty( \widetilde{X}_1^1, x_1^2, \ldots, x_1^n ), \ldots, f_t\qty( \widetilde{X}_t^1, x_t^2, \ldots, x_t^n ) )$, $\forall \qty(x_j^2, \ldots, x_j^n) \in \qty(X_j^2,\ldots, X_j^n)$, $\forall 1\le j\le t$.
This conditional case follows the proof of the one dimensional result in \cite{muller2001stochastic}.
By taking the expectation, we obtain
$ \left( X_1, X_2, \ldots, X_t \right) \le_{dcx} \left( {\widetilde{X}}_1, {\widetilde{X}}_2, \ldots, {\widetilde{X}}_t \right) $, which further implies the convex order of the weighted sum \cite{muller2001stochastic}.
\end{proof}

\begin{remark}
It is interesting to consider the copula construction of the functional sequence based on the copulas of each sub-sequence, especially for some special cases, e.g., without Granger causality .
\end{remark}

\begin{remark}
The invariance of copula under strictly increasing transformation of random variables requires that the functionals are strictly increasing.
\end{remark}

\begin{remark}
This result is extensible to any functions $f_j\qty( X_j^1\ldots, X_j^n ) \in \mathfrak{F}$, $\forall 1\le j\le t$, which are componentwisely and strictly increasing and preserves the convexity under the composition $g\circ f_j\qty(X_j^i| \qty{\bm{x}_j \setminus x_j^i} )$, $\forall 1\le i \le n$, with a convex function $g: \mathbb{R} \rightarrow \mathbb{R}$.
\end{remark}

\begin{remark}
The manipulation of marginal distributions is more complicated in the sense that there is a more involved requirement on the functionals.
\end{remark}

We present the result of the strength of the marginal distribution manipulation.

\begin{theorem}\label{theorem-dcx-manipulation-strength}
Assume the random parameters are spatially independent and temporally dependent.
If, for all $1\le i\le n$, 
$\left( X_1^i, X_2^i, \ldots, X_t^i \right) \le_{dcx} \left( \widetilde{X}_1^i, \widetilde{X}_2^i, \ldots, \widetilde{X}_t^i \right)$,
$f_j\qty( X_j^1\ldots, X_j^n ) \in \mathfrak{F}_{USIA}$, $\forall 1\le j\le t$, 
and $\left( X_1^i, X_2^i, \ldots, X_t^i \right)$ and $\left( {\widetilde{X}}_1^i, {\widetilde{X}}_2^i, \ldots, {\widetilde{X}}_t^i \right)$ have the common conditionally increasing copula 
$C^i_t \qty(u_1^i,\ldots, u_t^i)$,
then
\begin{equation}
\widetilde{\bm{X}}_{t}^{k} \le_{dcx} \widetilde{\bm{X}}_{t}^{k^\prime},\ \forall 0\le k \le k^\prime \le n, 
\end{equation}
with $\widetilde{{X}}_{{t}_m}^{l} = f_m ( \widetilde{X}_m^1, \ldots, \widetilde{X}_m^l, {X}_m^{l+1}, \ldots, {X}_m^n )$, $\ 1\le m \le t$, and $\widetilde{\bm{X}}_{t}^{l} = ( \widetilde{{X}}_{t_1}^{l}, \ldots, \widetilde{{X}}_{t_t}^{l} )$, $l\in\{k,k^\prime \}$, 
and 
\begin{equation}
\sum_{j=1}^{t} \alpha_j X_j^k \le_{cx} \sum_{j=1}^{t} \alpha_j \widetilde{X}_j^{k^\prime}.
\end{equation}
where $\alpha_j \in \mathbb{R}_{\ge 0}$, $\forall 1\le j\le t$.
\end{theorem}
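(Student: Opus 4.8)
The plan is to obtain the conclusion as a telescoping chain of single–parameter swaps, each handled by the mechanism already developed for Theorem~\ref{theorem-marginal-copula}. Since $\le_{dcx}$ is defined by a class of test functions, it is reflexive and transitive, so it suffices to establish the one–step relation $\widetilde{\bm{X}}_{t}^{l} \le_{dcx} \widetilde{\bm{X}}_{t}^{l+1}$ for every $0\le l\le n-1$; the general statement $\widetilde{\bm{X}}_{t}^{k}\le_{dcx}\widetilde{\bm{X}}_{t}^{k^\prime}$ with $k\le k^\prime$ follows by concatenating these steps. The endpoints are $\widetilde{\bm{X}}_{t}^{0}=(X_1,\ldots,X_t)$ (all parameters original) and $\widetilde{\bm{X}}_{t}^{n}=(\widetilde{X}_1,\ldots,\widetilde{X}_t)$ (all parameters swapped), so Theorem~\ref{theorem-marginal-copula} is recovered as the extreme case $k=0$, $k^\prime=n$.

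For the single step, observe that $\widetilde{\bm{X}}_{t}^{l}$ and $\widetilde{\bm{X}}_{t}^{l+1}$ are built from the same functionals $f_m$ and the same background inputs $\widetilde{X}_m^1,\ldots,\widetilde{X}_m^l,X_m^{l+2},\ldots,X_m^n$, differing only in that the $(l+1)$-th input is $X_m^{l+1}$ in the former and $\widetilde{X}_m^{l+1}$ in the latter. By the spatial independence assumption, $X^{l+1}$ and $\widetilde{X}^{l+1}$ are each independent of all the background processes, so I would condition on a realization $(x_m^1,\ldots,x_m^l,x_m^{l+2},\ldots,x_m^n)$ of the background and repeat the argument of Theorem~\ref{theorem-marginal-copula}: first, $\qty(X_1^{l+1},\ldots,X_t^{l+1})\le_{dcx}\qty(\widetilde{X}_1^{l+1},\ldots,\widetilde{X}_t^{l+1})$ yields the marginal convex orders $X_m^{l+1}\le_{cx}\widetilde{X}_m^{l+1}$; second, since $f_m\in\mathfrak{F}_{USIA}$, the slot map $x\mapsto f_m(\ldots,x,\ldots)$ is affine and increasing in the $(l+1)$-th argument, so its composition with any convex $g$ stays convex and the convex order is transported to the conditioned slices; third, because $f_m$ is unconditionally strictly increasing and $\{X_m^{l+1}\}_m$ and $\{\widetilde{X}_m^{l+1}\}_m$ share the common conditionally increasing copula $C_t^{l+1}$, the functional sequences $\{f_m(\ldots,X_m^{l+1},\ldots)\}_m$ and $\{f_m(\ldots,\widetilde{X}_m^{l+1},\ldots)\}_m$ inherit that copula, so the one–dimensional result of \cite{muller2001stochastic} upgrades the componentwise convex orders to a conditional $\le_{dcx}$ between the slices. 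Taking expectation over the background removes the conditioning and gives $\widetilde{\bm{X}}_{t}^{l}\le_{dcx}\widetilde{\bm{X}}_{t}^{l+1}$.

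The weighted–sum assertion is then immediate: for $\alpha_j\ge 0$ and any convex $g:\mathbb{R}\to\mathbb{R}$, the map $\bm{x}\mapsto g\qty(\sum_{j=1}^{t}\alpha_j x_j)$ has nonnegative pure and mixed second differences, hence is directionally convex, so $\widetilde{\bm{X}}_{t}^{k}\le_{dcx}\widetilde{\bm{X}}_{t}^{k^\prime}$ implies $\sum_{j=1}^{t}\alpha_j \widetilde{X}_{t_j}^{k}\le_{cx}\sum_{j=1}^{t}\alpha_j \widetilde{X}_{t_j}^{k^\prime}$. I expect the main obstacle to lie in the third ingredient of the single step: one must verify that swapping only the $(l+1)$-th input leaves the copula of the functional sequence unchanged \emph{even though the background now consists of a mixture of tilde and non-tilde processes}, i.e.\ that the copula-invariance argument of Theorem~\ref{theorem-marginal-copula} is genuinely insensitive to which admissible background configuration is held fixed, and that the conditionally increasing property required to invoke \cite{muller2001stochastic} survives this conditioning.
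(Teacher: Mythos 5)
Your proposal is correct and follows essentially the same route as the paper: the paper's own proof is exactly "the proof of Theorem~\ref{theorem-marginal-copula} plus transitivity of $\le_{dcx}$," which is your telescoping chain of single-parameter swaps (with the mixed tilde/non-tilde background handled by spatial independence and the copula-invariance argument), followed by the observation that $\bm{x}\mapsto g\qty(\sum_j \alpha_j x_j)$ is directionally convex to get the weighted-sum convex order. Your spelled-out version, including the caveat about the copula argument being insensitive to which background configuration is held fixed, is a faithful elaboration of the paper's terse proof.
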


\begin{proof}
The proof follows the proof of Theorem \ref{theorem-marginal-copula} and the transitivity of the directionally convex order.
\end{proof}

\begin{remark}
The results show that the manipulation of the marginal distributions of one dimension is able to transform the distribution ordering properties of the overall stochastic process, the more dimensions the more manipulation strength.
In addition, the marginal distribution manipulation is feasible only for positive dependence, while the dependence structure manipulation has no dependence bias.
\end{remark}

\begin{remark}
The results have a dependence control utility.
On the one hand, it means that the negative dependence endows the advantage of reducing the power or capacity cost while attaining a higher performance, which is a physical perspective, on the other hand, it means that the advantage of negative dependence maps to the property of the negative dependence and the convex order, which is a mathematical perspective.
Thus, the mathematical property corresponds to a physical resource, which can be taken advantage of and exploited.
In addition, it indicates that the marginal distribution manipulation is invalid for negative dependence and should be avoided in practice.
\end{remark}

\begin{corollary}
The results in Theorem \ref{theorem-marginal-copula} and Theorem \ref{theorem-dcx-manipulation-strength} extend to the increasing directionally convex order $\le_{idcx}$ of the random vectors and the corresponding increasing convex order $\le_{icx}$ of the weighted partial sums.
\end{corollary}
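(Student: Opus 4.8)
The plan is to run the arguments of Theorem~\ref{theorem-marginal-copula} and Theorem~\ref{theorem-dcx-manipulation-strength} essentially unchanged, replacing $\le_{dcx}$ by $\le_{idcx}$ and $\le_{cx}$ by $\le_{icx}$ throughout, and to check that every structural property those proofs invoke has a monotone counterpart. Recall that $\le_{idcx}$ is the integral stochastic order generated by the functions that are simultaneously directionally convex and nondecreasing in each coordinate, and that it forces the componentwise relation (apply the order to the one-coordinate increasing convex function $\bm{x} \mapsto \phi(x_j)$). Hence, as the first step, from $\qty( X_1^i, \ldots, X_t^i ) \le_{idcx} \qty( \widetilde{X}_1^i, \ldots, \widetilde{X}_t^i )$ I would extract $X_j^i \le_{icx} \widetilde{X}_j^i$ for every $1 \le j \le t$; this is the only place the marginal hypothesis enters.

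Next I would verify the three closure facts used in the original proofs, now in their increasing form. First, the composition $g \circ f$ of an increasing convex $g : \mathbb{R} \to \mathbb{R}$ with an increasing affine $f$ is again increasing convex \cite{simchi2005logic}; since membership in $\mathfrak{F}_{USIA}$ means that, conditionally on the remaining coordinates, $f_j$ is a strictly increasing affine function of its $i$-th argument, this gives $f_j\qty( x_j^1, \ldots, X_j^i, \ldots, x_j^n ) \le_{icx} f_j\qty( x_j^1, \ldots, \widetilde{X}_j^i, \ldots, x_j^n )$ for each fixed choice of the untouched coordinates. Second, the strict monotonicity of $f_j$ implies that the copula of $\qty{ f_j\qty( x_j^1, \ldots, X_j^i, \ldots, x_j^n ) }_{1\le j\le t}$ coincides with the copula of $\qty{ X_j^i }_{1\le j\le t}$, so the common conditionally increasing copula hypothesis survives the transformation; the increasing analogue of the one-dimensional comparison result of \cite{muller2001stochastic} (componentwise $\le_{icx}$ plus a shared conditionally increasing copula implies $\le_{idcx}$) then yields the conditional ordering $\qty( f_1\qty(\cdot, X_1^i, \cdot), \ldots, f_t\qty(\cdot, X_t^i, \cdot) ) \le_{idcx} \qty( f_1\qty(\cdot, \widetilde{X}_1^i, \cdot), \ldots, f_t\qty(\cdot, \widetilde{X}_t^i, \cdot) )$ for every admissible value of the conditioning coordinates. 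Third, $\le_{idcx}$ is closed under mixtures, so integrating out $\qty(X_j^1, \ldots, X_j^{i-1}, X_j^{i+1}, \ldots, X_j^n)$ against their (possibly dependent) joint law removes the conditioning and delivers $\qty( X_1, \ldots, X_t ) \le_{idcx} \qty( \widetilde{X}_1, \ldots, \widetilde{X}_t )$.

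For the partial-sum statement, I note that for nonnegative weights $\alpha_j$ the map $\qty(x_1, \ldots, x_t) \mapsto \phi\qty( \sum_{j=1}^{t} \alpha_j x_j )$ is increasing directionally convex whenever $\phi$ is increasing convex, so $\qty( X_1, \ldots, X_t ) \le_{idcx} \qty( \widetilde{X}_1, \ldots, \widetilde{X}_t )$ immediately implies $\sum_{j=1}^{t} \alpha_j X_j \le_{icx} \sum_{j=1}^{t} \alpha_j \widetilde{X}_j$. The extension of Theorem~\ref{theorem-dcx-manipulation-strength} then follows by applying this construction one coordinate at a time and chaining the resulting comparisons through the transitivity of $\le_{idcx}$, exactly as in the $\le_{dcx}$ case.

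The part that needs the most care is the conditional-to-unconditional passage: one must be sure that the increasing directionally convex order is preserved under mixing, i.e., that a family of $\le_{idcx}$-ordered pairs integrated against a common mixing measure remains $\le_{idcx}$-ordered. This is the standard closure-under-mixtures property of integral stochastic orders \cite{muller2002comparison}, but it deserves to be stated explicitly here because the mixing measure is the joint law of the coordinates held fixed, which need not be a product measure; once this is granted, the rest is a routine transcription of the proofs of Theorem~\ref{theorem-marginal-copula} and Theorem~\ref{theorem-dcx-manipulation-strength} with the monotone halves of the generator classes carried along.
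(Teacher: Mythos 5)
Your proposal is correct and follows essentially the same route as the paper: rerun the proofs of Theorem \ref{theorem-marginal-copula} and Theorem \ref{theorem-dcx-manipulation-strength} with the increasing counterparts, using that the composition of an increasing convex function with an increasing (strictly) affine function is increasing convex, and the key fact that componentwise $\le_{icx}$ together with a common conditionally increasing copula yields $\le_{idcx}$ (the paper cites this from the literature rather than deriving it as an "increasing analogue"). Your explicit treatment of the conditional-to-unconditional mixing step is a sound elaboration of the expectation step already implicit in the paper's argument.
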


\begin{proof}
The proof follows the proofs of Theorem \ref{theorem-marginal-copula} and Theorem \ref{theorem-dcx-manipulation-strength}, and the result that the composition of an increasing convex function and an increasing affine function is an increasing convex function \cite{shaked2007stochastic}, and the result that \cite{balakrishnan2012increasing}, letting $\bm{X}$ and $\bm{Y}$ be random vectors with a common conditionally increasing copula and assuming that $X_i \le_{icx} Y_i$, $\forall i$, then $\bm{X} \le_{idcx} \bm{Y}$.
\end{proof}

\begin{corollary}
Assume the random parameters are spatially dependent and temporally independent.
If $f_j\qty( X_j^1\ldots, X_j^n )$, $\forall 1\le j\le t$, are increasing and directionally convex, 
and
$
\left( X_j^1, X_j^2, \ldots, X_j^n \right) \le_{idcx} \left( \widetilde{X}_j^1, \widetilde{X}_j^2, \ldots, \widetilde{X}_j^n \right), 
$ 
$\forall 1\le j\le t$, 
then the same results hold as in Theorem \ref{theorem-sm-sp-dependent}, but in the sense of the increasing and directionally convex order $\le_{idcx}$.
\end{corollary}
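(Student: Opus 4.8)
The plan is to run the proof of Theorem~\ref{theorem-sm-sp-dependent} essentially unchanged at the level of stochastic orders, but with the supermodular tools replaced by their directionally convex analogues, and with Lemma~\ref{lemma-sm-function} replaced by the \emph{unconditional} fact that a composition of increasing directionally convex maps acting on disjoint blocks of coordinates is again increasing directionally convex. Recall that a real function on $\mathbb{R}^m$ is directionally convex iff it is supermodular and componentwise convex, and that $\le_{idcx}$ has the two closure properties we use: it is preserved by increasing directionally convex transformations, and it is closed under conjunctions of mutually independent blocks, i.e.\ if $\bm{X}^{(r)}\le_{idcx}\widetilde{\bm{X}}^{(r)}$ for every $r$ with the $\bm{X}^{(r)}$ mutually independent and the $\widetilde{\bm{X}}^{(r)}$ mutually independent, then the concatenations satisfy $\qty(\bm{X}^{(1)},\dots)\le_{idcx}\qty(\widetilde{\bm{X}}^{(1)},\dots)$; both are standard (\cite{muller2002comparison}\cite{shaked2007stochastic}), the latter being the $\le_{idcx}$-counterpart of the conjunction property of $\le_{sm}$ used in Theorem~\ref{theorem-sm-sp-dependent}.

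\textbf{Step 1 (lift the parameter ordering along time).} Using the temporal independence of the blocks $\qty(X_j^1,\dots,X_j^n)$ across $j$, that of the blocks $\qty(\widetilde{X}_j^1,\dots,\widetilde{X}_j^n)$ across $j$, the reflexivity $\qty(X_j^1,\dots,X_j^n)\le_{idcx}\qty(X_j^1,\dots,X_j^n)$, and the hypothesis $\qty(X_j^1,\dots,X_j^n)\le_{idcx}\qty(\widetilde{X}_j^1,\dots,\widetilde{X}_j^n)$, the conjunction closure gives, for a fixed index $j$, the $\le_{idcx}$-ordering in $\mathbb{R}^{tn}$ between the full concatenation of the original blocks and the concatenation with the $j$-th block replaced by its tilde version; iterating and using transitivity of $\le_{idcx}$ (with the same independence bookkeeping as in Theorem~\ref{theorem-sm-sp-dependent}) yields the chained orderings needed for the second conclusion, where the first $k$ time-blocks are in tilde form on the left and the first $j\ge k$ on the right.

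\textbf{Step 2 (the composition $g'$ is increasing directionally convex --- the crux).} Fix an increasing directionally convex $g:\mathbb{R}^t\to\mathbb{R}$ and put $g'\qty(\bm{x}_1,\dots,\bm{x}_t):=g\qty(f_1(\bm{x}_1),\dots,f_t(\bm{x}_t))$ on $\mathbb{R}^{t\times n}$. Then $g'$ is increasing as a composition of increasing maps; it is componentwise convex, since along a single coordinate $x_i^k$ only $f_i$ varies and an increasing convex function composed with a convex function is convex; and it is supermodular, since a within-block mixed difference is nonnegative by $g$ being increasing and convex in its $i$-th argument together with $f_i$ supermodular, while a cross-block mixed difference in a coordinate of $\bm{x}_i$ and a coordinate of $\bm{x}_j$, $i\neq j$, is nonnegative by $g$ being supermodular in its $(i,j)$ pair of arguments together with $f_i,f_j$ increasing. (If preferred, verify these inequalities first for $C^2$ data through the signs of the relevant second partials, then extend by approximation.) Hence $g'$ is increasing directionally convex; this is exactly where, unlike Theorem~\ref{theorem-sm-sp-dependent}, no functional equation relating the $f_t$ to $\wedge,\vee$ is needed.

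\textbf{Step 3 (conclude; main obstacle).} For each increasing directionally convex $g:\mathbb{R}^t\to\mathbb{R}$, applying the ordering of Step~1 to the test function $g'$ of Step~2 gives $\mathbb{E}\qty[g(X_1,\dots,X_t)]\le\mathbb{E}\qty[g(X_1,\dots,\widetilde{X}_j,\dots,X_t)]$ with $X_i=f_i\qty(X_i^1,\dots,X_i^n)$, i.e.\ $\qty(X_1,\dots,X_t)\le_{idcx}\qty(X_1,\dots,\widetilde{X}_j,\dots,X_t)$; the chained statement then follows from the chained ordering of Step~1 and transitivity of $\le_{idcx}$, delivering both conclusions of Theorem~\ref{theorem-sm-sp-dependent} in the $\le_{idcx}$ sense. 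I expect the only genuinely nontrivial point to be Step~2: one has to verify simultaneously that the disjoint-block composition preserves supermodularity \emph{and} componentwise convexity, being careful that it is the monotonicity of $g$ (not merely its convexity) that absorbs the curvature contributed by each $f_i$, and that it is the supermodularity of each $f_i$ (not merely componentwise convexity) that handles the within-block cross differences; everything else is bookkeeping with the closure properties of $\le_{idcx}$, in parallel with the supermodular proof.
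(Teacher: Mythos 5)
Your proof is correct, but it is organized differently from the paper's own argument: you transplant the supermodular proof of Theorem~\ref{theorem-sm-sp-dependent} wholesale, i.e.\ you first use the conjunction/closure property of $\le_{idcx}$ for independent blocks to order the concatenated parameter vectors in $\mathbb{R}^{t\times n}$, and then verify (your Step~2) that the disjoint-block composition $g'=g\qty(f_1,\ldots,f_t)$ of an increasing directionally convex $g:\mathbb{R}^t\to\mathbb{R}$ with increasing directionally convex $f_i$ is again increasing directionally convex; that verification is sound (increasingness, componentwise convexity, within-block and cross-block mixed differences are all handled correctly) and is in effect the multivariate composition result the paper itself invokes elsewhere for the partial-sum theorem. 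The paper's proof of this corollary instead reverses the order of operations and exploits the temporal independence at the output level: it first composes a scalar increasing convex $g:\mathbb{R}\to\mathbb{R}$ with $f_j$ (increasing convex $\circ$ increasing directionally convex is increasing directionally convex) to deduce $X_j\le_{icx}\widetilde{X}_j$ for each time index, and then uses the fact that increasing convex order of the marginals of a vector with independent components implies the $\le_{idcx}$ order of the vector. Your route is longer but needs no reduction to marginal comparisons, so it stays closer to the template of Theorem~\ref{theorem-sm-sp-dependent} and would survive in settings where the output-level independence cannot be leveraged; the paper's route is shorter precisely because temporal independence makes the componentwise $\le_{icx}$ comparison sufficient. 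Both deliver the two conclusions (single replacement and the chained version via transitivity), so there is no gap to repair.
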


\begin{proof}
Note the composition of an increasing and convex function $g: \mathbb{R} \rightarrow \mathbb{R}$ and an increasing and directionally convex function $f: \mathbb{R}^n \rightarrow \mathbb{R}$ is an increasing and directionally convex function $g\circ f$ \cite{shaked2007stochastic}.
Then, the proof follows that the increasing and convex order of each elements implies the increasing and directionally convex order of the random vector with independent elements.
\end{proof}

\begin{remark}
It is interesting to extend the temporal and spatial manipulation results to the conditional (increasing) directionally convex order.
\end{remark}

Assuming independence among the random vectors, we present the directionally convex order result for random sums, which are not necessarily independent.

\begin{theorem}\label{theorem-random-multiplexing-dcx}
Let $\bm{X}_j = ( X_{j,1}, \ldots, X_{j,m} )$ and $\bm{Y}_j = ( Y_{j,1}, \ldots, Y_{j,m} )$, $j = 1,2,\ldots$, be two sequences of non-negative random vectors with independence among components, and let $\bm{M} = \left(M_1, M_2, \ldots , M_m \right)$ and $\bm{N} = \left( N_1, N_2, \ldots, N_m \right)$ be two vectors of non-negative integer-valued random variables. Assume that both $\bm{M}$ and $\bm{N}$ are independent of the $\bm{X}_j$'s and $\bm{Y}_j$'s. 
Assume that $X_{j,i} \le_{cx} X_{j+1,i}$, $\forall 1\le i \le m$, $\forall j \ge 1$.

If $\bm{M} \le_{dcx} \bm{N}$, then
\begin{equation}
\left( \sum_{j=1}^{M_1} X_{j,1}, \ldots, \sum_{j=1}^{M_m} X_{j,m}  \right) 
\le_{dcx} \left( \sum_{j=1}^{N_1} X_{j,1}, \ldots, \sum_{j=1}^{N_m} X_{j,m}  \right). \nonumber
\end{equation}
If $\bm{X}_j  \le_{dcx} \bm{Y}_j $, $\forall j$, then
\begin{equation}
\left( \sum_{j=1}^{N_1} X_{j,1}, \ldots, \sum_{j=1}^{N_m} X_{j,m}  \right) 
 \le_{dcx} \left( \sum_{j=1}^{N_1} Y_{j,1}, \ldots, \sum_{j=1}^{N_m} Y_{j,m}  \right). \nonumber
\end{equation}
If $\bm{M} \le_{dcx} \bm{N}$ and $\bm{X}_j  \le_{dcx} \bm{Y}_j $, $\forall j$, then
\begin{equation}
\left( \sum_{j=1}^{M_1} X_{j,1}, \ldots, \sum_{j=1}^{M_m} X_{j,m}  \right) \le_{dcx}
\left( \sum_{j=1}^{N_1} Y_{j,1}, \ldots, \sum_{j=1}^{N_m} Y_{j,m}  \right). \nonumber
\end{equation}
\end{theorem}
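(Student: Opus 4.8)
The plan is to prove the three assertions in the order stated, reducing each to an ordering of \emph{fixed-length} sums by conditioning on the count vectors, and then chaining them by transitivity of $\le_{dcx}$. For the first assertion, I would fix a directionally convex test function $\phi:\mathbb{R}^m\to\mathbb{R}$ and set $g(\bm m)=\mathbb{E}\,\phi(\sum_{j=1}^{m_1}X_{j,1},\dots,\sum_{j=1}^{m_m}X_{j,m})$ for $\bm m\in\mathbb{Z}_{\ge 0}^m$. Since $\bm M,\bm N$ are independent of the $\bm X_j$'s, conditioning gives $\mathbb{E}\,\phi(\text{l.h.s.})=\mathbb{E}\,g(\bm M)$ and $\mathbb{E}\,\phi(\text{r.h.s.})=\mathbb{E}\,g(\bm N)$, so it suffices to show that $g$ is directionally convex on the lattice (its piecewise multilinear interpolation is then directionally convex on $\mathbb{R}^m$, which is what lets the hypothesis $\bm M\le_{dcx}\bm N$ apply). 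Recall $g$ is directionally convex iff (i) its mixed second difference in any two distinct coordinate directions is nonnegative and (ii) its second difference in each single direction $\bm e_i$ is nonnegative. Part (i) is immediate: increasing $m_i$ and $m_k$ ($i\neq k$) by one adds the nonnegative terms $X_{m_i+1,i}$ and $X_{m_k+1,k}$, and the mixed second difference of $g$ equals (after conditioning on the remaining coordinates' partial sums) the expectation of the corresponding mixed second difference of $\phi$ at those shifts, which is pointwise nonnegative because $\phi$ is supermodular.

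The crux is part (ii), and this is exactly where the hypothesis $X_{j,i}\le_{cx}X_{j+1,i}$ must enter. I would fix a coordinate $i$, condition on the partial sums in the other coordinates so that $\phi$ restricted to coordinate $i$ becomes a deterministic convex function $\psi$, and abbreviate $U=\sum_{j=1}^{m_i}X_{j,i}$, $A=X_{m_i+1,i}$, $B=X_{m_i+2,i}$ (so $A\le_{cx}B$, and $U,A,B$ are independent — this is the only use of the within-coordinate independence). Convexity of $g$ in $\bm e_i$ says $\mathbb{E}[\psi(U+A+B)-\psi(U+A)]\ge\mathbb{E}[\psi(U+A)-\psi(U)]$. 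First, since $B$ is independent of $U+A$ and $a\mapsto\psi(w+a)$ is convex, $A\le_{cx}B$ yields $\mathbb{E}\,\psi(U+A+B)\ge\mathbb{E}\,\psi(U+A+A')$ with $A'$ an independent copy of $A$. Next, put $\Delta(w):=\mathbb{E}\,\psi(w+A')-\psi(w)$, which is increasing because $w\mapsto\psi(w+a)-\psi(w)$ is increasing for every $a\ge 0$ by convexity; using independence one gets $\mathbb{E}\,\psi(U+A+A')-\mathbb{E}\,\psi(U+A)=\mathbb{E}\,\Delta(U+A)$ and $\mathbb{E}\,\psi(U+A)-\mathbb{E}\,\psi(U)=\mathbb{E}\,\Delta(U)$. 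Since $A\ge 0$ and $\Delta$ is increasing, $\mathbb{E}\,\Delta(U+A)\ge\mathbb{E}\,\Delta(U)$, and chaining the three relations gives (ii). I would also record the counterexample (e.g. a first summand degenerate at its mean and a later summand strongly two-valued) showing that convexity of $\phi$ alone does not make $g$ convex in $m_i$, so the convex-order ladder in $j$ cannot be dropped; note that the tempting midpoint split $\psi(U+A)\le\tfrac12\psi(U)+\tfrac12\psi(U+2A)$ points the wrong way, which is why the $\Delta$-monotonicity detour is needed.

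For the second assertion, condition on $\bm N=\bm n$: writing $\sum_{j=1}^{n_i}X_{j,i}=\sum_{j=1}^{n}X_{j,i}$ after setting $X_{j,i}=0$ for $j>n_i$ (with $n=\max_i n_i$), the vector of sums is $\sum_{j=1}^n\bm X_j'$ for independent random vectors $\bm X_j'$ with independent coordinates and $\bm X_j'\le_{dcx}\bm Y_j'$ (freezing a coordinate at the constant $0$ in both vectors preserves $\le_{dcx}$ and coordinate independence). The closure of $\le_{dcx}$ under convolutions of independent random vectors \cite{muller2002comparison}\cite{shaked2007stochastic} then gives the conditional ordering, and averaging over $\bm N$ — common to both sides and independent of the summands — removes the conditioning. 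The third assertion follows by applying the first assertion (with the $X$'s) and the second assertion (with the count vector $\bm N$ on both sides) and invoking transitivity of $\le_{dcx}$.

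The main obstacle I anticipate is step (ii): the coordinatewise convexity of $g$ does not follow from convexity of $\phi$ by any direct splitting, and the argument genuinely needs the $X_{j,i}\le_{cx}X_{j+1,i}$ hypothesis, used to replace the ``new'' summand $B$ by an i.i.d.\ copy $A'$ before exploiting the monotonicity of $\Delta$. A secondary technical point to handle with care is the passage from a function directionally convex only on the integer lattice to a genuinely directionally convex function on $\mathbb{R}^m$ (multilinear interpolation), which is what legitimizes invoking $\bm M\le_{dcx}\bm N$; everything else — the supermodularity part (i), the zero-padding in the second assertion, and the transitivity in the third — is routine.
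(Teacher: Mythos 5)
Your proof is correct, but it takes a more self-contained route than the paper for the key part. The paper does not prove the first assertion at all: it simply cites Pellerey (1999) and Shaked--Shanthikumar, then proves the second assertion by conditioning on $\bm{N}=\bm{n}$, using that $\bm{X}_j \le_{dcx} \bm{Y}_j$ implies $X_{j,i} \le_{cx} Y_{j,i}$, invoking closure of the univariate convex order under convolutions of independent summands to get $\sum_{j=1}^{n_i} X_{j,i} \le_{cx} \sum_{j=1}^{n_i} Y_{j,i}$ for each $i$ (which, by component independence, yields the conditional $\le_{dcx}$ comparison), and integrating over $\bm{N}$; the third assertion is transitivity, as in your plan. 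You instead reprove the cited result directly: the reduction to $\mathbb{E}\, g(\bm{M}) \le \mathbb{E}\, g(\bm{N})$ with $g(\bm{m})=\mathbb{E}\,\phi\bigl(\sum_{j\le m_1}X_{j,1},\ldots,\sum_{j\le m_m}X_{j,m}\bigr)$, the supermodularity of $g$ from the supermodularity of $\phi$, and the coordinatewise convexity of $g$ via replacing the new summand $B$ by an independent copy $A'$ of $A$ (using $A\le_{cx}B$) and then the monotonicity of $\Delta(w)=\mathbb{E}\,\psi(w+A')-\psi(w)$ are all sound, and this makes explicit exactly where the hypothesis $X_{j,i}\le_{cx}X_{j+1,i}$ enters, which the paper's citation hides. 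Your treatment of the second assertion via zero-padding and multivariate $\le_{dcx}$ closure under convolution is also valid and slightly more general than the paper's componentwise argument, though it relies on the same conditioning-and-integrating step. The one point to state carefully is the passage from discrete directional convexity of $g$ on the integer lattice to applicability of $\bm{M}\le_{dcx}\bm{N}$ (defined through directionally convex functions on $\mathbb{R}^m$); the multilinear-interpolation extension (or the known equivalence of the $\le_{dcx}$ order for integer-valued vectors with discretely directionally convex test functions) fills this in, as you acknowledge, so it is a technical citation rather than a gap.
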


\begin{proof}
The first result is available in \cite{pellerey1999stochastic}\cite{shaked2007stochastic}.
For the second result, $\bm{X}_j  \le_{dcx} \bm{Y}_j \implies X_{j,i} \le_{cx} Y_{j,i}$, $\forall 1\le i\le m$, the independence assumption implies that the convex order is closed under convolutions \cite{shaked2007stochastic}, i.e., $\sum_{j=1}^{n_i} X_{j,i} \le_{cx} \sum_{j=1}^{n_i} Y_{j,i}$, $\forall 1\le i\le m$, furthermore, it implies
$
\mathbb{E} \qty[ \phi \left( \sum_{j=1}^{N_1} X_{j,1}, \ldots, \sum_{j=1}^{N_m} X_{j,m}  \right) | \bm{N} = \qty(n_1, \ldots, n_m) ]
 \le \mathbb{E} \qty[ \phi \left( \sum_{j=1}^{N_1} Y_{j,1}, \ldots, \sum_{j=1}^{N_m} Y_{j,m}  \right) | \bm{N} = \qty(n_1, \ldots, n_m) ],
$
where $\phi$ is directionally convex. By integrating for expectation, we obtain the final result.
The third result follows the transitivity of the directionally convex order.
\end{proof}

\begin{corollary}
With proper revisions, the results in Theorem \ref{theorem-random-multiplexing-dcx} extend to the increasing directionally convex order $\le_{idcx}$ and (increasing) componentwise convex order ($\le_{iccx}$) $\le_{ccx}$.
Specifically, the corresponding revisions are $X_{j,i} \le_{icx} (\le_{cx}, \le_{icx})  X_{j+1,i}$, $\bm{M} \le_{idcx} (\le_{ccx}, \le_{iccx}) \bm{N}$, and $\bm{X}_j  \le_{idcx} (\le_{ccx}, \le_{iccx}) \bm{Y}_j$.
\end{corollary}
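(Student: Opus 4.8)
The plan is to re-run the three-part argument behind Theorem~\ref{theorem-random-multiplexing-dcx} essentially verbatim, substituting the generator class of the target order at each stage. Recall that $\le_{idcx}$, $\le_{ccx}$, and $\le_{iccx}$ are the integral stochastic orders induced, respectively, by the increasing directionally convex, the componentwise convex, and the increasing componentwise convex functions; I would use two closure facts common to all three classes. First, the induced univariate order --- $\le_{cx}$ for the plain variants and $\le_{icx}$ for the increasing ones --- is closed under convolution of independent summands \cite{shaked2007stochastic}. Second, a ``conjunction'' property: if $\bm{X}$ and $\bm{Y}$ each have independent components and the induced univariate order holds in each coordinate, then the corresponding multivariate order holds for the two vectors; for $\le_{ccx}$ and $\le_{iccx}$ this is proved by replacing one component at a time and using componentwise convexity, and for $\le_{idcx}$ it is the standard conjunction property of the directionally convex order \cite{shaked2007stochastic}\cite{muller2002comparison}.

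For the first assertion, where only the counting vectors differ ($\bm{M}\le_{\star}\bm{N}$), I would invoke the appropriate generalization of the compound-vector comparisons in \cite{pellerey1999stochastic}\cite{shaked2007stochastic}: under the revised increment hypothesis $X_{j,i}\le_{icx}X_{j+1,i}$ (resp. $X_{j,i}\le_{cx}X_{j+1,i}$ for $\le_{ccx}$), the function $\bm{n}\mapsto\mathbb{E}\big[\phi\big(\sum_{j=1}^{n_1}X_{j,1},\ldots,\sum_{j=1}^{n_m}X_{j,m}\big)\big]$ again belongs to the generator class, so that $\bm{M}\le_{\star}\bm{N}$ is inherited after integrating out the counting vectors. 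The reason only $\le_{icx}$ on the increments is needed in the increasing cases is that the added-on map $x\mapsto\phi(\bm{s}+x\bm{e}_i)-\phi(\bm{s})$ is then increasing and convex rather than merely convex, so an $\le_{icx}$ comparison of successive increments suffices to make the discrete difference of $\mathbb{E}[\phi(\cdot)]$ in direction $i$ monotone.

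For the second assertion, where the summand vectors differ ($\bm{X}_j\le_{\star}\bm{Y}_j$), I would condition on $\bm{N}=(n_1,\ldots,n_m)$. From $\bm{X}_j\le_{\star}\bm{Y}_j$ one reads off the componentwise univariate comparison $X_{j,i}\le_{cx}Y_{j,i}$ (or $\le_{icx}$); convolution closure gives $\sum_{j=1}^{n_i}X_{j,i}\le_{cx}\sum_{j=1}^{n_i}Y_{j,i}$ for each $i$; and the conjunction property lifts this, using independence among components, to the $\le_{\star}$-ordering of the two conditional sum-vectors. Integrating against the law of $\bm{N}$, which is independent of the $\bm{X}_j$'s and $\bm{Y}_j$'s, preserves the order. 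The third assertion then follows from transitivity of $\le_{idcx}$, $\le_{ccx}$, and $\le_{iccx}$, exactly as in the dcx case.

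The step I expect to be the main obstacle is the first assertion for the componentwise orders $\le_{ccx}$ and $\le_{iccx}$: the random-sum comparison theorems in the literature are typically phrased for the (increasing) directionally convex order, so I would have to check that the induction on the coordinates of $\bm{n}$ still closes when the supermodularity half of ``directionally convex'' is dropped. This is plausible since that induction treats one direction at a time and never invokes the sign of a mixed second difference, but the precise form of the increment hypothesis must be matched to each generator class and the base cases rechecked. A minor point worth noting is that the conjunction property for $\le_{ccx}$ requires no dependence structure beyond independence within each vector --- precisely the standing assumption --- so no auxiliary hypotheses creep in.
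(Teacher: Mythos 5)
Your proposal is correct and follows essentially the same route as the paper: the paper's own proof is a one-line appeal to the closure, conjunction, and transitivity properties of each order together with the compound-vector results of \cite{pellerey1999stochastic}\cite{shaked2007stochastic}, which is exactly the argument you spell out (conditioning on $\bm{N}$, convolution closure of $\le_{cx}$/$\le_{icx}$, lifting via the conjunction property, and transitivity). Your elaboration, including the caveat about rechecking the random-sum step for $\le_{ccx}$ and $\le_{iccx}$ where the literature is phrased for the directionally convex case, is simply a more explicit version of what the paper leaves implicit.
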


\begin{proof}
The proof follows the properties of each stochastic orders and preliminary results in \cite{pellerey1999stochastic}\cite{shaked2007stochastic}.
\end{proof}

\begin{remark}
It is interesting to notice the fact that: Let $\bm{X} = \qty(X_1, \ldots, X_m )$ be a set of independent random variables and let $\bm{Y} = \qty( Y_1, \ldots, Y_m )$ be another set of independent random variables, then, $\bm{X} \le_{dcx} (\le_{idcx}) \bm{Y} \iff X_i \le_{cx} (\le_{icx}) Y_i,\ \forall 1\le i\le m \iff \bm{X} \le_{ccx} (\le_{iccx}) \bm{Y}$.
\end{remark}

We study the ordering property of the partial sums under the ordering condition of the sequences.

\begin{theorem}
For the stochastic process, if $\qty(X_{t_1}, \ldots, X_{t_k}) \le_{idcx} \qty(\widetilde{X}_{t_1}, \ldots, \widetilde{X}_{t_k})$, $\forall t_1, \ldots, t_k \in \mathbb{N}$, $\forall k\in \mathbb{N}$, then, we have
\begin{equation}
\qty( \sum_{j_1 \in \mathcal{T}_1} X_{j_1}, \ldots, \sum_{j_k \in \mathcal{T}_k} X_{j_k} ) 
\le_{idcx}
\qty( \sum_{j_1 \in \mathcal{T}_1} \widetilde{X}_{j_1}, \ldots, \sum_{j_k \in \mathcal{T}_k} \widetilde{X}_{j_k} ),
\end{equation}
for any disjoint subsets $\mathcal{T}_1, \ldots, \mathcal{T}_k \in \mathbb{N}$. 
\end{theorem}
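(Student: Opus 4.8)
The plan is to derive the statement from the defining property of $\le_{idcx}$ together with one composition fact: precomposing an increasing directionally convex function with a non-negative ``aggregation'' map (one that sums disjoint groups of coordinates) again produces an increasing directionally convex function. Recall that $\bm{U} \le_{idcx} \bm{V}$ means $\mathbb{E}[\phi(\bm{U})] \le \mathbb{E}[\phi(\bm{V})]$ for every increasing directionally convex $\phi$ for which both expectations exist, and that $\phi : \mathbb{R}^k \to \mathbb{R}$ is directionally convex iff it is convex in each argument separately and supermodular, equivalently iff all its discrete mixed second differences $\Delta_i^\epsilon \Delta_j^\delta \phi$ (over $1 \le i,j \le k$, $\epsilon,\delta > 0$) are non-negative \cite{shaked2007stochastic}.

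First I would fix disjoint $\mathcal{T}_1, \ldots, \mathcal{T}_k \subseteq \mathbb{N}$, treat them as finite for now, put $\mathcal{T} = \mathcal{T}_1 \cup \cdots \cup \mathcal{T}_k$, and apply the hypothesis to the finite sub-collection indexed by $\mathcal{T}$ to obtain $(X_j)_{j \in \mathcal{T}} \le_{idcx} (\widetilde{X}_j)_{j \in \mathcal{T}}$. Given an arbitrary increasing directionally convex $\phi : \mathbb{R}^k \to \mathbb{R}$, define $\psi : \mathbb{R}^{|\mathcal{T}|} \to \mathbb{R}$ by $\psi((x_j)_{j \in \mathcal{T}}) = \phi(\sum_{j \in \mathcal{T}_1} x_j, \ldots, \sum_{j \in \mathcal{T}_k} x_j)$, so that $\psi$ evaluated at $(X_j)_{j \in \mathcal{T}}$, respectively at $(\widetilde{X}_j)_{j \in \mathcal{T}}$, equals $\phi$ applied to the partial-sum vector of $X$, respectively of $\widetilde{X}$.

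The key step, and the only delicate one, is to verify that $\psi$ is increasing directionally convex. Monotonicity is immediate because each variable $x_j$ enters, with coefficient $+1$, into exactly one argument of the increasing function $\phi$ (here the disjointness of the blocks is used). For directional convexity, write $l(j)$ for the unique block index with $j \in \mathcal{T}_{l(j)}$; then $\Delta_j^\epsilon \Delta_{j'}^\delta \psi(\bm{x})$ equals $\Delta_{l(j)}^\epsilon \Delta_{l(j')}^\delta \phi$ evaluated at the aggregated point, which is non-negative: if $l(j) = l(j')$ this is componentwise convexity of $\phi$ in that coordinate, and if $l(j) \neq l(j')$ it is supermodularity of $\phi$ between those two coordinates; the case $j = j'$ is again componentwise convexity. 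Hence $\psi$ is increasing directionally convex, so $(X_j)_{j \in \mathcal{T}} \le_{idcx} (\widetilde{X}_j)_{j \in \mathcal{T}}$ yields $\mathbb{E}[\phi(\sum_{j \in \mathcal{T}_1} X_j, \ldots)] = \mathbb{E}[\psi((X_j)_{j\in\mathcal{T}})] \le \mathbb{E}[\psi((\widetilde{X}_j)_{j\in\mathcal{T}})] = \mathbb{E}[\phi(\sum_{j \in \mathcal{T}_1} \widetilde{X}_j, \ldots)]$; as $\phi$ ranges over all increasing directionally convex functions this is exactly the asserted ordering of the partial-sum vectors. For blocks $\mathcal{T}_l$ that are infinite, one must additionally ensure the sums are well defined (almost surely finite, with finite mean, in line with the integrability already implicit in $\le_{idcx}$) and then pass to the limit along finite truncations $\mathcal{T}_l^{(m)} \uparrow \mathcal{T}_l$, invoking the closure of $\le_{idcx}$ under the relevant mode of convergence; this limiting argument is the main obstacle I would anticipate.
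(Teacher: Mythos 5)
Your proposal is correct and follows essentially the same route as the paper: the paper simply invokes the closure of increasing directional convexity under composition (an increasing directionally convex $\phi$ composed with the increasing, linear, hence directionally convex disjoint-block aggregation map is again increasing directionally convex, citing \cite{shaked2007stochastic}), whereas you verify that same composition fact by hand through the mixed second differences, distinguishing the within-block and between-block cases. Your additional remarks on finite truncation for infinite blocks address a point the paper leaves implicit, but the core argument coincides.
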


\begin{proof}
The proof follows that, if $f : \mathbb{R}^m \rightarrow \mathbb{R}^k$ is increasing and directionally convex and $g : \mathbb{R}^n \rightarrow \mathbb{R}^m$ is increasing and directionally convex, then the composition $f\circ g$ is increasing and directionally convex \cite{shaked2007stochastic}.
\end{proof}

\begin{remark}
An alternative approach is to treat the functional stochastic process as a random field on $\mathbb{N}^n \times \mathbb{R}$, then the comparison result directly follows the comparison result of random field in \cite{miyoshi2004note}\cite{shaked2007stochastic}.
\end{remark}

\begin{remark}
The result indicates that the $\le_{idcx}$ ordering of the instantaneous values implies the $\le_{idcx}$ ordering of the accumulated values.
\end{remark}

\begin{remark}
It is interesting to study the corresponding property of the supermodular order or the counter examples.
\end{remark}

Since the usual stochastic order has a direct indication on the mean values, i.e., $\bm{X} \le_{st} \bm{Y} \implies \mathbb{E}\bm{X} \le \mathbb{E}\bm{Y} \implies \sum\mathbb{E}{X}_i \le \sum\mathbb{E}{Y}_i$, $X_i \in \bm{X},\ Y_i \in \bm{Y}$, it is interesting to consider the dependence manipulation with respect to the usual stochastic order when the mean value is the objective measure.

\begin{theorem}\label{theorem-marginal-st}
Assume the random parameters are spatially independent and temporally dependent.
If $f_j\qty( X_j^1\ldots, X_j^n )$, $\forall 1\le j\le t$, are increasing, 
and
$
\left( X_1^i, X_2^i, \ldots, X_t^i \right) \le_{st} \left( \widetilde{X}_1^i, \widetilde{X}_2^i, \ldots, \widetilde{X}_t^i \right), 
$ 
$\forall 1\le i\le n$, 
then
\begin{equation}
 \left( X_1, X_2, \ldots, X_t \right) \le_{st} \left( {\widetilde{X}}_1, {\widetilde{X}}_2, \ldots, {\widetilde{X}}_t \right), 
\end{equation}
where ${\widetilde{X}}_j = f_j ( X_j^1, \ldots, X_j^{i-1}, \widetilde{X}_j^i, X_j^{i+1}, \ldots, X_j^n )$, $\forall 1\le j\le t$, for any $1\le i \le n$;
and
\begin{equation}
\widetilde{\bm{X}}_{t}^{k} \le_{st} \widetilde{\bm{X}}_{t}^{k^\prime},\ \forall 0\le k \le k^\prime \le n, 
\end{equation}
where $\widetilde{{X}}_{{t}_m}^{l} = f_m ( \widetilde{X}_m^1, \ldots, \widetilde{X}_m^l, {X}_m^{l+1}, \ldots, {X}_m^n )$, $\ 1\le m \le t$, and $\widetilde{\bm{X}}_{t}^{l} = ( \widetilde{{X}}_{t_1}^{l}, \ldots, \widetilde{{X}}_{t_t}^{l} )$, $l\in\{k,k^\prime \}$.
\end{theorem}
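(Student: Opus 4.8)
The plan is to mimic the proof of Theorem~\ref{theorem-marginal-copula}, but the argument is considerably simpler, because the usual stochastic order only requires preservation of monotonicity rather than of convexity, so no copula, affinity, or convexity hypotheses on the functionals are needed. The two ingredients are the standard closure properties of $\le_{st}$ for random vectors \cite{shaked2007stochastic}: (i) if $\bm{U} \le_{st} \bm{V}$ and $h$ is a coordinatewise increasing map, then $h(\bm{U}) \le_{st} h(\bm{V})$; and (ii) if $\bm{U}_1, \ldots, \bm{U}_n$ are independent, $\bm{V}_1, \ldots, \bm{V}_n$ are independent, and $\bm{U}_l \le_{st} \bm{V}_l$ for every $l$, then $(\bm{U}_1, \ldots, \bm{U}_n) \le_{st} (\bm{V}_1, \ldots, \bm{V}_n)$.

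For the first claim, fix $i$ and write $\bm{X}^l = (X_1^l, \ldots, X_t^l)$. By spatial independence the processes $\bm{X}^1, \ldots, \bm{X}^n$ are mutually independent, and replacing the $i$-th one by $\widetilde{\bm{X}}^i$ leaves the joint law and the independence of the remaining processes intact (and $\widetilde{\bm{X}}^i$ is likewise taken spatially independent of them). Combining $\bm{X}^i \le_{st} \widetilde{\bm{X}}^i$ with the trivial orderings $\bm{X}^l \le_{st} \bm{X}^l$ for $l\neq i$ and invoking closure property (ii), we obtain $(\bm{X}^1, \ldots, \bm{X}^{i-1}, \bm{X}^i, \bm{X}^{i+1}, \ldots, \bm{X}^n) \le_{st} (\bm{X}^1, \ldots, \bm{X}^{i-1}, \widetilde{\bm{X}}^i, \bm{X}^{i+1}, \ldots, \bm{X}^n)$ as random elements of $\mathbb{R}^{t\times n}$. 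Since every $f_j$ is increasing, the map $h : \mathbb{R}^{t\times n} \to \mathbb{R}^t$ sending $(\bm{x}^1, \ldots, \bm{x}^n)$ to $\qty( f_1(x_1^1, \ldots, x_1^n), \ldots, f_t(x_t^1, \ldots, x_t^n) )$ is coordinatewise increasing, so closure property (i) yields $(X_1, \ldots, X_t) \le_{st} (\widetilde{X}_1, \ldots, \widetilde{X}_t)$, which is the first assertion. (An equivalent route conditions on $\bm{X}^{-i} = \bm{x}^{-i}$, uses that $\bm{x}^i \mapsto h$ is increasing for each fixed $\bm{x}^{-i}$, and then integrates using the unchanged conditional law of the remaining processes.)

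For the second claim, note that passing from $\widetilde{\bm{X}}_t^{k}$ to $\widetilde{\bm{X}}_t^{k+1}$ is precisely the substitution handled in the first claim, now carried out on the $(k+1)$-st parameter process while the processes $1, \ldots, k$ have already been replaced by their tilded versions; this replacement is within each process's own coordinate and so does not disturb spatial independence. Applying the first claim once for each step $k \to k+1$ and chaining the resulting inequalities by the transitivity of $\le_{st}$ gives $\widetilde{\bm{X}}_t^{k} \le_{st} \widetilde{\bm{X}}_t^{k'}$ for all $0 \le k \le k' \le n$; the implication $\bm{X} \le_{st} \bm{Y} \implies \mathbb{E}\bm{X} \le \mathbb{E}\bm{Y}$ then records the consequence for the mean values.

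The only delicate point is the bookkeeping behind property (ii): one must check that replacing a single parameter process leaves the marginal laws and the mutual independence of the other processes untouched, so that the componentwise orderings can be assembled into a joint ordering. Once this is verified the argument is routine, the bulk of the remaining work being the iteration in the second part.
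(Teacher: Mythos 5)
Your proposal is correct and follows essentially the same route as the paper's own proof: spatial independence gives the conjunction (concatenation) ordering of the parameter processes, the closure of $\le_{st}$ under componentwise increasing maps yields the first claim, and iteration plus transitivity of $\le_{st}$ yields the second. You merely spell out the bookkeeping (independence of the tilded process from the untouched ones) that the paper leaves implicit.
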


\begin{proof}
The spatial independence implies the conjunction $\qty( \bm{X}^1, \ldots, \bm{X}^i, \ldots, \bm{X}^n ) \le_{st} \qty( \bm{X}^1, \ldots, \widetilde{\bm{X}}^i, \ldots, \bm{X}^n )$, where $\bm{X}^i = \left( X_1^i, X_2^i, \ldots, X_t^i \right)$, then the first result directly follows the closure property of the usual stochastic order \cite{shaked2007stochastic}.
The second result follows the transitivity of the usual stochastic order.
\end{proof}

\begin{remark}
Particularly, if $\left( X_1^i, X_2^i, \ldots, X_t^i \right)$ and $\left( {\widetilde{X}}_1^i, {\widetilde{X}}_2^i, \ldots, {\widetilde{X}}_t^i \right)$ have the common copula $C^i_t \qty(u_1^i,\ldots, u_t^i)$, the order condition $\left( X_1^i, X_2^i, \ldots, X_t^i \right) \le_{st} \left( \widetilde{X}_1^i, \widetilde{X}_2^i, \ldots, \widetilde{X}_t^i \right)$ can be replaced by $X_j^i \le_{st} \widetilde{X}_j^i$, $\forall 1\le j \le t$.
This result is available in \cite[p. 272]{shaked2007stochastic}.
\end{remark}

\begin{corollary}
Assume the random parameters are spatially dependent and temporally independent.
If $f_j\qty( X_j^1\ldots, X_j^n )$, $\forall 1\le j\le t$, are increasing, 
and
$
\left( X_j^1, X_j^2, \ldots, X_j^n \right) \le_{st} \left( \widetilde{X}_j^1, \widetilde{X}_j^2, \ldots, \widetilde{X}_j^n \right), 
$ 
$\forall 1\le j\le t$, 
then the same results hold as in Theorem \ref{theorem-sm-sp-dependent}, but in the sense of the usual stochastic order $\le_{st}$.
\end{corollary}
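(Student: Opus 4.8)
The plan is to follow the skeleton of the proof of Theorem~\ref{theorem-sm-sp-dependent}, but to replace the supermodularity-preservation step (Lemma~\ref{lemma-sm-function}, which is precisely what forced the lattice hypothesis $f_t(\bm x_t)\diamond f_t(\bm y_t)=f_t(\bm x_t\hat\diamond\bm y_t)$) by the elementary fact that a composition of coordinatewise increasing maps is increasing. The three tools are: closure of $\le_{st}$ under concatenation of mutually independent random vectors; closure of $\le_{st}$ under increasing measurable maps \cite{shaked2007stochastic}; and reflexivity and transitivity of $\le_{st}$. Since the second tool carries no side conditions, no hypothesis relating $f_t$ to $\wedge,\vee$ is needed, which is exactly why the corollary can drop the $\diamond/\hat\diamond$ clause while keeping the independence assumptions of Theorem~\ref{theorem-sm-sp-dependent}.

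Writing $\bm X_i=(X_i^1,\ldots,X_i^n)$ and $\widetilde{\bm X}_i=(\widetilde X_i^1,\ldots,\widetilde X_i^n)$, first I would fix a time index $j$. The spatial order $\bm X_j\le_{st}\widetilde{\bm X}_j$, combined with the temporal-independence assumptions inherited from Theorem~\ref{theorem-sm-sp-dependent} (the blocks at distinct times are independent, and $\widetilde{\bm X}_j$ is independent of the $\bm X_k$ for $k\neq j$), yields by the concatenation-closure property $(\bm X_1,\ldots,\bm X_j,\ldots,\bm X_t)\le_{st}(\bm X_1,\ldots,\widetilde{\bm X}_j,\ldots,\bm X_t)$ as random vectors in $\mathbb{R}^{tn}$. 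The map $\Phi:\mathbb{R}^{tn}\to\mathbb{R}^t$, $\Phi(\bm x_1,\ldots,\bm x_t)=(f_1(\bm x_1),\ldots,f_t(\bm x_t))$, is increasing on $\mathbb{R}^{tn}$ because every $f_i$ is increasing in each of its $n$ arguments; applying the increasing-map-closure property to the preceding ordering gives $(X_1,\ldots,X_t)\le_{st}(X_1,\ldots,\widetilde X_j,\ldots,X_t)$ with $X_i=f_i(\bm X_i)$ and $\widetilde X_j=f_j(\widetilde{\bm X}_j)$, i.e.\ the $\le_{st}$ analogue of the first conclusion of Theorem~\ref{theorem-sm-sp-dependent}.

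For the strength statement I would replace the time blocks $1,\ldots,t$ one at a time: each single replacement is an instance of the step just described, so chaining the comparisons by transitivity gives, for $1\le k\le j\le t$, $(\widetilde{\bm X}_1,\ldots,\widetilde{\bm X}_k,\bm X_{k+1},\ldots,\bm X_t)\le_{st}(\widetilde{\bm X}_1,\ldots,\widetilde{\bm X}_j,\bm X_{j+1},\ldots,\bm X_t)$, and pushing this through $\Phi$ yields $(\widetilde X_1,\ldots,\widetilde X_k,X_{k+1},\ldots,X_t)\le_{st}(\widetilde X_1,\ldots,\widetilde X_j,X_{j+1},\ldots,X_t)$, the $\le_{st}$ analogue of the second conclusion of Theorem~\ref{theorem-sm-sp-dependent}. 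As in the remark after Theorem~\ref{theorem-marginal-st}, if in addition $(X_j^1,\ldots,X_j^n)$ and $(\widetilde X_j^1,\ldots,\widetilde X_j^n)$ share a common copula, the block hypothesis may be weakened to the componentwise orders $X_j^i\le_{st}\widetilde X_j^i$, $1\le i\le n$ \cite[p.~272]{shaked2007stochastic}.

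I do not expect a genuine obstacle here: $\le_{st}$ is the most robust of the orders in this section, preserved by arbitrary increasing maps, so none of the modularity bookkeeping behind Lemma~\ref{lemma-sm-function} is needed. The one point that must be handled with care is verifying that the inherited independence hypotheses are exactly strong enough to licence each concatenation step — in particular the independence of a freshly replaced block from the blocks left untouched — so that the concatenation-closure property applies at every stage of the iteration.
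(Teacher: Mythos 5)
Your proposal is correct and follows essentially the same route as the paper, whose proof simply invokes the closure properties (under concatenation of independent blocks and under increasing maps) and the transitivity of the usual stochastic order from \cite{shaked2007stochastic}; you merely spell out the concatenation-then-increasing-map steps and the chaining that the paper leaves implicit. Your closing caution about the independence of a replaced block from the untouched blocks matches the independence assumptions inherited from Theorem~\ref{theorem-sm-sp-dependent}, so no new hypothesis is needed.
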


\begin{proof}
The results follow the closure property and transitivity of the usual stochastic order \cite{shaked2007stochastic}.
\end{proof}

\begin{remark}
It is interesting to extend the results to the scenario without spatial and temporal dependence specification and express the results in terms of the conditional probability and conditional stochastic order as in Theorem \ref{theorem-sm-dependence}.
\end{remark}

We have the following results of the random sums with respect to the usual stochastic order.

\begin{remark}
Let $\bm{X}_j = ( X_{j,1}, \ldots, X_{j,m} )$ and $\bm{Y}_j = ( Y_{j,1}, \ldots, Y_{j,m} )$, $j = 1,2,\ldots$, be two sequences of non-negative random vectors, and let $\bm{M} = \left(M_1, M_2, \ldots , M_m \right)$ and $\bm{N} = \left( N_1, N_2, \ldots, N_m \right)$ be two vectors of non-negative integer-valued random variables. Assume that both $\bm{M}$ and $\bm{N}$ are independent of the $\bm{X}_j$'s and $\bm{Y}_j$'s. 

If $\qty{ \bm{X}_j, j \in \mathbb{N} } \le_{st} \qty{ \bm{Y}_j, j \in \mathbb{N} }$ and $\bm{M} \le_{st} \bm{N}$, then
\begin{equation}
\left( \sum_{j=1}^{M_1} X_{j,1}, \ldots, \sum_{j=1}^{M_m} X_{j,m}  \right) 
\le_{st} \left( \sum_{j=1}^{N_1} Y_{j,1}, \ldots, \sum_{j=1}^{N_m} Y_{j,m}  \right). \nonumber
\end{equation}
This result is available in \cite{shaked2007stochastic}. Specifically, the proof follows the transitivity of the following results,
\begin{equation}
\left( \sum_{j=1}^{M_1} X_{j,1}, \ldots, \sum_{j=1}^{M_m} X_{j,m}  \right) 
\le_{st} \left( \sum_{j=1}^{N_1} X_{j,1}, \ldots, \sum_{j=1}^{N_m} X_{j,m}  \right), \nonumber
\end{equation}
which is provided in \cite{pellerey1999stochastic}, and
\begin{equation}
\left( \sum_{j=1}^{N_1} X_{j,1}, \ldots, \sum_{j=1}^{N_m} X_{j,m}  \right) 
 \le_{st} \left( \sum_{j=1}^{N_1} Y_{j,1}, \ldots, \sum_{j=1}^{N_m} Y_{j,m}  \right), \nonumber
\end{equation}
which follows the closure property of the usual stochastic order, by conditioning on $\qty(N_1, \ldots, N_m) = \qty(n_1, \ldots, n_m)$ and integrating for expectation. 
\end{remark}

\section{Conclusion}
\label{conclusion}

This paper provides an extreme value perspective on the wireless channel gain and the wireless channel capacity.
Specifically, the theoretical results reason why the typical statistical distributions are useful for wireless channel modeling in terms of the tail behavior, i.e.,
why the Rayleigh, Rice, and Nakagami distributions can model the small-scale fading and why the Lognormal distribution can model the large-scale fading.
We highlight that the passive nature of the wireless propagation environment results in that the wireless channel gain has finite moments in the stochastic channel models.
Considering the power constraints in the wireless communication system, we show that the light-tailed behavior is an intrinsic property of the wireless channel capacity.

As a hypothesis, the capacity is possible to be heavy-tailed in theory.
Considering a generalized channel scenario, the capacity formula may include many random parameters, specifically, if a random parameter has a heavy or super-heavy tail and occurs as sum or product with other random parameters in the capacity formula, e.g., the compound capacity as sum of sub-capacities, then this random parameter may dominate the tail behavior and result in a heavy-tailed capacity distribution.
Thus, it is interesting to investigate the empirical tail behavior of the wireless channel capacity in the real wireless systems or wireless networks, e.g., the multi-user scenario of multiple access channel and broadcast channel.
Moreover, it is interesting to take into account the tail behavior of the noise process.

In addition, this paper studies the dependence transformability with respect to both the marginal distributions and the dependence structures. 
The results with respect to the directionally convex order rely on the strictly increasing affine functions in the proof, it is interesting to extend the results to more general function spaces.
It is interesting to note that the marginal manipulation has a dependence bias, i.e., it has an effective effect with respect to positive dependence rather than negative dependence. 
This property indicates that the dependence is a tradable resource in the physical world and it sheds light on the development of new wireless technologies to trade off the dependence resource and other resources, e.g., exchanging dependence for transmission power.
By analogy with power allocation, it is interesting to define new dependence measures and study the mechanism of dependence allocation.

\appendices
\section{Proof of Lemma \ref{lemma-tail-equivalence}}
\label{proof-lemma-tail-equivalence}

\begin{enumerate}[wide, labelwidth=!, labelindent=0pt]
\item
Considering the structure of the capacity formula, $c =a\log_{2}(1+b\lambda_{\max})$, we have
$
\overline{F}_{c}(x)=O \qty(e^{-\theta x}) 
\iff \overline{F}_{\lambda_{\max}}(x)=O \qty(x^{-\theta}) 
\iff \overline{F}_{\Tr\qty[\bm{H}\bm{H}^\ast]}(x)=O \qty(x^{-\theta}).
$
The first relationship follows the transform of random variables, i.e., $\overline{F}_{c}(x)=O \qty(e^{-\theta c}) \iff \overline{F}_{\lambda_{\max}}(x)=O \qty( (1 + b x)^{- \theta} ) $, $(1 + b x)^{- \theta} \sim  (bx)^{- \theta} $, and $\overline{F}_{\lambda_{\max}}(x)=O \qty(x^{-\theta}) \iff \overline{F}_{\lambda_{\max}}(x)=O \qty((bx)^{-\theta}) $. 
The second relationship follows that $\lambda_{\max}\le \Tr\qty[\bm{H}\bm{H}^\ast] \implies \overline{F}_{\lambda_{\max}}(x) \le \overline{F}_{\Tr\qty[\bm{H}\bm{H}^\ast]}(x)$, 
thus, $\overline{F}_{\Tr\qty[\bm{H}\bm{H}^\ast]}(x)=O \qty(x^{-\theta}) \implies \overline{F}_{\lambda_{\max}}(x)=O \qty(x^{-\theta})$;
and $r(\bm{H})\lambda_{\max} \ge \Tr\qty[\bm{H}\bm{H}^\ast] \implies \overline{F}_{r(\bm{H})\lambda_{\max}}(x) \ge \overline{F}_{\Tr\qty[\bm{H}\bm{H}^\ast]}(x)$, thus, $\overline{F}_{\lambda_{\max}}(x)=O \qty(x^{-\theta}) \implies \overline{F}_{\Tr\qty[\bm{H}\bm{H}^\ast]}(x)=O \qty(x^{-\theta})$.

\item
First, we have the inequality, $\mathbb{E} \qty[ \qty( 1 + {\Delta}  \lambda_{\max}  )^\theta ] \le (1+\Delta)^\theta \mathbb{E} \qty[ \qty( 1 + \lambda_{\max}  )^\theta ]$, which implies that 
$
\mathbb{E} \qty[ \qty( 1 + \lambda_{\max}  )^\theta ] < \infty, \ \exists \theta>0
\implies
\mathbb{E} \qty[ \qty( 1 + {\Delta}  \lambda_{\max}  )^\theta ] < \infty, \ \exists \theta>0
$, 
because $(1+\Delta)^\theta<\infty$.
Second, for $0<\Delta<1$, letting $\Delta\ge \frac{1}{m}$, $m\in\mathbb{N}$,
we have 
$
m^\theta \mathbb{E} \qty[ \qty( 1 + \Delta \lambda_{\max}  )^\theta ] \ge
\mathbb{E} \qty[ \qty( 1 + \lambda_{\max}  )^\theta ]
$,
which implies that
$
\mathbb{E} \qty[ \qty( 1 + \Delta \lambda_{\max}  )^\theta ] < \infty, \ \exists \theta>0
\implies
\mathbb{E} \qty[ \qty( 1 + \lambda_{\max}  )^\theta ] < \infty, \ \exists \theta>0
$,
because $m^\theta <\infty$;
for $\Delta\ge 1$, it is trivial.

Since $\qty(\lambda_{\max})^\theta \le \qty(1 + \lambda_{\max})^\theta$, we have 
$\mathbb{E} \qty[ \qty( 1 + \lambda_{\max}  )^\theta ] < \infty, \ \exists \theta>0
\implies
\mathbb{E} \qty[ \qty( \lambda_{\max}  )^\theta ] < \infty, \ \exists \theta>0$.
According to H\"{o}lder's inequality, $\mathbb{E}\qty[X^r] \le \qty(\mathbb{E}\qty[X^s])^{r/s}$, $ 0< r< s$, $X\in \mathbb{R}_{\ge 0}$, which implies that $\mathbb{E}\qty[X^s]< \infty \implies \mathbb{E}\qty[X^r] <\infty$.
Specifically, we have 
$\mathbb{E} \qty[ \qty(  \lambda_{\max}  )^\theta ] < \infty, \ \exists \theta>1
\implies
\mathbb{E} \qty[ \qty( \lambda_{\max}  )^\theta ] < \infty, \ \forall 0< \theta \le 1
$.
Suppose $\mathbb{E} \qty[ \qty( \lambda_{\max}  )^\theta ] < \infty, \ \exists 0< \theta \le 1$, we have $\overline{F}_{\lambda_{\max}} (x) = o\qty(x^{-\theta})$ and $\int_{0}^{\infty} \overline{F}_{\lambda_{\max}}(x) x^{\theta-1} d x <\infty $,
since $x^{-\theta} \sim (1+x)^{-\theta}$ and $x^{\theta-1} \ge (1+x)^{\theta-1} $, we further have $\overline{F}_{\lambda_{\max}} (x) = o\qty((1+x)^{-\theta})$ and $\int_{0}^{\infty} \overline{F}_{\lambda_{\max}}(x) (1+x)^{\theta-1} d x <\infty $, which corresponds to $\mathbb{E} \qty[ \qty( 1 + \lambda_{\max}  )^\theta ] < \infty$.
Thus, $\mathbb{E} \qty[ \qty( 1 + \lambda_{\max}  )^\theta ] < \infty, \ \exists 0< \theta\le 1
\impliedby
\mathbb{E} \qty[ \qty( \lambda_{\max}  )^\theta ] < \infty, \ \exists 0< \theta\le 1$.
In all, we obtain 
$\mathbb{E} \qty[ \qty( 1 + \lambda_{\max}  )^\theta ] < \infty, \ \exists \theta>0
\iff
\mathbb{E} \qty[ \qty( \lambda_{\max}  )^\theta ] < \infty, \ \exists \theta>0$.

If $\mathbb{E}\qty[\lambda_{\max}] = \infty$, according to Jensen's inequality, $\mathbb{E} \qty[ \qty( \lambda_{\max}  )^\theta ] \ge \qty(\mathbb{E}\qty[\lambda_{\max}])^\theta = \infty$, $\forall\theta \ge 1$,
which implies that $\mathbb{E}\qty[ \qty( 1 + \lambda_{\max})^\theta ] = \infty$, $\forall \theta \ge 1$, because $\mathbb{E}\qty[ \qty( \lambda_{\max})^\theta ] \le \mathbb{E}\qty[ \qty( 1 + \lambda_{\max})^\theta ]$.

\item
Considering that the matrix $\bm{H}\bm{H}^\ast \in \mathfrak{X}_{\ge 0}$, we have $\lambda_{\max}\le {\Tr\qty[\bm{H}\bm{H}^\ast]}$ and ${\Tr\qty[\bm{H}\bm{H}^\ast]}\le r(\bm{H}) \lambda_{\max}$. Then, the proof follows.
\end{enumerate}

\section{Proof of Theorem \ref{theorem-tail-sufficient}}
\label{proof-theorem-tail-sufficient}

First, consider the scenario with channel side information only at the receiver.
We denote $\rho := \frac{P}{N_0 W}$.
On the one hand, we have an upper bound of the capacity
\begin{IEEEeqnarray}{rCl}
c &=& W \sum\nolimits_{i=1}^{r(\bm{H} )} \log_2 \qty( 1 + \frac{\rho}{N_T} \lambda_i ) \\
&\le& W r(\bm{H} ) \log_2 \qty( 1 + \frac{\rho}{N_T}  \lambda_{\max}  ),
\end{IEEEeqnarray}
where $r(\bm{H})$ is the rank of matrix $\bm{H}$, $\lambda_i$ is the eigenvalue of the matrix $\bm{H}\bm{H}^\ast$, and the equality follows the eigenvalue expression of the capacity.

Second, consider the scenario with full channel side information.
The capacity is upper bounded by
\begin{IEEEeqnarray}{rCl}
c &=& W \max_{\sum_{i=1}^{r(\bm{H})} \gamma_i = N_T} \sum\nolimits_{i=1}^{r(\bm{H})} \log_2 \qty( 1 + \frac{\rho \gamma_i}{N_T} \lambda_i ) \\
&\le& W r(\bm{H}) \log_2 \qty( 1 + \frac{\rho}{N_T} \gamma_{\max}  \lambda_{i}  ) \\
&\le& W r(\bm{H}) \log_2 \qty( 1 + {\rho}  \lambda_{\max}  ).
\end{IEEEeqnarray}

It is easy to show that $\mathbb{E} \qty[ e^{\theta c} ]<\infty,\ \exists \theta>0$, entails $ \mathbb{E} \qty[ \qty( 1 + {\rho}  \lambda_{\max}  )^\theta ] < \infty, \ \exists \theta>0$.

\section{Proof of Theorem \ref{theorem-frequency-selective-sufficient}}
\label{proof-theorem-frequency-selective-sufficient}

If the channel side information is unknown to the transmitter, then \cite{paulraj2003introduction}
\begin{equation}
c = \frac{W}{N} \sum_{i=1}^{N} \log_2 \det \qty( \bm{I}_{N_R} + \frac{\rho}{N_T} \bm{{H}}_i \bm{{H}}_i^\ast ).
\end{equation}
Since the light-tailed property is preserved under the sum operation, if the capacity distribution of each sub-channel is light-tailed, so is the total capacity distribution.

If the channel side information is known to the transmitter, then \cite{paulraj2003introduction}
\begin{IEEEeqnarray}{rCl}
c &=& \frac{W}{N} \max_{\sum_{i=1}^{r(\bm{\mathcal{H}})} \gamma_i = N_T N} \sum_{i=1}^{r(\bm{\mathcal{H}})}  \log_2  \qty( 1 + \frac{\rho \gamma_i}{N_T} \lambda_i\qty( \bm{\mathcal{H}}  \bm{\mathcal{H}}^\ast ) )   \IEEEeqnarraynumspace\\
&\le& \frac{W}{N} \sum_{i=1}^{r(\bm{\mathcal{H}})}  \log_2  \qty( 1 + \frac{\rho \gamma_{\max}}{N_T} \lambda_{i}\qty( \bm{\mathcal{H}}  \bm{\mathcal{H}}^\ast ) ) \\
&\le& \frac{W}{N} {r(\bm{\mathcal{H}})}  \log_2  \qty( 1 + {\rho N} \lambda_{\max} \qty( \bm{\mathcal{H}}  \bm{\mathcal{H}}^\ast ) ).
\end{IEEEeqnarray}
Particularly, we have $\lambda_{\max}= \max\qty(\lambda^{1}_{\max}, \ldots, \lambda^{N}_{\max})$, which implies the equivalent expression of the condition.

\section{Proof of Theorem \ref{theorem-s-n-csi-transmitter}}
\label{proof-theorem-s-n-csi-transmitter}

For the flat channel without channel side information at the transmitter,
the capacity is expressed as \cite{paulraj2003introduction}
\begin{IEEEeqnarray}{rCl}
c &=& W \log_{2} \det \qty( \bm{I}_{N_R} + \frac{\rho}{N_T} \bm{H}\bm{H}^\ast ) \\
&=& W \log_{2} \det \qty( \bm{I}_{N_R} + \frac{\rho}{N_T} \bm{\Lambda} ) \\ 
&=& W \sum\nolimits_{i=1}^{r(\bm{H} )} \log_2 \qty( 1 + \frac{\rho}{N_T} \lambda_i ),
\end{IEEEeqnarray}
where $\bm{H}\bm{H}^\ast = \bm{Q}\bm{\Lambda}\bm{Q}^\ast$ and $\bm{Q}\bm{Q}^\ast = \bm{I}$, the second equality follows $\det(\bm{I}_m + \bm{A}\bm{B}) = \det(\bm{I}_n + \bm{B}\bm{A})$ for $\bm{A}\in \mathbb{C}^{m\times n}$ and $\bm{B}\in \mathbb{C}^{n\times m}$, and the third equality is an equivalent expression.

The proof follows the light-tailed distribution definition, i.e., $\mathbb{E} \qty[ e^{ \theta W \log_{2} \det \qty( \bm{I}_{N_R} + \frac{\rho}{N_T} \bm{\Lambda} ) } ] < \infty$, $\exists \theta >0$.
Specifically, considering finite rank matrix, we have, for $\Delta \ge 1$, $\Delta^{r(\bm{H})\theta} \mathbb{E} \qty[  \prod_{i=1}^{r(\bm{H})} \qty( 1+ \lambda_i )^\theta  ]\ge \mathbb{E} \qty[  \prod_{i=1}^{r(\bm{H})} \qty( 1+ \Delta \lambda_i )^\theta ] \ge \mathbb{E} \qty[  \prod_{i=1}^{r(\bm{H})} \qty( 1+ \lambda_i )^\theta  ]$; 
and for $0< \Delta \le 1$, $\Delta^{r(\bm{H})\theta} \mathbb{E} \qty[  \prod_{i=1}^{r(\bm{H})} \qty( 1+ \lambda_i )^\theta  ] \le \mathbb{E} \qty[  \prod_{i=1}^{r(\bm{H})} \qty( 1+ \Delta \lambda_i )^\theta ] \le \mathbb{E} \qty[  \prod_{i=1}^{r(\bm{H})} \qty( 1+ \lambda_i )^\theta  ]$.
Thus, we have the following equivalent expressions
\begin{multline}
\mathbb{E} \qty[  \prod_{i=1}^{r(\bm{H})} \qty( 1+ \Delta \lambda_i )^\theta  ] < \infty,\ 0<\Delta<\infty, \ \exists \theta >0 \\
\iff
\mathbb{E} \qty[  \prod_{i=1}^{r(\bm{H})} \qty( 1+ \lambda_i )^\theta  ] < \infty,\ \exists \theta >0. 
\end{multline}
This completes the proof.

\section{Proof of Theorem \ref{theorem-sufficient-map}}
\label{proof-theorem-sufficient-map}

We present the proof of the deterministic power scenario, and the extension to the random power scenario is to replace the matrix $\bm{H}\bm{H}^\ast$ with the scalar multiplication $ {p} \bm{H}\bm{H}^\ast$ of the random variable $p$ and the random matrix $\bm{H}\bm{H}^\ast$.

We have a sufficient condition for the light-tailed property, i.e.,
$
\mathbb{E} \qty[ \qty( 1 + \lambda_{\max}  )^\theta ] < \infty, \ \exists \theta>0,
$
which can be relaxed to be
$
\mathbb{E} \qty[ \qty( 1 +  \Tr\qty[ \bm{H}\bm{H}^\ast ]  )^\theta ] < \infty, \ \exists \theta>0,
$
which is equivalent to $\mathbb{E} \qty[ \qty( 1 + \sum_{i=1}^{r(\bm{H})}\lambda_{i}  )^\theta ] < \infty, \ \exists \theta>0$.

Furthermore, the condition can be relaxed to be
$
\mathbb{E} \qty[ \qty(  \Tr\qty[ \bm{I} + \bm{H}\bm{H}^\ast ]  )^\theta ] < \infty, \ \exists \theta>0.
$

Considering the transfer rule \cite{tropp2015introduction}, the function inequality $(1+x)^\vartheta \le e^{\vartheta x}$, $\vartheta>0$, implies the matrix ordering $\qty( \bm{I} + \bm{H}\bm{H}^\ast )^\vartheta \preceq e^{\vartheta  \bm{H}\bm{H}^\ast }$, thus $\Tr\qty[ e^{\vartheta  \bm{H}\bm{H}^\ast } - \qty( \bm{I} + \bm{H}\bm{H}^\ast )^\vartheta ] \ge 0$, equivalently,
$
\Tr\qty[ \qty( \bm{I} + \bm{H}\bm{H}^\ast )^\vartheta ] \le \Tr\qty[ e^{\vartheta  \bm{H}\bm{H}^\ast } ],
$
which implies that the condition is further relaxed to be
$
\mathbb{E} \qty[ \qty(  \Tr\qty[ e^{ \bm{H}\bm{H}^\ast } ]  )^\theta ] < \infty, \ \exists \theta>0.
$

In addition, according to the function inequality, $\qty( 1 + \rho\lambda_{\max}  )^\vartheta \le e^{\vartheta  \rho\lambda_{\max}} \le \Tr\qty[ e^{\vartheta \rho  \bm{H}\bm{H}^\ast } ]$, where $\vartheta>0$ and the last inequality follows that the spectral mapping theorem, thus we have the relaxed condition
$
\mathbb{E} \qty [ \Tr \qty[ e^{\theta \bm{H} \bm{H}^\ast  } ] ] < \infty,\ \exists \theta>0.
$
Similarly, $\qty( 1 + \Tr\qty[ \bm{H}\bm{H}^\ast ]  )^\vartheta \le e^{\vartheta  \Tr\qty[ \bm{H}\bm{H}^\ast ]}$, $\vartheta>0$, thus we have the relaxed condition
$
\mathbb{E} \qty[ e^{\theta  \Tr\qty[ \bm{H}\bm{H}^\ast ]} ] < \infty, \ \exists \theta >0,
$
which is equivalent to $\mathbb{E} \qty[ e^{\theta  \lambda_{\max} } ] < \infty, \ \exists \theta >0$.

\section{Proof of Lemma \ref{lemma-tail-relationships}}
\label{proof-lemma-tail-relationships}

The proofs follow the given conditions and the definition of the asymptotic symbols.

\begin{enumerate}[wide, labelwidth=!, labelindent=0pt]
\item
By the given condition, we have $\lim\limits_{x\rightarrow \infty} \frac{ \overline{F}_{X_1}(x - \varphi(x)) }{ \overline{F}_{X_1}(x) } =1 $, 
$\forall \varphi \in \mathfrak{F}$.
Thus, $\overline{F}_{X_1}\qty(x-\varphi(x)) \sim \overline{F}_{X_1}(x) $.
Considering $F\circ \log \in \mathcal{R}_{0}$, $\forall F \in \mathcal{L}$, we have $\lim\limits_{x\rightarrow \infty} \frac{\overline{F}_{X_1}\qty( \log \frac{x}{\varphi(x)} ) }{ \overline{F}_{X_1} \qty( \log x ) } = \lim\limits_{x\rightarrow \infty} \qty( \frac{ \log \frac{x}{\varphi(x)} }{ \log x } )^{-\theta_1} = \lim\limits_{x\rightarrow \infty} \qty( 1- \frac{\log \varphi(x)}{ \log x } )^{-\theta_1} = 1$, $\exists \theta_1 > 0$, $\forall \varphi \in \mathfrak{F}$.
Thus, $\overline{F}_{X_1}\qty( \log \frac{ x }{\varphi(x)}) \sim  \overline{F}_{X_1}(\log x) $.

Since   $\overline{F}_{X_1}(x) = \Omega\qty(x^{-\theta_1 }) $, $ \theta_1>0$ $\iff$ $\exists x_1 >0$, $\forall x>x_1$, $\exists C_1 >0$, $\overline{F}_{X_1}(x) \ge C_1 x^{-\theta_1 } $; $\overline{F}_{X_2}(x) = O\qty(e^{-\theta_2 x}) $, $ \theta_2>0$ $\iff$ $\exists x_2 >0$, $\forall x>x_2$, $\exists C_2 >0$, $\overline{F}_{X_2}(x) \le C_2 e^{-\theta_2 x} $. Let $x_0 = \max(x_1, x_2)$, then,
\begin{multline}
\limsup\limits_{x\rightarrow \infty} \frac{ \overline{F}_{X_2}\qty(\varphi(x)) }{ \overline{F}_{X_1}(x) }
\le \inf\limits_{x^\ast > x_0}\sup\limits_{x \ge x^\ast} \frac{ C_2 e^{-\theta_2 \varphi(x)} }{ C_1 x^{-\theta_1 } } \\
=  \inf\limits_{x^\ast > x_0}\sup\limits_{x \ge x^\ast} \frac{ C_2 }{ C_1 } \frac{ x^{\theta_1} }{ \sum_{n=0}^{\infty} \frac{ \qty( \theta_2 x^\alpha )^n }{n!}  } =0.  
\end{multline}
Thus, we obtain $\overline{F}_{X_2}\qty(\varphi(x)) = O\qty( \overline{F}_{X_1}(x) )$.

Similarly, if $\overline{F}_{X_1}(x) = \Omega\qty( x^{-\theta_1} )$, $\exists \theta_1 >0$, and $\overline{F}_{X_2}(x) = O\qty(x^{-\theta_2 }) $, $\exists \theta_2>0$, $\varphi(x)=x^\alpha$, $0<\alpha <1$, and $\alpha \theta_2 > \theta_1$, then
\begin{multline}
\limsup\limits_{x\rightarrow \infty} \frac{ \overline{F}_{X_2}\qty(\varphi(x)) }{ \overline{F}_{X_1}(x) }
\le \inf\limits_{x^\ast > x_0}\sup\limits_{x \ge x^\ast} \frac{ C_2 \qty(\varphi(x))^{-\theta_2 } }{ C_1 x^{-\theta_1 } } \\
=  \inf\limits_{x^\ast > x_0}\sup\limits_{x \ge x^\ast} \frac{ C_2 }{ C_1 } \frac{ x^{\theta_1} }{ x^{\alpha \theta_2}  } =0.  
\end{multline}
When $\alpha \theta_2 = \theta_1$, the limit is also finite. Thus, we obtain $\overline{F}_{X_2}\qty(\varphi(x)) = O\qty( \overline{F}_{X_1}(x) )$.

\item
By the given condition, we have $\lim\limits_{x\rightarrow \infty} \frac{ \overline{F}_{X_1}\qty(x-\varphi(x)) }{ \overline{F}_{X_1}(x) } = \lim\limits_{x\rightarrow \infty} \frac{ L_1 \qty(x - \varphi(x) ) }{ L_1(x)}  \qty( 1-\frac{\varphi(x)}{x} )^{\theta_1} =1$, $\forall \varphi \in \mathfrak{F}$, because we obtain $t=1$ for $x-\varphi(x) = t x$ as $x\rightarrow \infty$.
Thus, we obtain $\overline{F}_{X_1}\qty(x-\varphi(x)) \sim \overline{F}_{X_1}(x) $.
Similarly, we have $\lim\limits_{x\rightarrow \infty} \frac{ \overline{F}_{X_1}\qty(\frac{x}{ \varphi(x) } ) }{ \overline{F}_{X_1}(x) } = \lim\limits_{x\rightarrow \infty} \frac{L_1 \qty( \frac{x}{\varphi(x)} )}{ L_1(x) } \qty(\varphi(x))^{\theta_1 }$, $\forall \varphi \in \mathfrak{F}$.
Thus, $\overline{F}_{X_1}\qty( \frac{ x}{\varphi(x)} )= \omega\qty( \overline{F}_{X_1}(x) ) $, $\forall \theta_1 > 0$ and $\overline{F}_{X_1}\qty( \frac{ x}{\varphi(x)} ) \sim \overline{F}_{X_1}(x) $ for $\theta_1 = 0$.

Since $\overline{F}_{X_2}(x) = O\qty(e^{-\theta_2 x}) $, $\theta_2>0$ $\iff$ $\exists x_0 >0$, $\forall x>x_0$, $\exists C_2 >0$, $\overline{F}_{X_2}(x) \le C_2 e^{-\theta_2 x} $. Then,
\begin{multline}
\limsup\limits_{x\rightarrow \infty} \frac{ \overline{F}_{X_2}\qty(\varphi(x)) }{ \overline{F}_{X_1}(x) }
\le \inf\limits_{x^\ast > x_0}\sup\limits_{x \ge x^\ast} \frac{ C_2 e^{-\theta_2 \varphi(x)} }{ L_1(x) x^{-\theta_1} } \\
=  \inf\limits_{x^\ast > x_0}\sup\limits_{x \ge x^\ast} \frac{ C_2 }{ L_1(x) } \frac{ x^{\theta_1} }{ \sum_{n=0}^{\infty} \frac{ \qty( \theta_2 x^\alpha )^n }{n!}  } =0,
\end{multline}
where $\lim_{x\rightarrow\infty} x^{\epsilon}L_1(x) = \infty$, $\forall \epsilon>0$, follows the representation theorem of the slowly varying function.
Thus, we obtain $\overline{F}_{X_2}\qty(\varphi(x)) = O\qty( \overline{F}_{X_1}(x) )$.

Similarly, for $\overline{F}_{X_2}(x) = O\qty(x^{-\theta_2 }) $, $\alpha \theta_2>\theta_1$, we have
$
\limsup\limits_{x\rightarrow \infty} \frac{ \overline{F}_{X_2}\qty(\varphi(x)) }{ \overline{F}_{X_1}(x) }
\le \inf\limits_{x^\ast > x_0}\sup\limits_{x \ge x^\ast} \frac{ C_2 x^{-\theta_2 \alpha} }{ L_1(x) x^{-\theta_1} } = 0.
$

The proof of the rest results follows the previous proofs, by considering the complementary $\liminf\limits_{x\rightarrow \infty}(\cdot)$ and by noticing the fact that, the limit $\lim\limits_{x\rightarrow\infty}(\cdot)$ exists if and only if $\liminf\limits_{x\rightarrow \infty}(\cdot) =\limsup\limits_{x\rightarrow \infty}(\cdot)$.

\item
Since $\overline{F}_{X_1}(x) = \Theta\qty( x^{-\theta_1} )$, $\theta_1 >0$, we have, $\exists C_1^u >0$, $\exists C_1^l >0$, $\exists x_0>0$, $\forall x> x_0$, $C_1^l x^{-\theta_1} \le \overline{F}_{X_1}(x) \le C_1^u x^{-\theta_1} $,
and
$\exists C_1^{u^\prime} >0$, $\exists C_1^{l^\prime} >0$, $\exists x_0^\prime>0$, $\forall x> x_0^\prime$, $C_1^{l^\prime} x^{-\theta_1} \le \overline{F}_{X_1}(x) \le C_1^{u^\prime} x^{-\theta_1} $.
Let $x_0^\ast = \max\qty( x_0, x_0^\prime )$.
Then,
$
\limsup\limits_{x\rightarrow \infty} \frac{ \overline{F}_{X_1}\qty({x} - { \varphi(x) } ) }{ \overline{F}_{X_1}(x) }
\le \inf\limits_{x^\ast > x_0^\ast}\sup\limits_{x \ge x^\ast} \frac{C_1^{u^\prime}}{C_1^{l}} \qty( 1- \frac{\varphi(x)}{x} )^{\theta_1} = \frac{C_1^{u^\prime}}{C_1^{l}},
$ and 
$
\liminf\limits_{x\rightarrow \infty} \frac{ \overline{F}_{X_1}\qty({x} - { \varphi(x) } ) }{ \overline{F}_{X_1}(x) } \ge \frac{C_1^{l^\prime}}{C_1^{u}},
$
where $ \frac{C_1^{l^\prime}}{C_1^{u}} \le \frac{C_1^{u^\prime}}{C_1^{l}}$, because $C_1^l \le C_1^u$ and $C_1^{l^\prime} \le C_1^{u^\prime}$.
Thus, we obtain $ \overline{F}_{X_1}\qty({x} - { \varphi(x) } )  = \Theta \qty( \overline{F}_{X_1}(x) )$.

Since $\overline{F}_{X_1}(x) = \Theta\qty( x^{-\theta_1} )$, $\theta_1 >0$, 
we have 
$
\limsup\limits_{x\rightarrow \infty} \frac{ \overline{F}_{X_1}\qty(\frac{x}{ \varphi(x) } ) }{ \overline{F}_{X_1}(x) }
\le \inf\limits_{x^\ast > x_0}\sup\limits_{x \ge x^\ast} C^\ast \qty(\varphi(x))^{\theta_1} = \infty,
$ $\forall C^\ast >0$, $\forall \varphi \in \mathfrak{F}$.
Similarly, we have 
$
\liminf\limits_{x\rightarrow \infty} \frac{ \overline{F}_{X_1}\qty(\frac{x}{ \varphi(x) } ) }{ \overline{F}_{X_1}(x) } \ge \infty.
$
Thus, we obtain $\overline{F}_{X_1}\qty(\frac{x}{\varphi(x)}) = \omega\qty( \overline{F}_{X_1}(x) )$.

The proofs of the rest results are analogical to the previous proofs, by considering $\lim\limits_{x\rightarrow\infty}(\cdot)$, $\liminf\limits_{x\rightarrow \infty}(\cdot)$, and  $\limsup\limits_{x\rightarrow \infty}(\cdot)$.

\item
Since $ \overline{F}_{X_1}(x)  = \Theta\qty(e^{-\theta_1 x}) $, $\theta_1 >0$, we have $\limsup\limits_{x\rightarrow \infty} \frac{ \overline{F}_{X_1}\qty({x}/{ \varphi(x) } ) }{ \overline{F}_{X_1}(x) } \le \inf\limits_{x^\ast > x_0}\sup\limits_{x \ge x^\ast} C^\ast e^{ -\theta_1 \qty(x/\varphi(x) -x) } = \infty$, $\forall 0< C^\ast < \infty$ and $\forall \varphi \in \mathfrak{F}$; and $\liminf\limits_{x\rightarrow \infty} \frac{ \overline{F}_{X_1}\qty({x}/{ \varphi(x) } ) }{ \overline{F}_{X_1}(x) } \ge \sup\limits_{x^\ast > x_0}\inf\limits_{x \ge x^\ast} C^\star e^{ -\theta_1 \qty(x/\varphi(x) -x) } = \infty$, $\forall 0< C^\star < \infty$ and $\forall \varphi \in \mathfrak{F}$. Thus, $\overline{F}_{X_1}\qty( \frac{ x}{\varphi(x)} )= \omega\qty( \overline{F}_{X_1}(x) ) $.
The proof of the other result follows analogically.

Since $ \overline{F}_{X_i}(x)  = \Theta\qty(e^{-\theta_i x}) $, $\theta_i >0$, $i\in\{1,2\}$, we have
$
\limsup\limits_{x\rightarrow \infty} \frac{ \overline{F}_{X_2} (\varphi(x)) }{ \overline{F}_{X_1}(x) } 
\le \inf\limits_{x^\ast > x_0} \sup\limits_{x\ge x^\ast} C^\ast \frac{e^{ -\theta_2 \varphi(x) }}{ e^{-\theta_1 x} } = \infty,
$
where the last step follows that $\lim\limits_{x\rightarrow \infty}\frac{x}{ \varphi(x) }$.
Similarly, we have 
$
\liminf\limits_{x\rightarrow \infty} \frac{ \overline{F}_{X_2} (\varphi(x)) }{ \overline{F}_{X_1}(x) } \ge \infty.
$
Thus, $\overline{F}_{X_2} (\varphi(x)) = \omega\qty( \overline{F}_{X_1}(x) )$.

Since $ \overline{F}_{X_2}(x)  = \Theta\qty(x^{-\theta_2 }) $, $\theta_2 >0$, we have $\overline{F}_{X_2}(x)  = \omega\qty( \overline{F}_{X_1}(x) ) $, $\forall \theta_2 >0$. Letting $x=\varphi(y)$, then $\overline{F}_{X_2}(\varphi(y))  = \omega\qty(\overline{F}_{X_1}(\varphi(y)) )$, $\forall \varphi \in \mathfrak{F}$. Since $\overline{F}_{X_1}(\varphi(y)) = \overline{F}_{X_1}(y) $,  thus, $\overline{F}_{X_2}(\varphi(y)) = \omega\qty( \overline{F}_{X_1}(y) )$, $\forall \theta_2>0$, $\forall \varphi \in \mathfrak{F}$.
\end{enumerate}

This completes the proofs.

\section{Proof of Theorem \ref{theorem-fat-tail-sufficient}}
\label{proof-theorem-fat-tail-sufficient}

We prove the case of $N=2$ and the proof of the case $N>2$ follows by the iteration of the same procedure.

Considering the independence between $X_1$ and $X_2$, for $x>0$, we have $\overline{F}_{X_1 X_2}(x) = \mathbb{E}\qty[ \overline{F}_{X_1} \qty(\frac{x}{X_2}) ]$, which is reformulated as
\begin{multline}
\mathbb{E}\qty[ \overline{F}_{X_1} \qty(\frac{x}{X_2}) ] = \mathbb{E}\qty[ \overline{F}_{X_1} \qty(\frac{x}{X_2}) 1_{0< X_2 \le \varphi_2(x) } ] \\
+ \mathbb{E}\qty[ \overline{F}_{X_1} \qty(\frac{x}{X_2}) 1_{ X_2 > \varphi_2(x) } ].
\end{multline}

Since $\overline{F}_{X_1}(x) = \Theta\qty(x^{-\theta})$, we have, $\exists x_0>0$, $\exists C_1, C_1^{\prime} >0$, $\forall x> x_0$ and $\forall x/x^{\prime} > x_0$, $\overline{F}_{X_1}(x) \ge C_1 x^{-\theta}$ and $\overline{F}_{X_1}(x/x^{\prime}) \le C_1^{\prime} (x/x^{\prime})^{-\theta}$, and $\limsup\limits_{x\rightarrow \infty} \frac{\overline{F}_{X_1}(x/x^{\prime}) }{\overline{F}_{X_1}(x)} \le \frac{C_1^\prime}{C_1} \qty(x^\prime)^{\theta}$.
Then, 
$
\limsup\limits_{x\rightarrow \infty} \lim\limits_{\varphi_2(x) \rightarrow \infty} 
\int_{(\varphi_2(x),\infty)} \frac{\mathbb{P}\qty( X_1 > x/ x^{\prime} )}{\mathbb{P}\qty( X_1 > x )} \mathbb{P}_{X_2}(d x^\prime)
=\lim\limits_{\varphi_2(x) \rightarrow \infty}
\int_{(\varphi_2(x),\infty)} \limsup\limits_{x\rightarrow \infty} \frac{\mathbb{P}\qty( X_1 > x/ x^{\prime} )}{\mathbb{P}\qty( X_1 > x )} \mathbb{P}_{X_2}(d x^\prime)
\le 0
$.
Similarly, $
\liminf\limits_{x\rightarrow \infty} \lim\limits_{\varphi_2(x) \rightarrow \infty} 
\int_{(\varphi_2(x),\infty)} \frac{\mathbb{P}\qty( X_1 > x/ x^{\prime} )}{\mathbb{P}\qty( X_1 > x )} \mathbb{P}_{X_2}(d x^\prime) \ge 0$.
Thus, $\mathbb{E}\qty[ \overline{F}_{X_1} \qty(\frac{x}{X_2}) 1_{ X_2 > \varphi_2(x) } ] = o\qty(x^{-\theta})$.

Similarly, 
$
\limsup\limits_{x\rightarrow \infty} \lim\limits_{\varphi_2(x) \rightarrow \infty} 
\int_{(0,\varphi_2(x)]} \frac{\mathbb{P}\qty( X_1 > x/ x^{\prime} )}{\mathbb{P}\qty( X_1 > x )} \mathbb{P}_{X_2}(d x^\prime)
=\lim\limits_{\varphi_2(x) \rightarrow \infty}
\int_{(0,\varphi_2(x)]} \limsup\limits_{x\rightarrow \infty} \frac{\mathbb{P}\qty( X_1 > x/ x^{\prime} )}{\mathbb{P}\qty( X_1 > x )} \mathbb{P}_{X_2}(d x^\prime) 
\le C^\ast \mathbb{E} \qty[ \qty(X_2)^{\theta} ]
$, $\exists C^\ast >0$.
On the other hand, $
\liminf\limits_{x\rightarrow \infty} \lim\limits_{\varphi_2(x) \rightarrow \infty} 
\int_{(0,\varphi_2(x)]} \frac{\mathbb{P}\qty( X_1 > x/ x^{\prime} )}{\mathbb{P}\qty( X_1 > x )} \mathbb{P}_{X_2}(d x^\prime)
\ge C^\star \mathbb{E} \qty[ \qty(X_2)^{\theta} ]
$, $\exists C^\star >0$.
Thus, $\mathbb{E}\qty[ \overline{F}_{X_1} \qty(\frac{x}{X_2}) 1_{ 0< X_2 \le \varphi_2(x) } ] = \Theta\qty(x^{-\theta})$.

The proof completes by the fact that, if $f(x)=o(h(x))$ and $g(x)=\Theta(h(x))$ then $f(x)+g(x)=\Theta(h(x))$.

\section{Proof of Theorem \ref{theorem-fat-product-sufficient}}
\label{proof-theorem-fat-product-sufficient}

We prove the case of $N=2$ and the proof of the case $N>2$ follows by the iteration of the same procedure.

Considering the independence between $X_1$ and $X_2$, for $x>0$, we have $\overline{F}_{X_1 X_2}(x) = \mathbb{E}\qty[ \overline{F}_{X_1} \qty(\frac{x}{X_2}) ]$, which is reformulated as
\begin{multline}
\mathbb{E}\qty[ \overline{F}_{X_1} \qty(\frac{x}{X_2}) ] = \mathbb{E}\qty[ \overline{F}_{X_1} \qty(\frac{x}{X_2}) 1_{0< X_2 \le \varphi_2(x) } ] \\
+ \mathbb{E}\qty[ \overline{F}_{X_1} \qty(\frac{x}{X_2}) 1_{ X_2 > \varphi_2(x) } ].
\end{multline}

Since $0\le \overline{F}_{X_1}\qty(x) \le 1$, we have $0 \le \mathbb{E}\qty[ \overline{F}_{X_1} \qty(\frac{x}{X_2}) 1_{ X_2 > \varphi_2(x) } ] \le \mathbb{P} \qty( X_2 > \varphi_2(x) ) = \overline{F}_{X_2}\qty( \varphi_2(x) )$. 
Thus, $\mathbb{E}\qty[ \overline{F}_{X_1} \qty(\frac{x}{X_2}) 1_{ X_2 > \varphi_2(x) } ] = O\qty(x^{-\theta})$.

Since $\overline{F}_{X_1}(x) = O\qty(x^{-\theta})$ and $\lim_{x\rightarrow\infty}\frac{x}{X_2} \ge \lim_{x\rightarrow\infty} \frac{x}{\varphi_2(x)} = \infty$, we have,
$\exists x_0 >0$, $\forall x > x_0$, 
$\exists C_1>0$, $\exists \theta>0$, 
$\mathbb{E}\qty[ \overline{F}_{X_1} \qty(\frac{x}{X_2}) 1_{0< X_2 \le
\varphi_2(x) } ] \le \mathbb{E}\qty[ C_1 \qty(\frac{x}{X_2})^{-\theta}  ]
=  C_1 \mathbb{E}\qty[ X_2^{\theta} ] x^{-\theta}$. Thus,
$\mathbb{E}\qty[ \overline{F}_{X_1} \qty(\frac{x}{X_2}) 1_{0< X_2 \le \varphi_2(x) } ] = O\qty(x^{-\theta})$.

Considering the $O(\cdot)$ polynomial \cite{hein2015discrete}, i.e., if $f_1(x)=O(g(x))$ and $f_2(x)=O(g(x))$ then $ f_1(x) + f_2(x) = O(g(x)) $,
we have $\overline{F}_{X_1 X_2}(x) = \mathbb{E}\qty[ \overline{F}_{X_1} \qty(\frac{x}{X_2}) ] = O\qty(x^{-\theta})$.

\section{Proof of Theorem \ref{theorem-super-heavy-tail-bound}}
\label{proof-theorem-super-heavy-tail-bound}

We prove the case of $N=2$ and the proof of the case $N>2$ follows by the iteration of the same procedure and by the fact that \cite{hein2015discrete}, if $f(x)= O(g(x))$ and $g(x)=O(h(x))$ then $f(x)=O(h(x))$. 

Considering the independence between $X_1$ and $X_2$, for $x>0$, we have $\overline{F}_{X_1 X_2}(x) = \mathbb{E}\qty[ \overline{F}_{X_1} \qty(\frac{x}{X_2}) ]$, which is reformulated as
\begin{multline}
\mathbb{E}\qty[ \overline{F}_{X_1} \qty(\frac{x}{X_2}) ] = \mathbb{E}\qty[ \overline{F}_{X_1} \qty(\frac{x}{X_2}) 1_{0< X_2 \le \varphi_2(x) } ] \\
+ \mathbb{E}\qty[ \overline{F}_{X_1} \qty(\frac{x}{X_2}) 1_{ X_2 > \varphi_2(x) } ].
\end{multline}

Since $0\le \overline{F}_{X_1}\qty(x) \le 1$, we have $0 \le \mathbb{E}\qty[ \overline{F}_{X_1} \qty(\frac{x}{X_2}) 1_{ X_2 > \varphi_2(x) } ] \le \mathbb{P} \qty( X_2 > \varphi_2(x) ) = \overline{F}_{X_2}\qty( \varphi_2(x) )$. 
Thus, $\mathbb{E}\qty[ \overline{F}_{X_1} \qty(\frac{x}{X_2}) 1_{ X_2 > \varphi_2(x) } ] = O\qty( \overline{F}_{X_1}(x) )$.

Since $\overline{F}_{X_1}(x) $ is nonincreasing, 
we have $\mathbb{E}\qty[ \overline{F}_{X_1} \qty(\frac{x}{X_2}) 1_{0< X_2 \le
\varphi_2(x) } ] \le \overline{F}_{X_1} \qty(\frac{x}{\varphi_2(x)}) $. 
Thus, $\mathbb{E}\qty[ \overline{F}_{X_1} \qty(\frac{x}{X_2}) 1_{0< X_2 \le
\varphi_2(x) } ] = O\qty( \overline{F}_{X_1}(x) ) $.

Considering the $O(\cdot)$ polynomial \cite{hein2015discrete}, i.e., if $f_1(x)=O(g(x))$ and $f_2(x)=O(g(x))$ then $ f_1(x) + f_2(x) = O(g(x)) $,
we have $\overline{F}_{X_1 X_2}(x) = \mathbb{E}\qty[ \overline{F}_{X_1} \qty(\frac{x}{X_2}) ] = O\qty(x^{-\theta})$.

The proof of the result for the sum of random variables follows analogically.

\section{Proof of Theorem \ref{theorem-slowly-regularly-varying-dominate}}
\label{proof-theorem-slowly-regularly-varying-dominate}

We prove the case of $N=2$ and the proof of the case $N>2$ follows by the iteration of the same procedure. 

Considering the independence between $X_1$ and $X_2$, for $x>0$, we have $\overline{F}_{X_1 X_2}(x) = \mathbb{E}\qty[ \overline{F}_{X_1} \qty(\frac{x}{X_2}) ]$, which is reformulated as
\begin{multline}
\mathbb{E}\qty[ \overline{F}_{X_1} \qty(\frac{x}{X_2}) ] = \mathbb{E}\qty[ \overline{F}_{X_1} \qty(\frac{x}{X_2}) 1_{0< X_2 \le \varphi_2(x) } ] \\
+ \mathbb{E}\qty[ \overline{F}_{X_1} \qty(\frac{x}{X_2}) 1_{ X_2 > \varphi_2(x) } ].
\end{multline}

Since $0\le \overline{F}_{X_1}\qty(x) \le 1$, we have $0 \le \mathbb{E}\qty[ \overline{F}_{X_1} \qty(\frac{x}{X_2}) 1_{ X_2 > \varphi_2(x) } ] \le \mathbb{P} \qty( X_2 > \varphi_2(x) ) = \overline{F}_{X_2}\qty( \varphi_2(x) )$. 
Thus, $\mathbb{E}\qty[ \overline{F}_{X_1} \qty(\frac{x}{X_2}) 1_{ X_2 > \varphi_2(x) } ] = o\qty( \overline{F}_{X_1}(x) )$.

In addition, 
we have $\mathbb{E}\qty[ \overline{F}_{X_1} \qty(\frac{x}{X_2}) 1_{0< X_2 \le
\varphi_2(x) } ] \sim \mathbb{E} \qty[ \overline{F}_{X_1} \qty( \frac{x}{X_2} ) ] $, which follows the Lebesgue dominated convergence theorem. 
Since $ \mathbb{E} \qty[ \overline{F}_{X_1} \qty( \frac{x}{X_2} ) ] = \int \overline{F}_{X_1} \qty( \frac{x}{x_2} ) d F_{X_2}(x_2) \sim \int \overline{F}_{X_1}(x) d F_{X_2}(x_2) = \overline{F}_{X_1}(x)$, which follows that $\overline{F}_{X_1}$ is slowly varying, i.e., $\overline{F}_{X_1} \qty(tx) \sim \overline{F}_{X_1} \qty(x)$, $\forall t>0$,
thus, we obtain $\mathbb{E}\qty[ \overline{F}_{X_1} \qty(\frac{x}{X_2}) 1_{0< X_2 \le
\varphi_2(x) } ] \sim \overline{F}_{X_1}(x) $.

Considering the asymptotics polynomial \cite{hein2015discrete}, i.e., if $f_1(x)=o(g(x))$ and $f_2(x) \sim g(x)$ then $ f_1(x) + f_2(x) \sim g(x) $,
we have $\overline{F}_{X_1 X_2}(x) = \mathbb{E}\qty[ \overline{F}_{X_1} \qty(\frac{x}{X_2}) ] \sim \overline{F}_{X_1}(x)$.

The proof of the result for the sum of random variables follows analogically.

\balance
\bibliographystyle{IEEEtran}
\bibliography{main}

\begin{thebibliography}{10}
\providecommand{\url}[1]{#1}
\csname url@samestyle\endcsname
\providecommand{\newblock}{\relax}
\providecommand{\bibinfo}[2]{#2}
\providecommand{\BIBentrySTDinterwordspacing}{\spaceskip=0pt\relax}
\providecommand{\BIBentryALTinterwordstretchfactor}{4}
\providecommand{\BIBentryALTinterwordspacing}{\spaceskip=\fontdimen2\font plus
\BIBentryALTinterwordstretchfactor\fontdimen3\font minus
  \fontdimen4\font\relax}
\providecommand{\BIBforeignlanguage}[2]{{%
\expandafter\ifx\csname l@#1\endcsname\relax
\typeout{** WARNING: IEEEtran.bst: No hyphenation pattern has been}%
\typeout{** loaded for the language `#1'. Using the pattern for}%
\typeout{** the default language instead.}%
\else
\language=\csname l@#1\endcsname
\fi
#2}}
\providecommand{\BIBdecl}{\relax}
\BIBdecl

\bibitem{sun2017statistical}
F.~Sun and Y.~Jiang, ``A statistical property of wireless channel capacity:
  Theory and application,'' \emph{IFIP Performance 2017; ACM SIGMETRICS
  Performance Evaluation Review}, vol.~45, no.~2, pp. 97--108, 2017.

\bibitem{sun2018hidden}
------, ``A hidden resource in wireless channel capacity: Dependence control in
  action,'' \emph{arXiv preprint arXiv:1805.00812}, 2018.

\bibitem{ruschendorf2013mathematical}
L.~R{\"u}schendorf, \emph{Mathematical risk analysis}.\hskip 1em plus 0.5em
  minus 0.4em\relax Springer, 2013.

\bibitem{klauder2005signal}
J.~R. Klauder, ``Signal transmission in passive media,'' \emph{IEE
  Proceedings-Radar, Sonar and Navigation}, vol. 152, no.~1, pp. 23--28, 2005.

\bibitem{parsons1992mobile}
J.~D. Parsons and P.~J.~D. Parsons, ``The mobile radio propagation channel,''
  1992.

\bibitem{jakes1994microwave}
W.~C. Jakes and D.~C. Cox, \emph{Microwave mobile communications}.\hskip 1em
  plus 0.5em minus 0.4em\relax Wiley-IEEE Press, 1994.

\bibitem{rappaport2001wireless}
T.~Rappaport, ``Wireless communications: Principles and practice,'' 2001.

\bibitem{tse2005fundamentals}
D.~Tse and P.~Viswanath, \emph{Fundamentals of wireless communication}.\hskip
  1em plus 0.5em minus 0.4em\relax Cambridge university press, 2005.

\bibitem{asmussen2010ruin}
S.~Asmussen and H.~Albrecher, \emph{Ruin probabilities}.\hskip 1em plus 0.5em
  minus 0.4em\relax World Scientific Publishing Co Pte Ltd, 2010.

\bibitem{falk2010laws}
M.~Falk, J.~H{\"u}sler, and R.-D. Reiss, \emph{Laws of small numbers: extremes
  and rare events}.\hskip 1em plus 0.5em minus 0.4em\relax Springer Science \&
  Business Media, 2010.

\bibitem{halliwell2013classifying}
L.~J. Halliwell, ``Classifying the tails of loss distributions,'' in
  \emph{Casualty Actuarial Society E-Forum, Spring 2013 Volume 2}.\hskip 1em
  plus 0.5em minus 0.4em\relax Citeseer, 2013, p.~1.

\bibitem{davis2009probabilistic}
R.~A. Davis and T.~Mikosch, ``Probabilistic properties of stochastic volatility
  models,'' in \emph{Handbook of financial time series}.\hskip 1em plus 0.5em
  minus 0.4em\relax Springer, 2009, pp. 255--267.

\bibitem{sagias2005gaussian}
N.~C. Sagias and G.~K. Karagiannidis, ``Gaussian class multivariate weibull
  distributions: theory and applications in fading channels,'' \emph{IEEE
  Transactions on Information Theory}, vol.~51, no.~10, pp. 3608--3619, 2005.

\bibitem{pavlovic2013statistics}
D.~{\v{C}}. Pavlovi{\'c}, N.~M. Sekulovi{\'c}, G.~V. Milovanovi{\'c}, A.~S.
  Panajotovi{\'c}, M.~{\v{C}}. Stefanovi{\'c}, and Z.~J. Popovi{\'c},
  ``Statistics for ratios of rayleigh, rician, nakagami-, and weibull
  distributed random variables,'' \emph{Mathematical Problems in Engineering},
  vol. 2013, 2013.

\bibitem{mammela2006normalization}
A.~Mammela, M.~Hoyhtya, and D.~P. Taylor, ``Normalization of a fading
  channel,'' in \emph{IEEE Globecom 2006}, Nov 2006, pp. 1--6.

\bibitem{mammela2010relationship}
A.~Mammela, A.~Kotelba, M.~Hoyhtya, and D.~P. Taylor, ``Relationship of average
  transmitted and received energies in adaptive transmission,'' \emph{IEEE
  Transactions on Vehicular Technology}, vol.~59, no.~3, pp. 1257--1268, 2010.

\bibitem{yang2011tail}
Y.~Yang, S.~Hu, and T.~Wu, ``The tail probability of the product of dependent
  random variables from max-domains of attraction,'' \emph{Statistics \&
  Probability Letters}, vol.~81, no.~12, pp. 1876--1882, 2011.

\bibitem{chen2018extensions}
\BIBentryALTinterwordspacing
Y.~Chen, D.~Chen, and W.~Gao, ``Extensions of breiman's theorem of product of
  dependent random variables with applications to ruin theory,''
  \emph{Communications in Mathematics and Statistics}, Jun 2018. [Online].
  Available: \url{https://doi.org/10.1007/s40304-018-0132-2}
\BIBentrySTDinterwordspacing

\bibitem{jiang2011product}
J.~Jiang and Q.~Tang, ``The product of two dependent random variables with
  regularly varying or rapidly varying tails,'' \emph{Statistics \& Probability
  Letters}, vol.~81, no.~8, pp. 957--961, 2011.

\bibitem{yang2013subexponentiality}
H.~Yang and S.~Sun, ``Subexponentiality of the product of dependent random
  variables,'' \emph{Statistics \& Probability Letters}, vol.~83, no.~9, pp.
  2039--2044, 2013.

\bibitem{nistad2008causality}
B.~Nistad and J.~Skaar, ``Causality and electromagnetic properties of active
  media,'' \emph{Physical Review E}, vol.~78, no.~3, p. 036603, 2008.

\bibitem{paulraj2003introduction}
A.~Paulraj, R.~Nabar, and D.~Gore, \emph{Introduction to space-time wireless
  communications}.\hskip 1em plus 0.5em minus 0.4em\relax Cambridge university
  press, 2003.

\bibitem{foschini1998limits}
G.~J. Foschini and M.~J. Gans, ``On limits of wireless communications in a
  fading environment when using multiple antennas,'' \emph{Wireless personal
  communications}, vol.~6, no.~3, pp. 311--335, 1998.

\bibitem{telatar1999capacity}
E.~Telatar, ``Capacity of multi-antenna gaussian channels,'' \emph{Transactions
  on Emerging Telecommunications Technologies}, vol.~10, no.~6, pp. 585--595,
  1999.

\bibitem{shaked2007stochastic}
M.~Shaked and G.~Shanthikumar, \emph{Stochastic orders}.\hskip 1em plus 0.5em
  minus 0.4em\relax Springer Science \& Business Media, 2007.

\bibitem{muller2002comparison}
A.~M{\"u}ller and D.~Stoyan, \emph{Comparison Methods for Stochastic Models and
  Risks}, ser. Wiley Series in Probability and Statistics.\hskip 1em plus 0.5em
  minus 0.4em\relax Wiley, 2002.

\bibitem{rafiq2011statistical}
G.~Rafiq, ``Statistical analysis of the capacity of mobile radio channels,''
  Ph.D. dissertation, Universitetet i Agder/University of Agder, 2011.

\bibitem{sarantsev2011tail}
A.~Sarantsev, ``Tail asymptotic of sum and product of random variables with
  applications in the theory of extremes of conditionally gaussian processes,''
  \emph{arXiv preprint arXiv:1107.3869}, 2011.

\bibitem{tang2008light}
Q.~Tang, ``From light tails to heavy tails through multiplier,''
  \emph{Extremes}, vol.~11, no.~4, p. 379, 2008.

\bibitem{ahlswede2002strong}
R.~Ahlswede and A.~Winter, ``Strong converse for identification via quantum
  channels,'' \emph{IEEE Transactions on Information Theory}, vol.~48, no.~3,
  pp. 569--579, 2002.

\bibitem{goldsmith2005wireless}
A.~Goldsmith, \emph{Wireless communications}.\hskip 1em plus 0.5em minus
  0.4em\relax Cambridge university press, 2005.

\bibitem{patzold2011mobile}
M.~P{\"a}tzold, \emph{Mobile radio channels}.\hskip 1em plus 0.5em minus
  0.4em\relax John Wiley \& Sons, 2011.

\bibitem{gulisashvili2016tail}
A.~Gulisashvili, P.~Tankov \emph{et~al.}, ``Tail behavior of sums and
  differences of log-normal random variables,'' \emph{Bernoulli}, vol.~22,
  no.~1, pp. 444--493, 2016.

\bibitem{asmussen2016exponential}
S.~Asmussen, J.~L. Jensen, and L.~Rojas-Nandayapa, ``Exponential family
  techniques for the lognormal left tail,'' \emph{Scandinavian Journal of
  Statistics}, vol.~43, no.~3, pp. 774--787, 2016.

\bibitem{zolotarev1957mellin}
V.~M. Zolotarev, ``Mellin-stieltjes transforms in probability theory,''
  \emph{Theory of Probability \& Its Applications}, vol.~2, no.~4, pp.
  433--460, 1957.

\bibitem{embrechts1997modelling}
P.~Embrechts, C.~Kl{\"u}ppelberg, and T.~Mikosch, \emph{Modelling extremal
  events: for insurance and finance}.\hskip 1em plus 0.5em minus 0.4em\relax
  Springer Science \& Business Media, 1997, vol.~33.

\bibitem{tropp2015introduction}
J.~A. Tropp \emph{et~al.}, ``An introduction to matrix concentration
  inequalities,'' \emph{Foundations and Trends{\textregistered} in Machine
  Learning}, vol.~8, no. 1-2, pp. 1--230, 2015.

\bibitem{higham2008functions}
N.~J. Higham, \emph{Functions of matrices: theory and computation}.\hskip 1em
  plus 0.5em minus 0.4em\relax Siam, 2008, vol. 104.

\bibitem{gohberg2012traces}
I.~Gohberg, S.~Goldberg, and N.~Krupnik, \emph{Traces and determinants of
  linear operators}.\hskip 1em plus 0.5em minus 0.4em\relax Birkh{\"a}user,
  2012, vol. 116.

\bibitem{foss2013introduction}
S.~Foss, D.~Korshunov, and S.~Zachary, \emph{An Introduction to Heavy-Tailed
  and Subexponential Distributions}.\hskip 1em plus 0.5em minus 0.4em\relax
  Springer Science \& Business Media, 2013.

\bibitem{stojanac2017products}
{\v{Z}}.~Stojanac, D.~Suess, and M.~Kliesch, ``On products of gaussian random
  variables,'' \emph{arXiv preprint arXiv:1711.10516}, 2017.

\bibitem{buraczewski2016stochastic}
D.~Buraczewski, E.~Damek, T.~Mikosch \emph{et~al.}, \emph{Stochastic models
  with power-law tails}.\hskip 1em plus 0.5em minus 0.4em\relax Springer, 2016.

\bibitem{jessen2006regularly}
H.~A. Jessen and T.~Mikosch, ``Regularly varying functions,''
  \emph{Publications de l'Institut Mathematique}, vol.~80, no.~94, pp.
  171--192, 2006.

\bibitem{albrecher2006tail}
H.~Albrecher, S.~Asmussen, and D.~Kortschak, ``Tail asymptotics for the sum of
  two heavy-tailed dependent risks,'' \emph{Extremes}, vol.~9, no.~2, pp.
  107--130, 2006.

\bibitem{tang2008insensitivity}
Q.~Tang, ``Insensitivity to negative dependence of asymptotic tail
  probabilities of sums and maxima of sums,'' \emph{Stochastic Analysis and
  Applications}, vol.~26, no.~3, pp. 435--450, 2008.

\bibitem{ko2008sums}
B.~Ko and Q.~Tang, ``Sums of dependent nonnegative random variables with
  subexponential tails,'' \emph{Journal of Applied Probability}, vol.~45,
  no.~1, pp. 85--94, 2008.

\bibitem{asmussen2008asymptotics}
S.~Asmussen and L.~Rojas-Nandayapa, ``Asymptotics of sums of lognormal random
  variables with gaussian copula,'' \emph{Statistics \& Probability Letters},
  vol.~78, no.~16, pp. 2709--2714, 2008.

\bibitem{geluk2009asymptotic}
J.~Geluk and Q.~Tang, ``Asymptotic tail probabilities of sums of dependent
  subexponential random variables,'' \emph{Journal of Theoretical Probability},
  vol.~22, no.~4, p. 871, 2009.

\bibitem{tankov2016tails}
P.~Tankov, ``Tails of weakly dependent random vectors,'' \emph{Journal of
  Multivariate Analysis}, vol. 145, pp. 73--86, 2016.

\bibitem{liu2010subexponential}
Y.~Liu and Q.~Tang, ``The subexponential product convolution of two
  weibull-type distributions,'' \emph{Journal of the Australian Mathematical
  Society}, vol.~89, no.~2, pp. 277--288, 2010.

\bibitem{hashorva2014tail}
E.~Hashorva and Z.~Weng, ``Tail asymptotic of weibull-type risks,''
  \emph{Statistics}, vol.~48, no.~5, pp. 1155--1165, 2014.

\bibitem{dhaene2002concept}
J.~Dhaene, M.~Denuit, M.~J. Goovaerts, R.~Kaas, and D.~Vyncke, ``The concept of
  comonotonicity in actuarial science and finance: theory,'' \emph{Insurance:
  Mathematics and Economics}, vol.~31, no.~1, pp. 3--33, 2002.

\bibitem{spivak1965calculus}
M.~Spivak, \emph{Calculus on Manifolds: A Modern Approach to Classical Theorems
  of Advanced Calculus}.\hskip 1em plus 0.5em minus 0.4em\relax Westview Press,
  1965.

\bibitem{rudin1976principles}
W.~Rudin \emph{et~al.}, \emph{Principles of mathematical analysis}.\hskip 1em
  plus 0.5em minus 0.4em\relax McGraw-hill New York, 1976, vol.~3, no. 4.2.

\bibitem{whitt1980uniform}
W.~Whitt, ``Uniform conditional stochastic order,'' \emph{Journal of Applied
  Probability}, vol.~17, no.~1, pp. 112--123, 1980.

\bibitem{ruschendorf1991conditional}
L.~R{\"u}schendorf, ``On conditional stochastic ordering of distributions,''
  \emph{Advances in Applied Probability}, vol.~23, no.~1, pp. 46--63, 1991.

\bibitem{denuit2006actuarial}
M.~Denuit, J.~Dhaene, M.~Goovaerts, and R.~Kaas, \emph{Actuarial theory for
  dependent risks: measures, orders and models}.\hskip 1em plus 0.5em minus
  0.4em\relax John Wiley \& Sons, 2006.

\bibitem{simchi2005logic}
D.~Simchi-Levi, X.~Chen, J.~Bramel \emph{et~al.}, ``The logic of logistics,''
  \emph{Algorithms, and Applications for Logistics and Supply Chain Management,
  Theory}, 2005.

\bibitem{muller2001stochastic}
A.~M{\"u}ller and M.~Scarsini, ``Stochastic comparison of random vectors with a
  common copula,'' \emph{Mathematics of operations research}, vol.~26, no.~4,
  pp. 723--740, 2001.

\bibitem{balakrishnan2012increasing}
N.~Balakrishnan, F.~Belzunce, M.~A. Sordo, and A.~Su{\'a}rez-Llorens,
  ``Increasing directionally convex orderings of random vectors having the same
  copula, and their use in comparing ordered data,'' \emph{Journal of
  Multivariate Analysis}, vol. 105, no.~1, pp. 45--54, 2012.

\bibitem{pellerey1999stochastic}
F.~Pellerey, ``Stochastic comparisons for multivariate shock models,''
  \emph{Journal of Multivariate Analysis}, vol.~71, no.~1, pp. 42--55, 1999.

\bibitem{miyoshi2004note}
N.~Miyoshi, ``A note on bounds and monotonicity of spatial stationary cox shot
  noises,'' \emph{Probability in the Engineering and Informational Sciences},
  vol.~18, no.~4, pp. 561--571, 2004.

\bibitem{hein2015discrete}
J.~L. Hein, \emph{Discrete Structures, Logic, and Computability}.\hskip 1em
  plus 0.5em minus 0.4em\relax Jones \& Bartlett Publishers, 2015.

\end{thebibliography}

\end{document}